  \def\cap{\capprod}
  \def\bigcup{\bigcupprod}
  \def\bigcupdisjoint{\mathop{\kern10pt\raisebox{4pt}{$\cdot$}\kern-12pt\bigcup}\limits}
\numberwithin{equation}{section}
\newtheoremstyle{ttheorem}%
       {1.8ex\@plus1ex}                
       {2.1ex\@plus1ex\@minus.5ex}      
       {\itshape}           
       {0pt}                   
       {\bfseries}          
       {.}                  
       {.5em}               
       {}                
\newtheoremstyle{ddefinition}%
       {1.8ex\@plus1ex}                
       {2.1ex\@plus1ex\@minus.5ex}      
       {}           
       {0pt}                   
       {\bfseries}           
       {.}                  
       {.5em}               
       {}                
\newtheoremstyle{rremark}%
       {1.8ex\@plus1ex}                
       {2.1ex\@plus1ex\@minus.5ex}      
       {\normalfont}        
       {0pt}                   
       {\bfseries}           
       {.}                  
       {.5em}               
       {}                   
\theoremstyle{ttheorem}
\newtheorem{theorem}{Theorem}[section]
\newtheorem{lemma}[theorem]{Lemma}
\newtheorem{proposition}[theorem]{Proposition}
\newtheorem{corollary}[theorem]{Corollary}
\theoremstyle{ddefinition}
\newtheorem{definition}[theorem]{Definition}
\theoremstyle{rremark}
\newtheorem{remark}[theorem]{Remark}
\newtheorem{myremarks}[theorem]{Remarks}
\newtheorem{myexamples}[theorem]{Examples}
\newenvironment{remarks}{\begin{myremarks}\begin{nummer}}%
    {\end{nummer}\end{myremarks}}
    {\end{nummer}\end{myexamples}}
\newcounter{numcount}
\newcommand{\labelnummer}{(\roman{numcount})}%
\providecommand{\showkeyslabelformat}[1]{\relax}        
\let\mysaveformat\showkeyslabelformat                   %
\def\myformat#1{\raisebox{-1.5ex}{\mysaveformat{#1}}}   %
\newenvironment{nummer}%
  {\let\curlabelspeicher\@currentlabel%
    \begin{list}{\textup{\labelnummer}}%
      {\usecounter{numcount}\leftmargin0pt%
        \topsep0.5ex\partopsep2ex\parsep0pt\itemsep0ex\@plus1\p@%
        \labelwidth2.5em\itemindent3.5em\labelsep1em%
      }%
    \let\saveitem\item%
    \def\item{\saveitem%
      \def\@currentlabel{\curlabelspeicher\kern.1em\labelnummer}}%
    \let\savelabel\label%
    \def\label##1{{\ifnum\thenumcount=1\let\showkeyslabelformat\myformat\fi\savelabel{##1}}%
										{\def\@currentlabel{\labelnummer}%
									 	\let\showkeyslabelformat\@gobble
									 	\savelabel{##1item}%
										}%
	   							}%
  }{\end{list}}%
\let\curlabelspeicher\@currentlabel%
    \let\saveitem\item%
    \def\item{\saveitem%
      \def\@currentlabel{\curlabelspeicher\kern.1em\labelnummer}}%
    \let\savelabel\label%
    \def\label##1{{\ifnum\thenumcount=1\let\showkeyslabelformat\myformat\fi\savelabel{##1}}%
										{\def\@currentlabel{\labelnummer}%
									 	\let\showkeyslabelformat\@gobble
									 	\savelabel{##1item}%
										}%
    							}%
\def\section{\@startsection{section}{1}%
  \z@{1.3\linespacing\@plus\linespacing}{.5\linespacing}%
  {\normalfont\bfseries\centering}}
\def\subsection{\@startsection{subsection}{2}%
  \z@{.8\linespacing\@plus.5\linespacing}{-1em}%
  {\normalfont\bfseries}}
\def\nlsubsection{\@startsection{subsection}{2}%
  \z@{.8\linespacing\@plus.5\linespacing}{.1ex}%
  {\normalfont\bfseries}}
\let\@afterindenttrue\@afterindentfalse%
\renewenvironment{proof}[1][\proofname]{\par \normalfont
  \topsep6\p@\@plus6\p@ \trivlist 
  \item[\hskip\labelsep\scshape
    #1\@addpunct{.}]\ignorespaces
}{%
  \qed\endtrivlist
}
\def\ps@firstpage{\ps@plain
  \def\@oddfoot{\normalfont\scriptsize \hfil\thepage\hfil
     \global\topskip\normaltopskip}%
  \let\@evenfoot\@oddfoot
  \def\@oddhead{
    \begin{minipage}{\textwidth}
      \normalfont\scriptsize
      \emph{\insertfirsthead}
    \end{minipage}}
  \let\@evenhead\@oddhead 
}
\def\insertfirsthead{}
\def\firsthead#1{\def\insertfirsthead{#1}}
\def\@cite#1#2{{%
 \m@th\upshape\mdseries[{#1}{\if@tempswa, #2\fi}]}}
\newcommand{\I}{\mathcal{I}}
\newcommand{\F}{\mathcal{F}}
\newcommand{\Nn}{\mathcal{N}}
\newcommand{\M}{\mathcal{M}}
\let\longHungarianUmlaut\H
\renewcommand{\H}{\mathcal{H}}
\newcommand{\SO}{\mathcal{S}}
\newcommand{\C}{\mathbb{C}}
\newcommand{\N}{\mathbb{N}}
\newcommand{\R}{\mathbb{R}}
\newcommand{\Z}{\mathbb{Z}}
\renewcommand{\le}{\leqslant}
\renewcommand{\ge}{\geqslant}
\DeclareMathOperator{\tr}{tr}
\providecommand{\bigcupdisjoint}{\mathop{\kern7pt\raisebox{6pt}{$\cdot$}\kern-9.5pt\bigcup}\limits}
\providecommand{\abs}[1]{\lvert#1\rvert}
\providecommand{\norm}[1]{\lVert#1\rVert}
\providecommand{\bigabs}[1]{\bigl\lvert#1\bigr\rvert}
\providecommand{\Bigabs}[1]{\Bigl\lvert#1\Bigr\rvert}
\providecommand{\biggabs}[1]{\biggl\lvert#1\biggr\rvert}
\providecommand{\bignorm}[1]{\bigl\lVert#1\bigr\rVert}
\providecommand{\parens}[1]{(#1)}
\providecommand{\bigparens}[1]{\bigl(#1\bigr)}
\providecommand{\Bigparens}[1]{\Bigl(#1\Bigr)}
\providecommand{\biggparens}[1]{\biggl(#1\biggr)}
\providecommand{\bigbraces}[1]{\bigl\{#1\bigr\}}
\providecommand{\Bigbraces}[1]{\Bigl\{#1\Bigr\}}
\providecommand{\biggbraces}[1]{\biggl\{#1\biggr\}}
\providecommand{\bigangles}[1]{\bigl\<#1\bigr\>}
\let\set\braces
\let\bigset\bigbraces
\newcommand{\ac}{\mathrm{ac}}
\newcommand{\loc}{\mathrm{loc}}
\newcommand{\HS}{\mathrm{HS}}
\providecommand{\from}{\colon}
\providecommand{\st}{:\,}
\newcommand{\upto}{\uparrow}
\newcommand{\downto}{\downarrow}
\newcommand{\Oh}{\mathrm{O}}
\newcommand{\oh}{\mathrm{o}}
\newcommand{\where}[1]{
  \mathchoice{\qquad(#1)}%
             {\,\ (#1)}
             {\,(#1)}
             {\,(#1)}}
\newcommand{\dotid}{\,\boldsymbol\cdot\,}           
\DeclareMathOperator{\Borel}{Borel}      
\newcommand{\1}{1}
\newcommand{\upd}{\mathrm{d}}
\renewcommand{\d}{\upd}   
\newcommand{\dx}{\d x}
\newcommand{\dt}{\d t}
\newcommand{\dy}{\d y}
\newcommand{\dr}{\d r}
\newcommand{\du}{\d u}
\newcommand{\dS}{\d S}
\newcommand{\dbyd}[1]{\frac{\upd}{\upd{#1}}}
\newcommand{\spec}{\sigma}               
\DeclareMathOperator{\spt}{supp}          
\let\textdef\textit
\DeclareMathOperator{\ran}{ran}          
\newcommand{\hairspace}{\kern .04167em}
\let\sectionsymbol\S
\renewcommand{\S}{\mathcal{S}}
\newcommand{\Beta}{\mathrm{B}}
\DeclareMathOperator{\TextRe}{Re}
\DeclareMathOperator{\TextIm}{Im}
\renewcommand{\Re}{\TextRe}
\renewcommand{\Im}{\TextIm}
\newcommand{\Sphere}{\mathbb{S}}
\def\clap#1{\hbox to 0pt{\hss#1\hss}}
\def\bra{\makeatletter\@ifstar\@bra\@@bra}
\def\@bra#1{\hairspace #1\>}
\def\@@bra#1{\lvert\@bra{#1}}
\def\ket{\makeatletter\@ifstar\@ket\@@ket}
\def\@ket#1{\<#1\hairspace}
\def\@@ket#1{\@ket{#1}\rvert}
\def\author@andify{%
  \nxandlist {\unskip ,\penalty-1 \space\ignorespaces}%
    {\unskip {} \@@and~}%
    {\unskip \penalty-2 \space \@@and~}%
}
\renewcommand{\andify}{%
  \nxandlist{\unskip, }{\unskip{} \@@and~}{\unskip{} \@@and~}}
\begin{document}

\firsthead{To appear in J.\ Spectr.\ Theory} 

\title[The exponent in the orthogonality catastrophe]{The exponent in the orthogonality catastrophe\\ for Fermi gases}

\author[M.\ Gebert]{Martin Gebert}
\author[H.\ K\"uttler]{Heinrich K\"uttler}
\author[P.\ M\"uller]{Peter M\"uller}
\author[P.\ Otte]{Peter Otte}

\address[M.\ Gebert]{Mathematisches Institut,
  Ludwig-Maximilians-Universit\"at M\"unchen,
  Theresienstra\ss{e} 39,
  80333 M\"unchen, Germany}
\curraddr{Department of Mathematics, King's College London, Strand, London, WC2R 2LS, UK}
\email{martin.gebert@kcl.ac.uk}

\address[H.\ K\"uttler, P.\ M\"uller]{Mathematisches Institut,
  Ludwig-Maximilians-Uni\-ver\-si\-t\"at M\"unchen,
  Theresienstra\ss{e} 39,
  80333 M\"unchen, Germany}

\email{kuettler@math.lmu.de}
\email{mueller@lmu.de}

\address[P.~Otte]{Fernuniversit\"at in Hagen,
  Fachbereich Mathematik,
  LG Angewandte Stochastik,
  58084 Hagen, Germany}

\email{peter.otte@rub.de}

\thanks{Work supported by SFB/TR 12 of the German Research Council (DFG)}

\begin{abstract}
  We quantify the asymptotic vanishing of the ground-state overlap of two
  non-interacting Fermi gases in $d$-dimensional Euclidean space
  in the thermodynamic limit.
  Given two one-particle Schr\"odinger operators in finite-volume which differ by a compactly
  supported bounded potential, we prove a power-law upper bound on the ground-state
  overlap of the corresponding non-interacting $N$-particle systems.
  We interpret the decay exponent~$\gamma$ in terms of scattering
  theory and find $\gamma = \pi^{-2}\|\arcsin|T_E/2|\|_{\HS}^2$,
  where $T_E$ is the transition matrix at the Fermi energy~$E$. This
  exponent reduces to the one predicted by Anderson
  [Phys.\ Rev.\ \textbf{164}, 352--359 (1967)] for the exact asymptotics
  in the special case of a repulsive point-like perturbation.
\end{abstract}

\dedicatory{Dedicated to Hajo Leschke on the occasion of his 70$^{\text{th}}$ birthday}

\maketitle 

%
\section{Introduction}

We consider two quantum systems, each consisting of $N$
non-interacting Fermions in a box of side length $L$ in
$d$-dimensional Euclidean space $\R^{d}$, with $d\in\N$.
The single-particle Hamiltonians of the two systems 
differ by a local perturbation potential $V$. As a signature of
inequivalent representations of the canonical commutation relations,
the overlap $\boldsymbol\langle
\Phi^N_{L},\Psi^N_{L}\boldsymbol\rangle$ of the $N$-Fermion ground
states $\Phi^N_{L}$ and $\Psi^N_{L}$ must vanish in the thermodynamic
limit $L\to\infty$, $N\to\infty$, $N/L^{d} \to \text{const.} >0$
\cite[Chap.\ IV]{Fri53}, \cite[Chap.\ II.1.1]{Haa96}. A quantitative
version of this behaviour in terms of a power law
\begin{equation} \label{catastrophe}
  \bigabs{\boldsymbol\langle \Phi^N_{L},\Psi^N_{L}\boldsymbol\rangle}^2
  \sim L^{-\gamma}
\end{equation}
was predicted by P.\ W.\ Anderson in 1967. In
\cite{PhysRevLett.18.1049} he presented a brief computation for the
case of a point-like perturbation $V$ in $d=3$ dimensions and arrived at the
upper bound
\begin{equation} \label{exp-67a}
  \bigabs{\boldsymbol\langle \Phi^N_{L},\Psi^N_{L}\boldsymbol\rangle}^2
  \le L^{-\gamma_1}
\end{equation}
with
\begin{equation} \label{gamma-and-a}
  \gamma_1 = \pi^{-2} (\sin \delta)^{2}.
\end{equation}
Here, $\delta$ is the (single-particle) scattering phase shift caused
by the point interaction at the Fermi energy. Nowadays, this behaviour
is often referred to as Anderson's orthogonality catastrophe in the
physics literature.
A mathematical proof for a generalisation of \eqref{exp-67a} and
\eqref{gamma-and-a} was given recently in \cite{GKM}.
Allowing for a bounded, compactly
supported,  non-negative perturbation $V$ in $\R^{d}$, it is shown there that
\eqref{exp-67a} holds with
\begin{equation} \label{gamma-s-mat}
  \gamma_1 =  \frac{1}{\pi^{2}} \,\norm{T_{E}/2}_\HS^2,
\end{equation}
where $T_{E}$ denotes the transition matrix of scattering theory and
$\norm{\dotid}_{\HS}$ the Hilbert-Schmidt norm for operators on the
Hilbert space of the energy shell corresponding to the Fermi energy
$E$.
In the special case considered in \cite{PhysRevLett.18.1049},
\eqref{gamma-s-mat} reduces to \eqref{gamma-and-a}. The principal
strategy of the argument in \cite{GKM} is to rewrite the overlap
determinant as $\abs{\boldsymbol\langle
\Phi^N_{L},\Psi^N_{L}\boldsymbol\rangle}^{2} = \det A = \exp (\tr\ln
A)$ and to expand the logarithm in a series of non-negative terms
\begin{equation} \label{log-series}
  \bigabs{\boldsymbol\langle
  \Phi^N_{L},\Psi^N_{L}\boldsymbol\rangle}^{2}
  =
  \exp\bigg\{ - \sum_{n\in\N} \frac{1}{n} \, \tr \bigparens{(I-A)^{n}} \bigg\},
\end{equation}
see Lemma~\ref{lemma:overlap_series} below.
A similar idea was used by M.\ Kac \cite{Kac1954} in his proof of the
Szeg\longHungarianUmlaut{o}
limit theorem for Toeplitz determinants which is, in a way, an analogue
to \eqref{catastrophe}.

By dropping all but the first term $\tr (I -A)$ of the series,
which is called \textdef{Anderson integral} in the physics literature, one
arrives at an upper bound. The main work of \cite{GKM} consists in
deriving a lower bound of the form $\tr (I -A) \ge \gamma_1
\ln L$ for the Anderson integral with $\gamma_1$ given by
\eqref{gamma-s-mat}.
There are only few other mathematically rigorous works on Anderson's
orthogonality catastrophe \cite{KuOtSp13, Geb15a, KnOtSp15, Geb15b}. It is shown in \cite{KuOtSp13} that
\eqref{gamma-s-mat} in fact provides the exact coefficient in the
asymptotics $\tr (I -A) \sim \gamma_1 \ln L$ of the Anderson
integral in the thermodynamic limit for one-dimensional systems.
We refer to \cite{KuOtSp13, GKM} and references therein for a brief
description of the relevance of the orthogonality catastrophe in
physics and for a discussion of the theoretical approaches in the
physics literature.

In a second paper \cite{PhysRev.164.352} in 1967, P.\ W.\ Anderson
notes as an aside that the true asymptotics \eqref{catastrophe} of the
overlap involves an exponent $\gamma$ for which ``\ldots{} the main
difference from the previous result [i.e.\ \eqref{gamma-and-a}] is to
replace $(\sin \delta)^{2}$ by $\delta^{2}$.''
After some controversies about the correctness of interchanging limits
\cite{RiSi71,Ham71}, Anderson's result \eqref{catastrophe} was confirmed
in the case of a point interaction $V$ with the decay exponent
\begin{equation} \label{gamma-and-b}
  \gamma = \pi^{-2} \delta^{2}
\end{equation}
by theoretical-physics methods \cite{Ham71}. A mathematical proof was given recently in \cite{Geb15b}. For reasons of
comparison, we remark that the particle number $N$ in \cite{Ham71}
refers to the number of $s$-orbital states below the Fermi energy and
thus $N \sim L$. Related results in the context of the Kondo problem
in the physics literature can be found in \cite{PhysRev.178.1097,
  PhysRevB.1.1522}.

The purpose of the present paper is a mathematical contribution
towards the exact asymptotics \eqref{catastrophe}. We will prove in
Theorem~\ref{theorem:catastrophe} that, in the presence of a rather
general background potential $V_{0}$, a bounded, compactly
supported, non-negative perturbation potential $V$ in $\R^{d}$ causes the power-law
decay
\begin{equation} \label{exp-67b}
  \bigabs{\boldsymbol\langle \Phi^N_{L},\Psi^N_{L}\boldsymbol\rangle}^2
  \le L^{-\gamma + \oh(L^{0})}
\end{equation}
of the overlap for almost every Fermi energy $E \in \R$ along
subsequences $L \to\infty$. The decay exponent is given by
\begin{equation} \label{gamma-b-t}
  \gamma = \frac{1}{\pi^2}\bignorm{\arcsin\abs{T_E/2}}_{\HS}^2.
\end{equation}
We refer to Theorem~\ref{theorem:catastrophe} for the precise
statement. In proving~\eqref{gamma-b-t}, we obtain a result on the
trace of a product of spectral projections of two Schr\"odinger
operators which may be interesting by itself, see
Theorem~\ref{theorem:nth-term}.

Clearly, when comparing \eqref{gamma-b-t} to \eqref{gamma-s-mat}, we
infer $\gamma_1 \le \gamma$, and the two exponents are related
in the spirit of Anderson's rule quoted above. In view of \cite{Geb15b} and of the
physicists' results, we conjecture
that the exponent $\gamma$ governs the true asymptotics
\eqref{catastrophe} of the overlap whenever the modulus of the (appropriately defined) scattering phases does not exceed $\pi/2$.

The proof of Theorem~\ref{theorem:catastrophe} relies on the
representation \eqref{log-series} of the overlap.
We determine the dominant behaviour of each term in the $n$-sum in
\eqref{log-series}, because each term contributes to the asymptotics.
In order to treat the terms with $n > 1$ we have to deal with
additional issues. One is the non-positivity of certain trace
expressions, another one is to compute the multi-dimensional integral
\begin{equation} \label{hilbert-int}
  \int_{(0,\infty)^{2n}}\du_{1}\dotsi\du_{2n}\,
  \frac{e^{-(u_{1} + \dotsb + u_{2n})}}{(u_1 + u_{2})
    \dotsm (u_{2n-1}+ u_{2n})},
\end{equation}
which contributes to the asymptotics of the $n$th term in
\eqref{log-series}.
Subsequently, the values of these integrals show up in the Taylor expansion of
the function $x\mapsto (\arcsin x)^2$. We compute the integral
\eqref{hilbert-int} in Sect.~\ref{ssec:u-integral} by identifying it
with the first diagonal matrix element of the $(2n-1)$th
power of the Hilbert matrix.

Since $V$ causes scattering, the exponent $\gamma$ is typically
expected to be strictly positive. In the appendix, we prove this in the
case without a background potential.

After we completed this paper, Pushnitski and Frank \cite{FrPu15} established results on the asymptotics for traces of regularised projections of infinite-volume operators. Their work is partly a generalisation of our analysis in Sections~\ref{ssec:undoing-the-smoothing} to~\ref{ssec:u-integral}. In particular, their consequent use of Hankel operators is conceptually valuable and leads to a simplification of proofs. From this point of view it is also less surprising that (a unitary equivalent operator to) the Hilbert matrix appears in our Section~\ref{ssec:u-integral} when we compute the multi-dimensional integral \eqref{hilbert-int}.

%
\section{Setup and main result}
\label{sec:setup}

Let $d\in\N$, $\Lambda_1\subseteq\R^d$ be open and bounded with
$0\in\Lambda_1$ and for $L > 1$, define $\Lambda_L := L\cdot\Lambda_1$.

Let the negative Laplacian $-\Delta_L$ be supplied with Dirichlet
boundary conditions on $\Lambda_L$. We define two multiplication
operators $V_0$ and $V$ acting on $L^2(\Lambda_L)$, corresponding to
real-valued functions on $\R^d$ with the properties
\begin{equation*} \label{eq:assumption:V_V0}
  \tag{\textsf{V}}
  \begin{gathered}
    \max\set{V_0, 0}\in K_\loc(\R^d),
    \quad
    \max\set{-V_0, 0}\in K(\R^d), \\
    V\in L^\infty(\R^d),
    \quad
    V\ge 0,
    \quad
    \spt V\subseteq\Lambda_1\text{ compact.}
  \end{gathered}
\end{equation*}
Here, we have written  $ K(\R^d)$ and $K_\loc(\R^d)$
for the Kato class and the local Kato class, respectively, see
\cite{MR670130}.
The \textdef{finite-volume one-particle Schr\"odinger operators}
$H_L := -\Delta_L + V_0$ and $H_L' := H_L + V$ are self-adjoint and
densely defined in the Hilbert space $L^2(\Lambda_L)$. The
\textdef{infinite-volume} operators $H := -\Delta + V_0$ and $H' := H
+ V$ are self-adjoint and densely defined in the Hilbert space
  $L^2(\R^d)$. Birman's theorem, see
  \cite[Thm.~2]{BiEn67e} or \cite[Thm.~XI.10]{MR529429}, is applicable by virtue of 
  \cite[Thm.~B.9.1]{MR670130} and guarantees the
existence and completeness of the wave operators for the pair $H,
H'$. In particular, their absolutely continuous spectra are the same,
i.e.
\begin{equation}
  \spec_\ac(H) = \spec_\ac(H').
\end{equation}

The assumptions \eqref{eq:assumption:V_V0} on $V_0$ and $V$, together
with \cite[Thm.~6.1]{MR1756112}, imply that
the semigroup operators
$e^{-tH_L}$ and $e^{-tH_L'}$ generated by the finite-volume one-particle
operators $H_L$ and $H_L'$ are trace class for every $t > 0$, and, a
fortiori, compact. In particular, $H_L$ and
$H_L'$ are bounded from below and have purely discrete spectra.
We write $\lambda_1^L\le\lambda_2^L\le\dotsb$ and
$\mu_1^L\le\mu_2^L\le\dotsb$ for their non-decreasing sequences of
eigenvalues, counting multiplicities, and $(\varphi_j^L)_{j\in\N}$ and
$(\psi_k^L)_{k\in\N}$ for the corresponding sequences of normalised
eigenfunctions with an arbitrary choice of basis vectors in any
eigenspace of dimension greater than one.

Given $N\in\N$, the \textdef{induced} (\textdef{non-interacting})
\textdef{finite-volume $N$-particle Schr\"odinger operators} $\hat{H}_L$ and
$\hat{H}_L'$ act on the totally antisymmetric subspace
$\bigwedge_{j=1}^N L^2(\Lambda_L)$ of the $N$-fold tensor product space
and are given by
\begin{equation}
  \hat{H}_L^{(\prime)}
  :=
  \sum_{j=1}^N
  I^{\,\otimes^{j-1}} \otimes H_L^{(\prime)} \otimes I^{\,\otimes^{N-j}}.
\end{equation}
The corresponding ground states
are given by the totally antisymmetrised products
\begin{equation}
  \Phi^N_L
  :=
  \frac{1}{\sqrt{N!}}\,\varphi_1^L\wedge\dotsm\wedge\varphi_N^L,
  \qquad
  \Psi^N_L :=
  \frac{1}{\sqrt{N!}}\,\psi_1^L\wedge\dotsm\wedge\psi_N^L.
\end{equation}

In order to avoid ambiguities from possibly degenerate eigenspaces and
to realise a given \textdef{Fermi energy} $E\in\R$ in the thermodynamic limit,
we choose the \textdef{number of particles} as
\begin{equation} \label{eq:N}
  N_L(E) := \#\set{j\in\N\st \lambda_j^L \le E}\in\N_0,
\end{equation}
which is the eigenvalue counting function of $H_L$ at $E$.

The quantity of interest is the \textdef{ground-state overlap}
\begin{equation} \label{eq:overlap}
  \S_L(E)
  :=
  \Bigl\<\Phi_L^{N_L(E)}, \Psi_L^{N_L(E)}\Bigr\>_{N_L{(E)}}
  =
  \det\Bigparens{\<\varphi_j^L, \psi_k^L\>}_{j,k=1,\dotsc,N_L(E)},
\end{equation}
in particular its asymptotic behaviour as $L\to\infty$. In
\eqref{eq:overlap}, $\<\dotid,\dotid\>_N$ stands for the
scalar product on the $N$-fermion space $\bigwedge_{j=1}^N
L^2(\Lambda_L)$, and $\<\dotid,\dotid\>$ for the one on the
single-particle space $L^{2}(\Lambda_L)$.
If $N_L(E)=0$, we set $\S_L(E) := 1$.

\begin{remark} \label{remark:ids}
  The particular choice \eqref{eq:N} of $N_L(E)$ as an eigenvalue
  counting function turns out to be technically
  useful when conducting the thermodynamic limit, see
  Lemma~\ref{lemma:trace_with_spec_proj} below. The \textdef{particle density} $\rho(E)$ of the
  two non-interacting fermion systems in the thermodynamic limit
  coincides with the
  integrated density of states
  \begin{equation} \label{eq:density}
    \rho(E) = \lim_{L\to\infty} \frac{N_L(E)}{L^d\, \abs{\Lambda_1}}
  \end{equation}
  of the single-particle Schr\"odinger operator $H$ (which is the same
  as the integrated density of states of $H'$),
  provided the limit exists.
  Here, $\abs{\Lambda_1}$ denotes the Lebesgue measure of
  $\Lambda_1\subseteq\R^d$. Situations where the limit
  \eqref{eq:density} is known to exist include
  periodic $V_0$, or $V_0$ vanishing at infinity. If the limit
  \eqref{eq:density} does not exist, then this is due to the occurrence of more than one
  accumulation point, because the assumptions on $V_0$ in
  \eqref{eq:assumption:V_V0}, together
  with~\cite[Thm.~C.7.3]{MR670130}, imply
  $\limsup_{L\to\infty}N_L(E)/L^d < \infty$ for every $E\in\R$.
  We will study the asymptotic
  behaviour of the overlap $\S_L(E)$ as $L\to\infty$ regardless of the
  existence of the limit \eqref{eq:density}.
\end{remark}

The main result of this paper is an upper bound on the ground-state
overlap $\S_L(E)$ for large $L$.
Throughout we use the convention $\ln 0 := -\infty$. The terms null set
and almost-every (a.e.) refer to Lebesgue measure if not
specified otherwise.

\begin{theorem}[Orthogonality Catastrophe] \label{theorem:catastrophe}
  Assume conditions \eqref{eq:assumption:V_V0}.
  Let $(L_m)_{m\in\N}$ be a sequence in
  $(0,\infty)$ with $L_m\to\infty$. Then there exist a subsequence
  $(L_{m_k})_{k\in\N}$, a null set $\Nn\subseteq\R$ of exceptional Fermi energies and a function
  $\gamma\from\R\setminus\Nn\to[0,\infty)$ such that for every
  $E\in\R\setminus\Nn$ the ground-state overlap~\eqref{eq:overlap} obeys
  \begin{equation} \label{eq:catastrophe}
    \abs{\S_{L_{m_k}}(E)}
    \le
    \exp\Bigparens{-\tfrac{1}{2}\gamma(E)\ln L_{m_k} + \oh(\ln L_{m_k})}
    = L_{m_k}^{-\gamma(E)/2 + \oh(1)}
  \end{equation}
  as $k\to\infty$.
  Equivalently,
  \begin{equation} \label{eq:catastrophe-without-a}
    \limsup_{k\to\infty}\frac{\ln\abs{\S_{L_{m_k}}(E)}}{\ln L_{m_k}}
    \le
    -\frac{\gamma(E)}{2}.
  \end{equation}
  The decay exponent $\gamma$ is given by
  \begin{equation} \label{eq:gamma}
    \gamma(E)
    :=
    \frac{1}{\pi^2}\bignorm{\arcsin\abs{T_E/2}}_{\HS}^2.
  \end{equation}
  Here, $T_E := S_E - I_E$ is the transition matrix, $S_E$ is the
  scattering matrix for the pair $(H,H')$ and energy $E$, and
  $\norm{\dotid}_\HS$ denotes the Hilbert--Schmidt norm on the fibre
  Hilbert space $\H_E$, on which $T_E$ and $S_E$ are defined.
\end{theorem}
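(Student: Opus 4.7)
The plan is to start from the representation
\[
  \bigabs{\S_L(E)}^2 = \exp\Bigl(-\sum_{n=1}^\infty \tfrac{1}{n}\tr\bigparens{(I-\cA_L)^n}\Bigr),
\]
provided by Lemma~\ref{lemma:overlap_series}, in which $\cA_L$ encodes the overlap via the spectral projections $P_L := \chi_{(-\infty,E]}(H_L)$ and $P_L' := \chi_{(-\infty,E]}(H_L')$ (so $\cA_L$ is unitarily equivalent to $P_L P_L' P_L|_{\ran P_L}$) and each trace $T_n(L) := \tr\bigparens{(I-\cA_L)^n}$ is non-negative. The weaker bound of \cite{GKM} uses only $T_1$; here every term is needed, because the sum of all leading logarithmic coefficients is what upgrades $\gamma_1$ to the larger exponent $\gamma$, morally replacing $(\sin\delta)^2$ by $\delta^2$ in the spectral calculus of $T_E$.

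For fixed $n$ the target is the asymptotic lower bound $T_n(L)\ge c_n(E)\ln L - \oh(\ln L)$ along a suitable subsequence, with the explicit coefficient
\[
  c_n(E) = \frac{1}{\pi^{2}}\cdot\frac{4^{n}}{2n\binom{2n}{n}}\,\tr\bigparens{\abs{T_E/2}^{2n}}.
\]
I would achieve this by expanding $(I-\cA_L)^n$ as a polynomial in $P_L$ and $I-P_L'$, replacing the sharp spectral cut-offs by smooth approximants on a scale that tends to zero slowly in~$L$ so that trace-formula machinery for pairs of Schr\"odinger operators becomes applicable, passing to the thermodynamic limit via Birman's theorem and stationary scattering theory so that the on-shell operator $T_E$ emerges on the Fermi energy shell~$\H_E$, and finally removing the smoothing. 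The $\ln L$ divergence is produced by an infra-red singularity at zero transferred energy that, in conjugate variables, is encoded in the multi-dimensional integral~\eqref{hilbert-int}; its closed-form evaluation in Section~\ref{ssec:u-integral} via the identification with the first diagonal entry of the $(2n-1)$th power of the Hilbert matrix supplies the numerical prefactor. Summing over~$n$ and invoking the Taylor expansion $(\arcsin x)^{2} = \sum_{n\ge 1}\frac{4^n}{2n^2\binom{2n}{n}}x^{2n}$ in the functional calculus of $\abs{T_E/2}$ gives $\sum_{n\ge 1}c_n(E)/n = \pi^{-2}\bignorm{\arcsin\abs{T_E/2}}_{\HS}^2 = \gamma(E)$. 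Non-negativity of each $T_n(L)$ then lets Fatou's lemma yield $\liminf_k\bigparens{-\ln\abs{\S_{L_{m_k}}(E)}}/\ln L_{m_k}\ge \gamma(E)/2$, which is~\eqref{eq:catastrophe-without-a}; a diagonal extraction produces the subsequence to cope with the possible non-existence of the integrated density of states limit, see Remark~\ref{remark:ids}, and the null set~$\Nn$ excludes the countable set of thresholds and embedded eigenvalues where stationary scattering breaks down.

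The main obstacle, I expect, is the per-$n$ analysis: after the rearrangements that bring trace formulas into play, some intermediate expressions lose sign-definiteness — this is the ``non-positivity of certain trace expressions'' flagged in the introduction — so simple positivity arguments fail and one has to keep track of delicate cancellations while still extracting a clean $\ln L$ coefficient, uniformly controllable in~$n$ so that the sum and the thermodynamic limit can be interchanged. The closed-form evaluation of~\eqref{hilbert-int} is the second non-trivial point; the appearance of the Hilbert matrix there, which is what reconciles $T_n(L)$ with the Taylor coefficients of $(\arcsin x)^{2}$, is made more transparent in retrospect by the Hankel-operator approach of~\cite{FrPu15}.
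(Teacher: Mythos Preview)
Your overall architecture matches the paper's: series expansion from Lemma~\ref{lemma:overlap_series}, a per-$n$ logarithmic lower bound via smoothing, passage to infinite volume, the Hilbert-matrix integral of Section~\ref{ssec:u-integral}, and finally the $(\arcsin x)^2$ Taylor series in the functional calculus of $\abs{T_E/2}$. Your coefficient $c_n(E)$ agrees with the paper's $nJ_{2n}\tr(\abs{T_E/(2\pi)}^{2n})$.

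There is, however, a genuine gap. You write $\cA_L$ directly in terms of $P_L' := \chi_{(-\infty,E]}(H_L')$, but the overlap determinant in Lemma~\ref{lemma:overlap_series} involves $\Pi_L^{N_L(E)}$, the projection onto the \emph{first $N_L(E)$ eigenvectors of $H_L'$}. While $P_L^{N_L(E)} = \1_{(-\infty,E]}(H_L)$ holds by the very definition of $N_L(E)$, the analogous identity for the perturbed operator fails: the number of eigenvalues of $H_L'$ below $E$ differs from $N_L(E)$ by the finite-volume spectral shift function $\xi_L(E)$, so $I-\Pi_L^{N_L(E)}$ and $\1_{(E,\infty)}(H_L')$ differ by a rank-$\xi_L(E)$ projection. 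Controlling this discrepancy is exactly the content of Lemma~\ref{lemma:trace_with_spec_proj}, and it is here, not in any issue with the integrated density of states, that the subsequence $(L_{m_k})$ enters: one only knows that $\xi_L$ converges weakly (in $L^1_{\loc}$) to the infinite-volume spectral shift, so a subsequence is extracted to get $\xi_{L_{m_k}}(E)=\oh(\ln L_{m_k})$ for a.e.\ $E$. Your attribution of the subsequence to Remark~\ref{remark:ids} is therefore incorrect, and without this step the series you write down is not the one coming from the overlap.

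Two minor points. First, no uniform-in-$n$ control is needed to interchange the sum and the limit: the paper simply truncates the series at $M$, takes the $\limsup$ in $L$ for fixed $M$, and then sends $M\to\infty$ by monotone convergence---Fatou is not required. Second, the exceptional null set $\Nn$ is not merely thresholds and embedded eigenvalues; it also absorbs the a.e.\ exceptions arising from the limiting absorption principle (Proposition~\ref{proposition:Birman}), various Lebesgue-point arguments (Lemma~\ref{lemma:exp_leb_point}, Lemma~\ref{lemma:exptraceconv}), and the spectral-shift subsequence just mentioned.
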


\begin{remarks}
  \item We refer to Subsection~\ref{ssec:scattering-theory} for a more
        precise definition of the scattering-theoretic quantities
        $T_E$ and $S_E$.
  \item In proving Theorem~\ref{theorem:catastrophe}, we obtain a
        result on the asymptotics of the trace
        $\tr\bigbraces{\bigparens{1_{(-\infty,E]}(H_L)\1_{(E,\infty)}(H_L')}^n}$
        as $L\to\infty$, which may be interesting by itself; see
        Theorem~\ref{theorem:nth-term}.
  \item \label{no-subseq}
  The reason for passing to a subsequence $(L_{m_k})_{k\in\N}$ in Theorem~\ref{theorem:catastrophe}
  originates from Lemma~\ref{lemma:trace_with_spec_proj} below. What stands behind it is the lack of
  known a.e.-bounds on the
  finite-volume spectral shift function for the pair of operators $H_{L}, H_{L}'$, which hold
  uniformly in the limit $L\to\infty$. This unfortunate fact has been noticed many times
  in the literature, see e.g.\ \cite{MR2596053}, and the pathological behaviour of the spectral shift function found in
  \cite{MR908658} illustrates that this is a delicate issue.
  However, in certain special situations such a.e.-bounds are known, and our result can be
  strengthened. More precisely, we have
\end{remarks}

\noindent
  \textbf{Theorem~\ref{theorem:catastrophe}'.}~ {\itshape
  Assume the situation of Theorem~\ref{theorem:catastrophe} with
  $d=1$, or replace the perturbation potential $V$ in
  Theorem~\ref{theorem:catastrophe} by a finite-rank operator
  $V = \sum_{\nu=1}^{n} \<\phi_{\nu}, \dotid\> \, \phi_{\nu}$ with
  compactly supported $\phi_{\nu} \in L^{2}(\R^{d})$ for
  $\nu=1,\ldots,n$, or consider the lattice problem on $\Z^d$
  corresponding to the situation in
  Theorem~\ref{theorem:catastrophe}.
  Then the ground-state overlap \eqref{eq:overlap} obeys
  \begin{equation}
    \abs{\S_L(E)}
    \le
    \exp\Bigparens{-\tfrac{1}{2}\gamma(E)\ln L + \oh(\ln L)}
    = L^{-\gamma(E)/2 + \oh(1)}
  \end{equation}
  for a.e.\ $E\in\R$ as $L\to\infty$. Equivalently,
  \begin{equation}
    \limsup_{L\to\infty} \frac{\ln\abs{\SO_{L}(E)}}{\ln L} \le - \frac{\gamma(E)}{2}
  \end{equation}
  for a.e.\ $E\in\R$.}
\medskip

\begin{remarks}
\item In \cite{GKM}, similar statements to
      Theorem~\ref{theorem:catastrophe} and
      Theorem~\ref{theorem:catastrophe}' were proved,
      in particular, the bound
      \begin{equation}\label{eq:old result}
        \limsup_{L\to\infty} \frac{\ln\abs{\SO_{L}(E)}}{\ln L} \le - \frac{\gamma_1(E)}{2},
      \end{equation}
      with the exponent
      \begin{equation}\label{eq:gamma2}
        \gamma_1(E)= \frac{1}{\pi^2}\bignorm{T_E/2}_{\HS}^2.
      \end{equation}
      Note that $\gamma_1(E)$, which is called $\gamma(E)$ in
      \cite{GKM}, is strictly smaller than
      $\gamma(E)$ whenever both are non-zero. The bigger exponent $\gamma(E)$
      is due to treating all terms in a series
      expansion of $\ln\abs{\S_L(E)}$  (see equation~\eqref{eq:overlap_series}
      below) instead of only the \textdef{Anderson integral}, which is the first
      term of the series and gives rise to $\gamma_1(E)$.
    \item \label{remark:otte:first} Another mathematical work dealing
          with AOC is \cite{KuOtSp13}. That paper proves the \emph{exact} asymptotics of the
      Anderson integral in the special case $d=1$ and
      $V_0=0$. In particular, this
      yields a bound on the overlap as in \eqref{eq:old result} with
      the same non-optimal $\gamma_1(E)$ given by
      \eqref{eq:gamma2}.
      The paper also provides a lower bound on
      $\SO_L(E)$ with a smaller decay exponent \cite[Cor.~5.6]{KuOtSp13}.
\end{remarks}

\section{Series expansion of the overlap}
\label{sec:series-expansion}

In order to expand the ground-state overlap as a series, we introduce
the orthogonal projections
\begin{equation} \label{eq:projections_till_N}
  P^N_L := \sum_{j=1}^N \<\varphi_j^L, \dotid\>\varphi_j^L
  \quad\text{and}\quad
  \Pi^N_L := \sum_{k=1}^N \<\psi_k^L, \dotid\>\psi_k^L
\end{equation}
for $N\in\N_0$,
i.e.\ the projections on the eigenspaces of the first $N$ eigenvalues.
Using those, we can prove the following lemma.

\begin{lemma} \label{lemma:overlap_series}
  Let $L > 1$, $E\in\R$ and assume that $\S_L(E) \ne 0$. Then
  \begin{equation} \label{eq:overlap_series}
    \abs{\S_L(E)}^2
    =
    \exp\biggparens{
      -\sum_{n=1}^\infty
      \frac{1}{n}
      \tr\Bigbraces{
        \Bigparens{P^{N_L(E)}_L\bigparens{I-\Pi^{N_L(E)}_L}}^n
    }},
  \end{equation}
  where we take the trace of operators on the Hilbert space $L^2(\Lambda_L)$.
  \begin{proof}
    For brevity, set $N := N_L(E)$. If $N = 0$, the assertion is true
    by definition. Otherwise,
    define the $N\times N$-matrix $M := \bigparens{\<\varphi_j^L,
    \psi_k^L\>}_{j,k=1,\dotsc,N}$. Then $\S_L(E) = \det M$ and
    $\abs{\S_L(E)}^2 = \det({MM}^*)$.
    For $1\le j,\ell\le N$, the
    $(j,\ell)$-th entry of ${MM}^*$ is
    \begin{equation}
      (MM^*)_{j,\ell}
      =
      \sum_{k=1}^N
      \<\varphi_j^L, \psi_k^L\>\<\psi_k^L,\varphi_\ell^L\>
      =
      \<\varphi_j^L, \Pi_L^N\varphi_\ell^L\>= \<\varphi_j^L, P_L^N\Pi_L^NP_L^N \varphi_\ell^L\>.
    \end{equation}
    Since $\S_L(E)\ne 0$ by assumption and therefore 
    ${MM}^* > 0$, we have
    $0\leq P_L^N(I-\Pi_L^N)P_L^N<1$. Moreover, being of finite rank, $P_L^N(I-\Pi_L^N)P_L^N$
    is a trace class operator.
    Thus, we compute
    \begin{align}
      \abs{\S_L(E)}^2
      & =
      \det \left( I - P_L^N(I-\Pi_L^N)P_L^N\right) \notag\\
      & =
      \exp\Bigparens{
        \tr\Bigbraces{
          \ln\bigparens{I-P_L^N(I-\Pi_L^N)P_L^N}}} \notag\\
      & =
      \exp\biggparens{-\tr\biggbraces{\sum_{n=1}^\infty \frac{1}{n}
        \Bigparens{P_L^N(I-\Pi_L^N)P_L^N}^n}} \notag\\
      & =
      \exp\biggparens{-\sum_{n=1}^\infty \frac{1}{n}
        \tr\Bigbraces{\Bigparens{P_L^N(I-\Pi_L^N)}^n}},
    \end{align}
    where we used the expansion $\ln(1-x) = -\sum_{n=1}^\infty x^n/n$
    for the logarithm, which converges absolutely for $\abs{x} < 1$.
  \end{proof}
\end{lemma}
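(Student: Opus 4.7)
The plan is to translate $\abs{\S_L(E)}^2$ from a determinant of an $N\times N$ matrix (with $N := N_L(E)$) into a Fredholm determinant on all of $L^2(\Lambda_L)$, then apply $\det = \exp\tr\ln$ and expand the logarithm as a power series. Writing $\S_L(E) = \det M$ with $M := \bigparens{\<\varphi_j^L, \psi_k^L\>}_{j,k=1,\dotsc,N}$, I would first use $\abs{\S_L(E)}^2 = \det(MM^*)$. Inserting the spectral resolution $\Pi_L^N = \sum_{k=1}^N \<\psi_k^L,\dotid\>\psi_k^L$ into the matrix product, the $(j,\ell)$-entry of $MM^*$ equals $\<\varphi_j^L, \Pi_L^N\varphi_\ell^L\> = \<\varphi_j^L, P_L^N\Pi_L^N P_L^N\varphi_\ell^L\>$, which identifies $MM^*$ as the matrix representation of $P_L^N\Pi_L^N P_L^N$ in the orthonormal basis $(\varphi_j^L)_{j=1}^N$ of $\ran P_L^N$.

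Since $P_L^N(I-\Pi_L^N)P_L^N$ vanishes on $(\ran P_L^N)^\perp$, the finite-dimensional determinant agrees with the Fredholm determinant on all of $L^2(\Lambda_L)$, giving
\begin{equation*}
  \abs{\S_L(E)}^2 = \det\bigparens{I - P_L^N(I-\Pi_L^N)P_L^N}.
\end{equation*}
The operator $A := P_L^N(I-\Pi_L^N)P_L^N = \bigbrackets{(I-\Pi_L^N)P_L^N}^{\!*}\bigbrackets{(I-\Pi_L^N)P_L^N}$ is non-negative and of finite rank (so certainly trace class), while the hypothesis $\S_L(E)\neq 0$ yields $\det(I-A)\neq 0$ and hence $\norm{A} < 1$.

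With these two facts in hand, I would apply $\ln\det(I-A) = \tr\ln(I-A)$ (valid since $A$ is trace class and $I-A$ invertible) and expand using the absolutely convergent series $\ln(1-x) = -\sum_{n\ge 1} x^n/n$ to obtain
\begin{equation*}
  \abs{\S_L(E)}^2 = \exp\biggparens{-\sum_{n=1}^\infty \frac{1}{n}\tr\bigbrackets{A^n}}.
\end{equation*}
Finally, inside each power $A^n$ I would invoke cyclicity of the trace together with the idempotence $(P_L^N)^2 = P_L^N$ to collapse a pair of adjacent $P_L^N$ factors, yielding $\tr[A^n] = \tr\bigbrackets{\bigparens{P_L^N(I-\Pi_L^N)}^n}$ and hence the claimed identity.

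The only real obstacle is the bookkeeping required to justify the two operator-theoretic identities $\det(I-A) = \exp\tr\ln(I-A)$ and the absolute convergence of the logarithmic series in trace norm. Both collapse to showing that $A$ is trace class and $\norm{A}<1$, and both properties follow at once from the finite rank of $P_L^N$ (which makes $A$ finite-rank) and from the strict positive-definiteness of $MM^*$ forced by the non-vanishing assumption $\S_L(E)\neq 0$.
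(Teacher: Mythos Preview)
Your proposal is correct and follows essentially the same route as the paper: identify $MM^*$ with the matrix of $P_L^N\Pi_L^N P_L^N$ on $\ran P_L^N$, pass to the Fredholm determinant $\det(I-A)$ with $A=P_L^N(I-\Pi_L^N)P_L^N$, use $\S_L(E)\ne 0$ to get $\|A\|<1$, expand the logarithm, and drop a $P_L^N$ by cyclicity. The only omission is the trivial case $N_L(E)=0$, which the paper disposes of in one line.
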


\begin{remark}
  Lemma~\ref{lemma:overlap_series} will be the starting point of our
  estimates for $\abs{\S_L(E)}$. Equation~\eqref{eq:overlap_series}
  can be written as
  \begin{equation} \label{eq:log_overlap_series}
    - \ln\abs{\S_L(E)}
    =
    \frac{1}{2}\sum_{n=1}^\infty
    \frac{1}{n}
    \tr\Bigbraces{
      \Bigparens{P^{N_L(E)}_L(I-\Pi^{N_L(E)}_L)}^n
    }.
  \end{equation}
  The trace expressions in \eqref{eq:log_overlap_series} are
  non-negative, so any truncation of the series yields a lower bound
  on $-\ln\abs{\S_L(E)}$, and therefore an upper bound on the
  overlap. Keeping only the term for $n=1$, one recovers the so-called
  \textdef{Anderson integral}, which was estimated in \cite{GKM}.

  In the sequel, we will find an upper bound on $\abs{\S_L(E)}$ by
  bounding each individual term of \eqref{eq:log_overlap_series} from
  below.
\end{remark}

We begin by recasting the orthogonal projections
\eqref{eq:projections_till_N} as functions of $H_L$ and $H_L'$ in the
sense of the spectral calculus.
The projections in
\eqref{eq:projections_till_N} are not necessarily functions of $H_L$
and $H_L'$, since the $N$th eigenvalues might
be of multiplicity higher than one.
The choice of $N_L(E)$ in \eqref{eq:N}, together with a
convergence result of the spectral shift function, allows us to
rewrite them, at the cost of passing to a
subsequence of lengths.

\begin{lemma} \label{lemma:trace_with_spec_proj}
  For $n\in\N$, $L > 1$ and $E\in\R$, define
 \begin{equation} \label{eq:trace_function_of_HL}
   \F^n_{L}(E) :=
   \tr\Bigbraces{\Bigparens{\1_{(-\infty,E]}(H_L)\1_{(E,\infty)}(H_L')}^n
  }
 \end{equation} and
 \begin{equation} \label{eq:P-Pi-power}
   \I^n_{L}(E) :=
   \tr\Bigbraces{\Bigparens{P^{N_{L}(E)}_{L}(I-\Pi^{N_{L}(E)}_{L})}^n
   }.
 \end{equation}
 Then
 \begin{nummer}
 \item\label{ssf}
   Assume \eqref{eq:assumption:V_V0}
   and let $(L_m)_{m\in\N} \subset (0,\infty)$ be a
   sequence of increasing lengths with $L_m\upto\infty$.
   Then
   there exists a subsequence $(L_{m_k})_{k\in\N}$ 
   such that for a.e.\ Fermi energy $E\in\R$
   \begin{equation} \label{eq:trace_E_fixed}
     \bigabs{\F^n_{L_{m_k}}(E)-\I^n_{L_{m_k}}(E)}
     =
     \oh(\ln L_{m_k})
   \end{equation}
   as $k\to\infty$. 
 \item Assume the situation of Theorem~{\upshape\ref{theorem:catastrophe}'}. Then
   \begin{equation}
     \sup_{L>1}\sup_{E\in\R} \bigabs{\F^n_{L}(E) - \I^n_{L}(E)} 	< \infty.
   \end{equation}
 \end{nummer}
  \begin{proof}
    For fixed $L > 1$ and $E\in\R$, the definition of $N_L(E)$ in
    $\eqref{eq:N}$ implies
    \begin{equation}
      \lambda_{N_L(E)}^L \le E < \lambda_{N_L(E)+1}^L
      \le \mu_{N_L(E)+1}^L
    \end{equation}
    if we set $\lambda_0^L := -\infty$. This allows us to write
    \begin{equation} \label{eq:P-equals-indicator}
      P_L^{N_L(E)} = \1_{(-\infty,E]}(H_L)
    \end{equation}
    and
    \begin{align} \label{eq:ssf-perturb-sum}
      I-\Pi^{N_L(E)}_L
      & =
      \1_{(E,\infty)}(H_L')
      -
      \sum_{k=1}^{N_L(E)} \1_{(E,\infty)}(\mu_k^L)\,
      \<\psi_k^L,\dotid\>\psi_k^L \notag\\
      & =:
      \1_{(E,\infty)}(H_L')
      -
      Q.
    \end{align}
    The operator $Q$ is an orthogonal projection with trace
    \begin{align}
      \tr Q & = \#\bigl\{k\in\set{1,\dotsc,N_L(E)}\st \mu_k^L > E\bigr\}
      \notag\\ & =
      N_L(E) - \#\bigl\{k\in\N\st \mu_k^L \le E\bigr\}
      =: \xi_L(E)
    \end{align}
    equal to the finite-volume spectral-shift function at the Fermi energy. 
    
    Using $A^n - B^n = \sum_{k=1}^n B^{k-1}(A-B)A^{n-k}$ for bounded
    operators $A$ and $B$, we write the difference of operator powers
    on the left-hand side of \eqref{eq:trace_E_fixed} as
    \begin{multline} \label{eq:difference-of-powers}
      \Bigparens{P^{N_{L}(E)}_{L}\1_{(E,\infty)}(H_{L}')}^n
        -
      \Bigparens{P^{N_{L}(E)}_{L}(I-\Pi^{N_{L}(E)}_{L})}^n
      \\
      = \sum_{k=1}^n
      \Bigparens{P^{N_{L}(E)}_{L}(I-\Pi^{N_{L}(E)}_{L})}^{k-1}
      P^{N_{L}(E)}_{L}
      \, Q \,
      \Bigparens{P^{N_{L}(E)}_{L}\1_{(E,\infty)}(H_{L}')}^{n-k},
    \end{multline}
    where we also use~\eqref{eq:P-equals-indicator}. We
    estimate the traces of the operators on the right-hand side
    of~\eqref{eq:difference-of-powers} by bounding the operator norms
    of all projections, except for $Q$, by $1$. We then arrive at $n
    \xi_L(E)$ as a upper bound for \eqref{eq:difference-of-powers}.
    The claim follows by exploiting the weak convergence of $\xi_L$ as $L\to\infty$
    \cite[Thm.~1.4]{MR2596053} in the situation of (i), or using the
    uniform boundedness of $\xi_L$ in the situation of (ii). We refer to
    \cite[Lemma~3.9]{GKM} for a detailed argument.
  \end{proof}
\end{lemma}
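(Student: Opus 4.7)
The plan is to exploit the specific choice of $N_L(E)$ in~\eqref{eq:N} to rewrite the finite-rank projections appearing in $\I^n_L(E)$ as spectral projections of $H_L$ and $H_L'$, up to a finite-rank correction whose rank equals the value of the finite-volume spectral shift function at $E$.

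First I would observe that the definition of $N_L(E)$ forces $\lambda^L_{N_L(E)}\le E<\lambda^L_{N_L(E)+1}$, so that $P^{N_L(E)}_L = \1_{(-\infty,E]}(H_L)$ identically. The analogous identity fails for $\Pi^{N_L(E)}_L$, because $N_L(E)$ counts eigenvalues of $H_L$, not of $H_L'$. Letting $Q$ be the orthogonal projection onto the span of those eigenfunctions $\psi^L_k$ with $1\le k\le N_L(E)$ but $\mu^L_k>E$, one has
\[
I-\Pi^{N_L(E)}_L = \1_{(E,\infty)}(H_L') - Q.
\]
Since $V\ge 0$ forces $\mu^L_k\ge\lambda^L_k$ by the min--max principle, the finite-volume spectral shift function $\xi_L(E):=N_L(E)-\#\{k\in\N:\mu^L_k\le E\}$ is non-negative, and a short combinatorial check then gives $\tr Q = \xi_L(E)$.

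Next, to compare $\F^n_L(E)$ with $\I^n_L(E)$, I would apply the telescoping identity $A^n-B^n=\sum_{j=1}^n B^{j-1}(A-B)A^{n-j}$ with $A:=P^{N_L(E)}_L\1_{(E,\infty)}(H_L')$ and $B:=P^{N_L(E)}_L(I-\Pi^{N_L(E)}_L)$, noting that $A-B = P^{N_L(E)}_L Q$. Each of the $n$ summands on the right-hand side then contains the finite-rank projection $Q$ exactly once, while every other factor is a product of orthogonal projections of operator norm at most $1$. Taking traces and bounding those other factors crudely yields the uniform estimate
\[
\bigabs{\F^n_L(E)-\I^n_L(E)} \le n\,\tr Q = n\,\xi_L(E).
\]

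It therefore suffices to control $\xi_L(E)$ in the requested sense. For part~(ii), in each of the three special situations listed in Theorem~\ref{theorem:catastrophe}' the finite-volume spectral shift function is known to be bounded uniformly in both $L$ and $E$, giving the claim at once. For part~(i), I would invoke the weak $L^1_\loc$-convergence of $\xi_L$ to the infinite-volume spectral shift function $\xi$ from~\cite[Thm.~1.4]{MR2596053}, together with the local integrability of $\xi$, to extract a subsequence $(L_{m_k})$ along which $\xi_{L_{m_k}}(E)=\Oh(1)=\oh(\ln L_{m_k})$ for a.e.\ $E\in\R$. The main obstacle is precisely this last upgrade from a weak-$L^1_\loc$ convergence statement to an a.e.\ pointwise bound along a subsequence; weak convergence alone is insufficient, and one has to exploit the monotonicity and the piecewise-constant, integer-valued jump structure of $\xi_L$ to make the subsequence extraction work, as carried out in~\cite[Lemma~3.9]{GKM}.
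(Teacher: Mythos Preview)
Your proposal is correct and follows essentially the same route as the paper: identify $P^{N_L(E)}_L=\1_{(-\infty,E]}(H_L)$, write $I-\Pi^{N_L(E)}_L=\1_{(E,\infty)}(H_L')-Q$ with $\tr Q=\xi_L(E)$, telescope the difference of $n$th powers, bound by $n\,\xi_L(E)$, and then invoke uniform boundedness of $\xi_L$ for~(ii) and the weak-convergence result of~\cite{MR2596053} together with the subsequence-extraction argument of~\cite[Lemma~3.9]{GKM} for~(i). Your discussion of the obstacle in passing from weak $L^1_\loc$-convergence to an a.e.\ bound along a subsequence is apt and matches what the paper defers to~\cite{GKM}.
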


Having established \eqref{eq:trace_E_fixed}, we will prove a diverging
lower bound for $\tr\bigbraces{\bigparens{\1_{(-\infty,E]}(H_L)
\1_{(E,\infty)}(H_L')}^n}$ as $L\to\infty$. There will be no
restriction to particular sequences of lengths from now on.
The following theorem is the main ingredient of the proof.

\begin{theorem}\label{theorem:nth-term}
  Assume the situation of Theorem~\ref{theorem:catastrophe} or
  Theorem~\ref{theorem:catastrophe}'.
  Then there exists a null set
  $\mathcal{N} \subset\R$ of exceptional Fermi energies such that
  \begin{equation} \label{eq:nth-term}
    \tr\bigbraces{\bigparens{1_{(-\infty,E]}(H_L)\1_{(E,\infty)}(H_L')}^n}
      \ge
      n J_{2n} \tr(\abs{T_E/(2\pi)}^{2n}) \ln L + \oh(\ln L)
  \end{equation}
  for every $E\in\R\setminus\Nn$ and every $n\in\N$
  as $L\to\infty$. The error term $\oh(\ln L)$ depends on $n$ and
  $E$, and we introduced the constant
  \begin{equation} \label{eq:J2n-def}
    J_{2n} := \pi^{2(n-1)}2^{2n-1}\frac{[(n-1)!]^2}{(2n)!}.
  \end{equation}
\end{theorem}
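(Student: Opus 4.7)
The strategy is to analyse each $n$th trace $\F_L^n(E) = \tr\bigbraces{\bigparens{\1_{(-\infty,E]}(H_L)\1_{(E,\infty)}(H_L')}^n}$ separately by smoothing the sharp spectral projections into Laplace-type integrals of semigroups, rewriting the trace as a $2n$-fold integral over semigroup parameters $u_1,\ldots,u_{2n}$, and then passing to the thermodynamic limit to read off both the logarithmic divergence and its scattering-theoretic coefficient. The pattern extends the Anderson-integral analysis of \cite{GKM} (which handles the case $n=1$) to all higher $n$; when all these lower bounds are later recombined via the series \eqref{eq:log_overlap_series}, the $\frac{1}{n}$-weighted sum will match the Taylor expansion of $\arcsin^2$ and produce $\gamma(E)\ln L$ with $\gamma(E)$ as in \eqref{eq:gamma}.

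First, I would approximate $\1_{(-\infty,E]}(H_L)$ and $\1_{(E,\infty)}(H_L')$ by smooth functions $f_\beta(E-H_L)$ admitting a Laplace representation $f_\beta(E-H_L) = \int_0^\infty \rho_\beta(u)\,e^{-u(H_L-E)}\,\du$. For a lower bound one chooses the approximants so that the smoothed product lies below the sharp product, while controlling the smoothing error via the trace-class bounds on semigroup differences guaranteed by \eqref{eq:assumption:V_V0} and \cite[Thm.~6.1]{MR1756112}; this is the "undoing the smoothing" step of Section~\ref{ssec:undoing-the-smoothing}. After smoothing, $\F_L^n(E)$ becomes a $2n$-fold Laplace integral of traces of alternating semigroup products $e^{-u_1H_L}e^{-u_2H_L'}\cdots$. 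Adjacent pairs are handled via the Duhamel identity
\[
e^{-uH_L}-e^{-uH_L'}=-\int_0^u e^{-sH_L}\,V\,e^{-(u-s)H_L'}\,\d s,
\]
which isolates the perturbation $V$ and, after rescaling, produces the kernel factor $1/(u_{2k-1}+u_{2k})$ appearing in \eqref{hilbert-int}. Passing to $L\to\infty$ on the integrand, using trace-norm convergence of semigroups and completeness of the wave operators for the pair $(H,H')$, reorganises the residual energy integral onto the energy shell $\H_E$, bringing in the on-shell transition matrix $T_E$ with the normalisation $T_E/(2\pi)$ fixed by the density-of-states factor. The $\ln L$ divergence arises from the Fermi-surface singularity, exactly as in the $n=1$ analysis of \cite{GKM}, and yields the coefficient $\tr(\abs{T_E/(2\pi)}^{2n})$ multiplied by the purely numerical integral \eqref{hilbert-int}.

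The numerical constant is obtained in Section~\ref{ssec:u-integral} by evaluating \eqref{hilbert-int} in closed form: after substituting the Laguerre orthonormal basis of $L^2((0,\infty),e^{-u}\du)$ one recognises the kernel $1/(u_i+u_j)$ as the matrix of the Hilbert operator, and the $2n$-fold integral becomes the first diagonal entry of the $(2n-1)$st power of the Hilbert matrix, which evaluates to $J_{2n} = \pi^{2(n-1)}2^{2n-1}[(n-1)!]^2/(2n)!$. The hardest step will be the scattering-theoretic identification in the infinite-volume limit: showing that after the resolvent manipulations the limiting trace collapses to $\tr(\abs{T_E/(2\pi)}^{2n})$ rather than a less symmetric combination involving the wave operators $W_\pm$. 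This requires careful stationary scattering theory and uniform control of the trace-class convergence in the small-$u$ regime that governs the Fermi-surface divergence. The null set $\Nn$ is dictated by the exceptional energies where $T_E$ is not well-defined (singular spectrum of $H$, possible Fermi-surface irregularities) and by the subsequence choice from Lemma~\ref{lemma:trace_with_spec_proj}\ssf. Once these technicalities are handled, the bound \eqref{eq:nth-term} with the claimed error $\oh(\ln L)$ follows for every $n\in\N$ and every $E\in\R\setminus\Nn$.
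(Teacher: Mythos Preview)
Your outline captures the overall architecture of the paper's proof---smoothing, a $2n$-parameter integral representation, extraction of the $\ln L$ divergence at the Fermi surface, identification of the coefficient via scattering theory, and evaluation of the numerical integral via the Hilbert matrix---but two of your technical steps deviate from the paper in ways that do not work as stated.

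First, the paper does not use Duhamel's formula to insert $V$ and generate the $1/(u_j+u_{j+1})$ kernel. Instead it uses the elementary eigenvalue identity $\langle\varphi_j^L,\psi_k^L\rangle = \langle\varphi_j^L,V\psi_k^L\rangle/(\mu_k^L-\lambda_j^L)$ (Lemma~\ref{lemma:trace_power_projections_as_sum}) to insert $V$ directly into the trace, and then applies a Feynman--Schwinger parametrisation to the resulting product of energy denominators $\prod_j(y_j-x_j)^{-1}(y_j-x_{j+1})^{-1}$. This is what produces the $(u,v)$-integral with exponential weights in Lemma~\ref{lemma:trace_using_feynman}; the $1/(u_j+u_{j+1})$ structure only emerges later, when the semigroup factors are assembled into the Birman--\`Entina operators $A_t(E),B_t(E)$ and the limit $t\to\infty$ is taken (Theorem~\ref{theorem:the-asymptotics}). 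Your Duhamel suggestion yields a \emph{difference} $e^{-uH_L}-e^{-uH_L'}$, not the \emph{product} structure $e^{u(H_L-E)}e^{-v(H_L'-E)}$ that is needed, and it is not clear how it would directly generate the denominators you claim.

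Second, the finite-to-infinite-volume passage (Subsection~\ref{ssec:smoothing}) is not done by trace-norm convergence of semigroups and wave-operator completeness, but by a Helffer--Sj\"ostrand argument applied to the smooth cut-offs $\chi_L^\pm$ (Lemma~\ref{lemma:hs}), followed by a limiting-absorption step (Proposition~\ref{proposition:Birman}, Lemma~\ref{lemma:exptraceconv}) that produces the densities $A(E),B(E)$; the identity $\eta_{2n}(E)=\tr(|T_E/(2\pi)|^{2n})$ is then a resolvent-identity computation (Corollary~\ref{corollary:t-matrix-birman-ops}) and does not invoke wave operators. Two smaller corrections: the null set $\Nn$ in Theorem~\ref{theorem:nth-term} arises from limiting absorption and the Lebesgue-point arguments, \emph{not} from the subsequence choice in Lemma~\ref{lemma:trace_with_spec_proj} (Theorem~\ref{theorem:nth-term} involves no subsequences); and the paper evaluates~\eqref{hilbert-int} via Rosenblum's Kontorovich--Lebedev diagonalisation of the Rosenblum--Rovnyak operator (Theorem~\ref{theorem:u-integral}) rather than via the Laguerre basis, though the latter does realise the unitary equivalence to the discrete Hilbert matrix noted in the paper's closing remark.
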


\begin{remarks} \label{remarks:nth-term}
  \item In the next section, we will spell out explicitly the proof of
        Theorem~\ref{theorem:nth-term} for the situation of
        Theorem~\ref{theorem:catastrophe} only. It follows from
        Corollary~\ref{corollary:the-asymptotics},
        Theorem~\ref{theorem:u-integral} and Theorem~\ref{thm:scattering}.
        The proof is fully
        analogous (and even simpler) in the remaining  situations of
        Theorem~\ref{theorem:catastrophe}', where $V$ is a finite-rank
        operator.
  \item The constant $J_{2n}$ will emerge as the value of a
        $2n$-dimensional integral which we calculate using the
        spectral representation of the Hilbert matrix, see
        Subsection~\ref{ssec:u-integral} below.
\end{remarks}

Given Theorem~\ref{theorem:nth-term}, we are now in a position to prove
Theorem~\ref{theorem:catastrophe}.

\begin{proof}[Proof of Theorem~\ref{theorem:catastrophe}]
  Let $M\in\N$. Let $\Nn$ be the null set from
  Theorem~\ref{theorem:nth-term}. Let $E\in\R\setminus\Nn$.
  We start from Lemma~\ref{lemma:overlap_series} and
  Lemma~\ref{lemma:trace_with_spec_proj}, which imply
  \begin{align}
    -\ln\abs{\S_{L_{m_k}}(E)}
    & \ge
    \frac{1}{2}
    \sum_{n=1}^M \frac{1}{n}
    \tr\Bigbraces{
      \Big(P^{N_{L_{m_k}}(E)}_{L_{m_k}} \big(I-\Pi^{N_{L_{m_k}}(E)}_{L_{m_k}}\big)\Big)^n
    } \notag \\
    & =
    \frac{1}{2}
    \sum_{n=1}^M \frac{1}{n}
    \tr\Bigbraces{
      \Bigparens{\1_{(-\infty,E]}(H_{L_{m_k}})\1_{(E,\infty)}(H_{L_{m_k}}')}^n
    }   \notag \\
      & \quad + \oh(\ln L_{m_k}) 
  \end{align}
  for a subsequence $(L_{m_k})_{k\in\N}$, as $k\to\infty$, with an
  $M$-dependent error term $\oh(\ln L_{m_k})$.
  By Theorem~\ref{theorem:nth-term}, this gives  
	\begin{equation}
     -\ln\abs{\S_{L_{m_k}}(E)}
    \ge
      \frac{1}{2} \tr\sum_{n=1}^M J_{2n}\hairspace
      \abs{T_E/(2\pi)}^{2n}
      \ln L_{m_k} + \oh(\ln L_{m_k}) \label{eq:catastrophic-calculation-0}
	\end{equation}
  as $k\to\infty$, with an $M$-dependent error term
  $\oh(\ln L_{m_k})$.
  The constants $J_{2n}$ show up in the series expansion
  \cite[Eq.~1.645\,2]{MR2360010}
  \begin{equation}
    \sum_{n=1}^\infty  J_{2n} x^{2n}
    =
    \pi^{-2}\bigparens{\arcsin(\pi x)}^2
    \qquad\text{for } \abs{x} \le \tfrac{1}{\pi}
    .
   \end{equation}
  Therefore, monotone convergence and the functional calculus yield
  \begin{equation}
    \lim_{M\to\infty}\tr\sum_{n=1}^M J_{2n}\abs{T_E/(2\pi)}^{2n} =
    \pi^{-2}\bignorm{\arcsin\abs{T_E/2}}_\HS^2.
  \end{equation}
  Since \eqref{eq:catastrophic-calculation-0} is valid for every
  $M\in\N$, we infer
  \begin{equation}
    \limsup_{k\to\infty}\frac{\ln\abs{\S_{L_{m_k}}(E)}}{\ln L_{m_k}}
    \le
    -\frac 1 2 \pi^{-2}\bignorm{\arcsin\abs{T_E/2}}_\HS^2
    =
    -\frac{\gamma(E)}{2},
  \end{equation}
  which proves~\eqref{eq:catastrophe-without-a}. For
  \eqref{eq:catastrophe}, note that by the definition of the limit
  superior for every $\varepsilon > 0$ there is $k_0\in\N$ such that
  \begin{equation}
   \frac{\ln\abs{\S_{L_{m_k}}(E)}}{\ln L_{m_k}}
   \le
   -\frac{\gamma(E)}{2} + \varepsilon
  \end{equation}
  for all $k\geq k_0$, which implies the claim.
\end{proof}

It remains to prove Theorem~\ref{theorem:nth-term}.

\section{Proof of Theorem~\ref{theorem:nth-term}}
\label{sec:proof-of-theorem:nth-term}

\subsection{An integral representation for
\texorpdfstring{\boldmath$\tr\bigbraces{\bigparens{f(H_L)g(H_L')}^n}$}{traces}}

Throughout this subsection, $n\in\N$, $L > 1$ and $E\in\R$ are all fixed.
Using the eigenvalue equations of $H_L$ and $H_L'$, we rewrite
trace expressions like~\eqref{eq:trace_function_of_HL}.

\begin{lemma}
  \label{lemma:trace_power_projections_as_sum}
  Let $f,g\from\R\to [0,1]$ be measureable functions with compact
  supports $\spt f\subseteq (-\infty,E]$ and $\spt g\subseteq
  (E,\infty)$. Then
  \begin{multline} \label{eq:trace_power_as_sum}
    \tr\bigbraces{\bigparens{f(H_L)g(H_L')}^n}
        \\ =
      \sum_{\alpha,\beta\in\N^n} \prod_{j=1}^n
        \biggparens{
        f(\lambda_{\alpha_j}^L)
        g(\mu_{\beta_j}^L)
        \frac{\<\varphi_{\alpha_j}^L, V\psi_{\beta_j}^L\>
               \<\psi_{\beta_j}^L,V\varphi_{\alpha_{j+1}}^L\>}%
             {(\mu_{\beta_j}^L - \lambda_{\alpha_j}^L)(\mu_{\beta_j}^L-\lambda_{\alpha_{j+1}}^L)}},
    \end{multline}
  for multi-indices $\alpha=(\alpha_1,\dotsc,\alpha_n)\in\N^n$ with
  the convention $\alpha_{n+1} := \alpha_1$.
  \begin{proof}
    We begin noting that
    \begin{equation} \label{eq:projections_as_sums}
      f(H_L)
        = \sum_{j\in\N} f(\lambda_j^L)\,
          \<\varphi_j^L, \dotid\>\varphi_j^L,
      \qquad
      g(H_L')
        = \sum_{k\in\N} g(\mu_k^L)\,
          \<\psi_k^L, \dotid\>\psi_k^L.
    \end{equation}
    To ease notation, we employ the bra-ket notation in the next
    formula, writing $\<\varphi, \dotid\>\varphi =: \bra{\varphi}\ket{\varphi}$ for $\varphi\in
    L^2(\Lambda_L)$.
    Then \eqref{eq:projections_as_sums} implies
    \begin{equation}
      \bigparens{f(H_L)g(H_L')}^n
      =
      \sum_{\alpha,\beta\in\N^n}
        \Bigparens{\prod_{j=1}^n
          f(\lambda_{\alpha_j}^L)
          g(\mu_{\beta_j}^L)}
        \,
        \prod_{j=1}^n \bra{\varphi_{\alpha_j}^L}
                     \<\varphi_{\alpha_j}^L, \psi_{\beta_j}^L\>
                     \ket{\psi_{\beta_j}^L},
    \end{equation}
    and
    \begin{multline} \label{eq:trace_power_projections_with_bra_ket}
      \tr\bigbraces{\bigparens{f(H_L)g(H_L')}^n}
        \\ =
      \sum_{\alpha,\beta\in\N^n}
        \Bigparens{\prod_{j=1}^n
          f(\lambda_{\alpha_j}^L)
          g(\mu_{\beta_j}^L)}
        \,\prod_{j=1}^n
        \<\varphi_{\alpha_j}^L,\psi_{\beta_j}^L\>
        \<\psi_{\beta_j}^L,\varphi_{\alpha_{j+1}}^L\>,
    \end{multline}
    where we used the convention $\alpha_{n+1} := \alpha_1$ for
    $\alpha\in\N^n$.
    Now, we note that the eigenvalue equations imply
    \begin{equation}
      \lambda_j^L\<\varphi_j^L, \psi_k^L\>
      =
      \<H_L \varphi_j^L, \psi_k^L\>
      =
      \mu_k^L\<\varphi_j^L, \psi_k^L\> - \<\varphi_j^L, V\psi_k^L\>
    \end{equation}
    for $j,k\in\N$, and therefore
    \begin{equation} \label{eq:phi-psi-product}
      \<\varphi_j^L, \psi_k^L\>
      =
      \frac{\<\varphi_j^L, V\psi_k^L\>}{\mu_k^L - \lambda_j^L}
    \end{equation}
    whenever $\lambda_j^L\ne \mu_k^L$.
    Since $f$ and $g$ have disjoint supports,
    \eqref{eq:phi-psi-product} and
    \eqref{eq:trace_power_projections_with_bra_ket} yield the claim.
  \end{proof}
\end{lemma}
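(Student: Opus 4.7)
The plan is a direct spectral-theoretic calculation using the orthonormal eigenbases of $H_L$ and $H_L'$, followed by converting the overlaps $\<\varphi_j^L,\psi_k^L\>$ into matrix elements of the perturbation $V$ via the eigenvalue equations.

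First I would invoke the spectral calculus to write $f(H_L)=\sum_{j\in\N}f(\lambda_j^L)\<\varphi_j^L,\dotid\>\varphi_j^L$ and, analogously, $g(H_L')=\sum_{k\in\N}g(\mu_k^L)\<\psi_k^L,\dotid\>\psi_k^L$. Since $f,g$ are bounded with compact support and both operators have purely discrete spectrum, only finitely many terms in each sum are nonzero, so multiplying them out is unproblematic. Expanding the $n$-fold product $\bigparens{f(H_L)g(H_L')}^n$ over multi-indices $\alpha,\beta\in\N^n$ and taking the trace via the identity $\tr\bigparens{\<v,\dotid\>u}=\<v,u\>$ yields, after adopting the cyclic convention $\alpha_{n+1}:=\alpha_1$,
\begin{equation*}
  \tr\bigbraces{\bigparens{f(H_L)g(H_L')}^n}
  =\sum_{\alpha,\beta\in\N^n}\prod_{j=1}^n
     f(\lambda_{\alpha_j}^L)\,g(\mu_{\beta_j}^L)\,
     \<\varphi_{\alpha_j}^L,\psi_{\beta_j}^L\>\,
     \<\psi_{\beta_j}^L,\varphi_{\alpha_{j+1}}^L\>.
\end{equation*}

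Second, I would convert each overlap into a matrix element of $V$. Since $H_L'-H_L=V$, the eigenvalue equations give $\<V\varphi_j^L,\psi_k^L\>=\<(H_L'-H_L)\varphi_j^L,\psi_k^L\>=(\mu_k^L-\lambda_j^L)\<\varphi_j^L,\psi_k^L\>$, whence $\<\varphi_j^L,\psi_k^L\>=\<\varphi_j^L,V\psi_k^L\>/(\mu_k^L-\lambda_j^L)$ whenever $\lambda_j^L\ne\mu_k^L$; using self-adjointness of $V$, the partner factor satisfies $\<\psi_k^L,\varphi_j^L\>=\<\psi_k^L,V\varphi_j^L\>/(\mu_k^L-\lambda_j^L)$. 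The disjoint-support hypothesis $\spt f\subseteq(-\infty,E]$ and $\spt g\subseteq(E,\infty)$ is precisely what guarantees that the denominators never vanish in summands that actually contribute: $f(\lambda_{\alpha_j}^L)g(\mu_{\beta_j}^L)\ne 0$ forces $\lambda_{\alpha_j}^L\le E<\mu_{\beta_j}^L$, and analogously for $\lambda_{\alpha_{j+1}}^L$. Substituting produces the claimed identity.

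The only real obstacle is bookkeeping: keeping track of the cyclic multi-index structure and invoking the disjoint-support hypothesis at the exact moment needed to exclude coinciding eigenvalues. Absolute convergence of rearrangements is a non-issue here, because the compact supports of $f$ and $g$ and the discreteness of the spectra reduce the entire calculation to finite sums.
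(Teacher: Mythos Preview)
Your proposal is correct and follows essentially the same route as the paper: expand $f(H_L)$ and $g(H_L')$ in their eigenbases, multiply out the $n$-fold product and take the trace to obtain the overlap formula with factors $\<\varphi_{\alpha_j}^L,\psi_{\beta_j}^L\>\<\psi_{\beta_j}^L,\varphi_{\alpha_{j+1}}^L\>$, then use the eigenvalue equations and $H_L'-H_L=V$ to rewrite each overlap as a $V$-matrix element divided by an eigenvalue difference, with the disjoint-support hypothesis guaranteeing nonvanishing denominators. Your explicit remark that compact supports reduce everything to finite sums is a helpful clarification that the paper leaves implicit.
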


\begin{remark} \label{remark:finite_measure}
  In analogy to \cite{GKM}, one might be tempted to define a spectral
  correlation ``measure'' by
  \begin{multline} \label{eq:finite_measure}
    \mu_L^{2n}(A_1\times\dotsm\times A_n\times B_1\times\dotsm\times B_n)
    \\ :=
    \tr\big\{(\1_{A_1}(H_L)V\1_{B_1}(H_L')V\;\dotsm\; \1_{A_n}(H_L)V\1_{B_n}(H_L')V)\big\}
  \end{multline}
  for $n\in\N$, \ $L > 1$ and bounded $A_1,\dotsc, A_n,
  B_1,\dotsc,B_n\in\Borel(\R)$, which was done for the case $n=1$ in \cite{GKM}.
  Lemma~\ref{lemma:trace_power_projections_as_sum}
  would then read
  \begin{equation} \label{eq:trace_power_projections_as_integral_notused}
    \tr\bigbraces{\bigparens{f(H_L)g(H_L')}^n}
     =
    \int_{\R^n\times\R^n}\d\mu_L^{2n}(x,y)\prod_{j=1}^n
      \frac{f(x_j)g(y_j)}{(y_j - x_j)(y_j - x_{j+1})}.
  \end{equation}
  However, \eqref{eq:finite_measure} is
  not necessarily non-negative for $n\ge 2$, and therefore we cannot mimick the
  proof of \cite{GKM}.
\end{remark}

Next, we rewrite the right-hand side of
\eqref{eq:trace_power_as_sum} using a variation of an integral formula
that goes back to Feynman and Schwinger.

\begin{lemma}[Feynman--Schwinger parametrization]
  Let $x_1,\dotsc,x_n\in(0,\infty)$. Then
  \begin{equation} \label{eq:feynman_parametrization}
    \frac{1}{x_1\dotsm x_n}
    =
    \int_0^\infty\dt\, t^{n-1}
    \int_{(0,\infty)^n}\du\, \abs{u}_1e^{-\abs{u}_1} e^{-t u\cdot x},
  \end{equation}
  where $u\cdot x = \sum_{j=1}^n u_jx_j$ denotes the Euclidean scalar
  product and $\abs{u}_1 := \sum_{j=1}^n\abs{u_j}$ the $1$-norm on $\R^n$.
  \begin{proof}
    For any measurable function $f\from(0,\infty)^n\to (0,\infty)$ the
    coarea formula implies
    \begin{equation} \label{eq:coarea-formula}
      \int_{(0,\infty)^n}\du\, f(u)
      =
      \int_0^\infty\dt \int_{\M} \frac{\dS(\xi)}{\sqrt{n}} t^{n-1}f(t\xi),
    \end{equation}
    where $\dS$ stands for integration with respect to the surface
    measure on $\M := \set{\xi\in(0,\infty)^n\st \abs{\xi}_1 = 1}$.
    Let $r > 0$. Starting from $x_j^{-1} = \int_0^\infty\du_j\,
    e^{-u_jx_j}$, we compute using \eqref{eq:coarea-formula}
    \begin{align}
      \frac{1}{x_1\dotsm x_n}
      & =
      \int_{(0,\infty)^n}\du\, e^{-u\cdot x}
      =
      \int_0^\infty\dt \int_\M
      \frac{\dS(\xi)}{\sqrt{n}} t^{n-1}
      e^{-t \xi\cdot x} \notag\\
      & =
      \int_0^\infty\dt \int_\M
      \frac{\dS(\xi)}{\sqrt{n}} t^{n-1} r^n
      e^{-r t \xi\cdot x},
    \end{align}
    which is $r$-independent. Given any measurable function $g\from
    (0,\infty)\to(0,\infty)$ with $\int_0^\infty\dr\, \frac{g(r)}{r} = 1$,
    we therefore get
    \begin{align}
      \frac{1}{x_1\dotsm x_n}
      & =
      \int_0^\infty\dr\, g(r) \int_0^\infty\dt \int_\M
      \frac{\dS(\xi)}{\sqrt{n}} t^{n-1} r^{n-1}
      e^{-r t \xi\cdot x} \notag\\
      & =
      \int_0^\infty\dt\, t^{n-1}  \int_{(0,\infty)^n}\du\, g(\abs{u}_1)
      e^{-t u\cdot x},
    \end{align}
    where we used the Fubini--Tonelli theorem and
    \eqref{eq:coarea-formula} with $f(u) = g(\abs{u}_1)e^{-tu\cdot
    x}$. Choosing $g(r) := r e^{-r}$ finishes the proof.
  \end{proof}
\end{lemma}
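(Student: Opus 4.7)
The plan is to start from the elementary identity $\tfrac{1}{x_j} = \int_0^\infty e^{-u_j x_j}\,\d u_j$ for each $x_j > 0$, whose product gives
\begin{equation*}
  \frac{1}{x_1\cdots x_n} = \int_{(0,\infty)^n} e^{-u\cdot x}\,\du.
\end{equation*}
The task is then to reorganise this Cartesian representation into one carrying an auxiliary scale parameter $t$ together with the prescribed weight $\abs{u}_1 e^{-\abs{u}_1}$.

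My strategy exploits the scaling invariance $u \mapsto \lambda u$, $x \mapsto x/\lambda$ of the right-hand side. First I would switch to polar-like coordinates on the positive orthant via the coarea formula, writing $u = s\xi$ with $s = \abs{u}_1 \in (0,\infty)$ and $\xi \in M := \{\xi \in (0,\infty)^n : \abs{\xi}_1 = 1\}$; then $\du = s^{n-1}\,\d s\,\dS(\xi)/\sqrt{n}$, where the $\sqrt{n}$ is the Euclidean norm of $\nabla\abs{\cdot}_1$ on the positive orthant. This yields
\begin{equation*}
  \frac{1}{x_1\cdots x_n} = \int_0^\infty \d s\, s^{n-1} \int_M \frac{\dS(\xi)}{\sqrt{n}}\, e^{-s\xi\cdot x}.
\end{equation*}
A further substitution $s \mapsto rs$ shows that this representation is manifestly independent of any extra positive parameter $r$, so I may insert an arbitrary weight $g\from(0,\infty) \to (0,\infty)$ with $\int_0^\infty g(r)/r\,\dr = 1$ without altering the value.

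Next, Fubini lets me interchange the $r$- and $s$-integrations, and applying the coarea formula in reverse (now combining the $(r,\xi)$ integration into a single Cartesian $u$-integration with $u = r\xi$) produces
\begin{equation*}
  \frac{1}{x_1\cdots x_n} = \int_0^\infty \dt\, t^{n-1} \int_{(0,\infty)^n} g(\abs{u}_1)\, e^{-t u\cdot x}\, \du.
\end{equation*}
The whole point is the freedom in $g$: choosing $g(r) := r e^{-r}$, which satisfies $\int_0^\infty e^{-r}\, \dr = 1$, immediately produces $g(\abs{u}_1) = \abs{u}_1 e^{-\abs{u}_1}$ and finishes the lemma.

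The main obstacle here is bookkeeping rather than anything conceptual: correctly tracking the $\sqrt{n}$ factor from the coarea formula and the order of integration when Fubini is invoked, together with the judicious choice of the weight $g$. A more direct alternative, avoiding the coarea back-and-forth, is to begin from the classical Feynman simplex parametrization $\tfrac{1}{x_1\cdots x_n} = (n-1)!\int_{\Delta} \d\sigma/(\alpha\cdot x)^n$ over the standard simplex $\Delta$, replace $(\alpha\cdot x)^{-n}$ by the gamma integral $\tfrac{1}{(n-1)!}\int_0^\infty t^{n-1} e^{-t\alpha\cdot x}\,\dt$, and then unfold the simplex back to $(0,\infty)^n$ via $\alpha = u/\abs{u}_1$. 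Both routes encode the same underlying identity.
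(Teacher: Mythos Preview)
Your proposal is correct and follows essentially the same route as the paper: apply the coarea formula to pass to $(s,\xi)$-coordinates on the positive orthant, exploit the scale invariance in $s$ to insert an arbitrary weight $g$ with $\int_0^\infty g(r)/r\,\d r = 1$, use Fubini and the coarea formula in reverse, and then specialise to $g(r) = r e^{-r}$. The paper's argument is identical up to the naming of integration variables.
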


We use \eqref{eq:feynman_parametrization} to rewrite the right-hand side of
\eqref{eq:trace_power_as_sum}.
\begin{lemma} \label{lemma:trace_using_feynman}
  Let $f,g\from\R\to [0,1]$ be measurable functions with compact
  supports $\spt f\subseteq (-\infty,E]$ and $\spt g\subseteq
  (E,\infty)$.
  Then,
  \begin{multline} \label{eq:Feynman_integrated_trace}
    \tr\bigbraces{\bigparens{f(H_L)g(H_L')}^n}
    \notag\\ 
    \quad =
    \int_0^\infty\dt\, t^{2n-1} \int_{(0,\infty)^n\times
      (0,\infty)^n}\d(u,v)\, (\abs{u}_1+\abs{v}_1)
      e^{-\abs{u}_1-\abs{v}_1} \hfill \notag\\
      \qquad \times
      \tr\bigg\{\prod_{j=1}^n
        \sqrt{V}f(H_L) e^{(u_j + v_{j-1})t(H_L-E)}V g(H_L') e^{-(u_j + v_j)t(H_L'-E)}\sqrt{V}\bigg\} \hfill
  \end{multline}
  with the convention $v_0 := v_n$ for $v\in\R^n$.
\end{lemma}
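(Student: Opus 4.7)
The plan is to combine Lemma~\ref{lemma:trace_power_projections_as_sum} with the Feynman--Schwinger parametrisation~\eqref{eq:feynman_parametrization}, applied with $2n$ factors rather than $n$. I would start from the multi-index representation~\eqref{eq:trace_power_as_sum}. Because $\spt f\subseteq(-\infty,E]$ and $\spt g\subseteq(E,\infty)$, every denominator $(\mu_{\beta_j}^L-\lambda_{\alpha_j}^L)$ and $(\mu_{\beta_j}^L-\lambda_{\alpha_{j+1}}^L)$ arising from a non-vanishing summand is strictly positive, so \eqref{eq:feynman_parametrization} applies to the $2n$ positive numbers $\mu_{\beta_j}^L-\lambda_{\alpha_j}^L$ and $\mu_{\beta_j}^L-\lambda_{\alpha_{j+1}}^L$, with the integration variables relabelled as $u_j,v_j$ for $j=1,\dotsc,n$. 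This replaces the product of $2n$ denominators by
\begin{equation*}
\int_0^\infty\dt\, t^{2n-1}\int_{(0,\infty)^{2n}}\d(u,v)\,(\abs{u}_1+\abs{v}_1)e^{-\abs{u}_1-\abs{v}_1}\prod_{j=1}^n e^{-tu_j(\mu_{\beta_j}^L-\lambda_{\alpha_j}^L)-tv_j(\mu_{\beta_j}^L-\lambda_{\alpha_{j+1}}^L)}.
\end{equation*}

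Next I would reorganise the exponent. Relabelling $j\mapsto j-1$ in $\sum_j v_j\lambda_{\alpha_{j+1}}^L$ and using the cyclic conventions $\alpha_{n+1}=\alpha_1$ and $v_0=v_n$, the exponent equals $-t\sum_j[(u_j+v_j)(\mu_{\beta_j}^L-E)-(u_j+v_{j-1})(\lambda_{\alpha_j}^L-E)]$; the Fermi energy $E$ can be inserted freely because $\sum_j[(u_j+v_j)-(u_j+v_{j-1})]$ telescopes cyclically to zero. This is exactly the combination of exponents that appears in the target identity~\eqref{eq:Feynman_integrated_trace}.

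The crucial step is to swap the multi-index sum with the $(t,u,v)$-integral via Fubini--Tonelli. This is where I expect the main technical obstacle to lie: the matrix elements $\langle\varphi_{\alpha_j}^L,V\psi_{\beta_j}^L\rangle\langle\psi_{\beta_j}^L,V\varphi_{\alpha_{j+1}}^L\rangle$ are not of one sign. The remedy is an absolute-value majorisation: $0\le f,g\le 1$ and the compact support of $f,g$ bound the spectral factors $f(\lambda_{\alpha_j}^L)e^{t(u_j+v_{j-1})(\lambda_{\alpha_j}^L-E)}$ and $g(\mu_{\beta_j}^L)e^{-t(u_j+v_j)(\mu_{\beta_j}^L-E)}$ uniformly in $\alpha,\beta$, while the products $\abs{\langle\varphi_{\alpha_j}^L,V\psi_{\beta_j}^L\rangle}$ can be controlled using the Hilbert--Schmidt bounds implied by the trace-class property of $\sqrt V\,e^{-sH_L^{(\prime)}}\sqrt V$ that follows from \eqref{eq:assumption:V_V0} via \cite[Thm.~6.1]{MR1756112}. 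Together with the factor $t^{2n-1}(\abs{u}_1+\abs{v}_1)e^{-\abs{u}_1-\abs{v}_1}$, which is integrable near both endpoints in $t$, $u$, $v$, this furnishes absolute summability/integrability.

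Once the interchange is justified, each eigenvalue sum reassembles spectrally: the $\alpha_j$-sum produces the operator $f(H_L)e^{t(u_j+v_{j-1})(H_L-E)}$ and the $\beta_j$-sum produces $g(H_L')e^{-t(u_j+v_j)(H_L'-E)}$. Splitting $V=\sqrt V\,\sqrt V$ between adjacent blocks, where $\sqrt V\ge 0$ exists since $V\ge 0$, and using cyclicity of the trace to move one $\sqrt V$ from the end of the cyclic product to its front yields the symmetric form stated in~\eqref{eq:Feynman_integrated_trace}, completing the argument.
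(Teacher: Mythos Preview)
Your overall strategy matches the paper's proof exactly: start from Lemma~\ref{lemma:trace_power_projections_as_sum}, apply the Feynman--Schwinger formula~\eqref{eq:feynman_parametrization} to the $2n$ positive denominators, regroup the exponent (including the telescoping insertion of $E$), interchange sum and integral, and reassemble the eigenfunction expansions into the operator trace.

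The one place you diverge is the interchange step, and here you over-engineer. You worry about absolute summability of an infinite double sum and invoke Hilbert--Schmidt bounds from \cite[Thm.~6.1]{MR1756112}, but none of this is needed: because $H_L$ and $H_L'$ have purely discrete spectra and $f,g$ have compact support, only \emph{finitely many} eigenvalues $\lambda_{\alpha_j}^L$ and $\mu_{\beta_j}^L$ contribute. The $\alpha,\beta$-sums in~\eqref{eq:trace_power_as_sum} are therefore finite sums, and the interchange with the $(t,u,v)$-integral is trivial. This is precisely what the paper does.

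Incidentally, your sketched majorant has a gap as written: you say the spectral factors $f(\lambda)e^{t(u_j+v_{j-1})(\lambda-E)}$ and $g(\mu)e^{-t(u_j+v_j)(\mu-E)}$ are merely bounded uniformly, and then claim $t^{2n-1}(\abs{u}_1+\abs{v}_1)e^{-\abs{u}_1-\abs{v}_1}$ is integrable in $t$ on its own---but $t^{2n-1}$ is of course not integrable at $t=\infty$. One would need to retain the exponential decay coming from $\dist(\spt g,E)>0$ (which is positive since $\spt g$ is a compact subset of the open interval $(E,\infty)$). The finiteness-of-the-sum argument bypasses this entirely.
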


\begin{proof}
    Let $x\in (-\infty,0]^n$, $y\in(0,\infty)^n$ and define
    $x_{n+1} := x_1$. Then, by \eqref{eq:feynman_parametrization},
    \begin{multline}
      \frac{1}{\prod_{j=1}^n (y_j - x_j)(y_j - x_{j+1})}
      \\ \qquad =
    \int_0^\infty\dt\, t^{2n-1}
    \int_{(0,\infty)^n\times
      (0,\infty)^n}\d(u,v)\, (\abs{u}_1+\abs{v}_1)
      e^{-\abs{u}_1-\abs{v}_1}\\ \times
    \exp\biggparens{ -t\sum_{j=1}^n  \big( u_j(y_j-x_j) + v_j(y_j-x_{j+1})\big)}
    \end{multline}
    and
    \begin{equation}
      \sum_{j=1}^n \big( u_j(y_j-x_j) + v_j(y_j-x_{j+1}) \big)
      =
      \sum_{j=1}^n\bigparens{(u_j+v_j)y_j - (u_j+v_{j-1})x_j}
    \end{equation}
    for $u,v\in (0,\infty)^n$. Now, let $\alpha,\beta\in\N^n$. Setting $x_j =
    \lambda_{\alpha_j}^L - E$ and $y_j = \mu_{\beta_j}^L - E$, we can write the
    denominator in \eqref{eq:trace_power_as_sum} as
    \begin{multline} \label{eq:ev_denominator_using_feynman}
      \frac{1}{\prod_{j=1}^n
        (\mu_{\beta_j}^L - \lambda_{\alpha_j}^L)(\mu_{\beta_j}^L-\lambda_{\alpha_{j+1}}^L)}
      \\ \qquad =
      \int_0^\infty\dt\, t^{2n-1}
      \int_{(0,\infty)^n\times
      (0,\infty)^n}\d(u,v)\, (\abs{u}_1+\abs{v}_1)
      e^{-\abs{u}_1-\abs{v}_1}
      \\ \times
      \prod_{j=1}^n e^{-(u_j+v_j)t(\mu_{\beta_j}^L - E)}
                   e^{(u_j+v_{j-1})t(\lambda_{\alpha_j}^L - E)}.
    \end{multline}
    The sums over $\alpha$ and $\beta$ in
    \eqref{eq:trace_power_as_sum} contain only finitely many terms,
    due to the compact supports of $f$ and~$g$. Therefore these sums
    can be interchanged with the integrals
    from~\eqref{eq:ev_denominator_using_feynman}. This results in
    \begin{multline}
      \tr\bigbraces{\bigparens{f(H_L)g(H_L')}^n} \\
      =
      \int_0^\infty\dt\, t^{2n-1}
      \int_{(0,\infty)^n\times (0,\infty)^n}\d(u,v)\, (\abs{u}_1+\abs{v}_1)
      e^{-\abs{u}_1-\abs{v}_1} \qquad\quad \\
    \times\sum_{\alpha,\beta\in\N^n} \prod_{j=1}^n
    \Bigl(
        f(\lambda_{\alpha_j}^L)e^{(u_j+v_{j-1})t(\lambda_{\alpha_j}^L - E)}
        g(\mu_{\beta_j}^L)e^{-(u_j+v_j)t(\mu_{\beta_j}^L - E)} \\ \qquad
        \times\<\varphi_{\alpha_j}^L, V\psi_{\beta_j}^L\>
        \<\psi_{\beta_j}^L,V\varphi_{\alpha_{j+1}}^L\>\Bigr),
    \end{multline}
    from which the assertion follows.
\end{proof}

\subsection{Smoothing and infinite-volume operators}
\label{ssec:smoothing}

Throughout this subsection, $a\in(0,1)$ and $n\in\N$ are fixed. We
also fix a cut-off energy $E_0\ge 1$ and a Fermi energy
$E\in [-E_0+1,E_0 -1]$.

The goal is to apply Lemma~\ref{lemma:trace_using_feynman} using
suitable functions $f$ and $g$ and to rewrite the right-hand side of
\eqref{eq:Feynman_integrated_trace} as a trace involving the
infinite-volume operators $H$ and $H'$.
Switching from finite-volume to infinite-volume operators constitutes
the core of the argument.
The technical tool to implement this switch to infinite-volume
objects is the Helffer--Sj\"ostrand formula, which supplies the proof
of Lemma~\ref{lemma:hs} below.
Since it is applicable to sufficiently smooth functions only, we define
appropriately smoothed versions of indicator functions.

\begin{definition} \label{def:smooth_cutoff}
  Given a length $L>1$, we say that $\chi_L^\pm \in C_c^\infty(\R)$
  are \emph{smooth cut-off functions at energy $E$}, if they obey
  \begin{equation}
    \begin{split}
      1_{[E + 2L^{-a}, E_0]} & \le \chi_L^+ \le 1_{(E+ L^{-a} , E_0+1)}, \\
      1_{[-E_0,E - 2L^{-a}]} & \le \chi_L^- \le 1_{(-E_0-1,E-L^{-a})}
    \end{split}
  \end{equation}
  and if there exist $L$-independent constants $c_k > 0$
  for $k\in\N_0$, such that
  \begin{equation} \label{def:chi_L2}
    \chi_L^\pm (E \pm L^{-a} \pm x)\le c_0 L^a \,x
  \end{equation}
  for all $x\in[0, L^{-a})$ and
  \begin{equation} \label{def:chi_L3}
    \biggabs{
      \frac{\d^k}{\d x^k} \chi^\pm_L(E \pm L^{-a} \pm x)}
    \le
    \begin{cases}
      c_k L^{ak} & \text{if } 0\le x < L^{-a}, \\
      c_k & \text{otherwise}
    \end{cases}
  \end{equation}
  for every $k\in\N$ and $x\in\R$.
  We choose the smooth decay of $\chi_L^+$ in $[E_0,E_0+1]$
  independently of $L$, and analogously for $\chi_L^-$.
	Clearly such functions exist.
	Fig.~\ref{fig:chiLpm} illustrates the behaviour of $\chi_L^\pm$.
\end{definition}

\begin{figure}
  {\centering  \includegraphics[width=\textwidth]{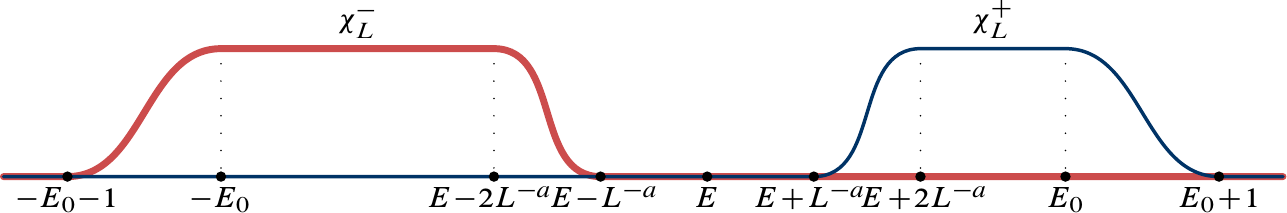}}
  \caption{Sketch of the smooth cut-off functions $\chi_L^\pm$}
  \label{fig:chiLpm}
\end{figure}

We are interested in a lower bound for the left-hand side
of~\eqref{eq:nth-term} which is proportional to $\ln L$ up to
subdominant corrections.

\begin{lemma} \label{lemma:trace_bound_feynman_finite}
  Let $L > 1$. Then
  \begin{align} \label{eq:trace_bound_feynman_finite}
    &\tr\bigbraces{\bigparens{\1_{(-\infty,E]}(H_L)\1_{(E,\infty)}(H_L')}^n} \notag\\
    &\ge
      \int_{(0,\infty)^n\times (0,\infty)^n}\d(u,v)\, (\abs{u}_1+\abs{v}_1)
    e^{-\abs{u}_1-\abs{v}_1}
    \int_0^\infty\dt\, t^{2n-1} \notag \\
    & \;\;\times\,\tr\bigg\{\prod_{j=1}^n
        \sqrt{V}\chi_L^-(H_L) e^{(u_j + v_{j-1})t(H_L-E)}V
                \chi_L^+(H_L') e^{-(u_j + v_j)t(H_L'-E)}\sqrt{V}\bigg\}. 
  \end{align}
\end{lemma}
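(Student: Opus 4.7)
The plan is first to apply Lemma~\ref{lemma:trace_using_feynman} with $f := \chi_L^-$ and $g := \chi_L^+$. By Definition~\ref{def:smooth_cutoff} these functions lie in $C_c^\infty(\R)$, take values in $[0,1]$, and have supports contained in $(-\infty,E]$ and $(E,\infty)$ respectively, so the hypotheses of the lemma are satisfied. The lemma then identifies the right-hand side of~\eqref{eq:trace_bound_feynman_finite} with $\tr\bigbraces{\bigparens{\chi_L^-(H_L)\chi_L^+(H_L')}^n}$ (possibly after swapping the order of the $t$- and $(u,v)$-integrations, which is legal by Fubini--Tonelli since all integrands are non-negative). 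It therefore suffices to establish the trace monotonicity bound
\[
\tr\bigbraces{\bigparens{\chi_L^-(H_L)\chi_L^+(H_L')}^n} \le \tr\bigbraces{\bigparens{\1_{(-\infty,E]}(H_L)\1_{(E,\infty)}(H_L')}^n}.
\]

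Setting $A := \chi_L^-(H_L)$, $A' := \1_{(-\infty,E]}(H_L)$, $B := \chi_L^+(H_L')$ and $B' := \1_{(E,\infty)}(H_L')$, the pointwise inequalities in Definition~\ref{def:smooth_cutoff} together with the functional calculus yield $0 \le A \le A'$ and $0 \le B \le B'$ on $L^2(\Lambda_L)$. The key identity is the cyclicity formula
\[
\tr\bigbraces{(AB)^n} = \tr\bigbraces{(B^{1/2}AB^{1/2})^n},
\]
which follows by induction from the computation $(B^{1/2}AB^{1/2})^n = B^{1/2}(AB)^{n-1}AB^{1/2}$ combined with cyclic invariance of the trace. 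Since $A \le A'$ implies $B^{1/2}AB^{1/2} \le B^{1/2}A'B^{1/2}$ as positive compact operators, Weyl's monotonicity principle gives $\lambda_j(B^{1/2}AB^{1/2}) \le \lambda_j(B^{1/2}A'B^{1/2})$ for every $j \in \N$, and raising to the $n$-th power and summing over $j$ yields $\tr\bigbraces{(AB)^n} \le \tr\bigbraces{(A'B)^n}$. Applying the very same chain of steps once more, this time writing $\tr\bigbraces{(A'B)^n} = \tr\bigbraces{((A')^{1/2}B(A')^{1/2})^n}$ and exploiting $B \le B'$, one obtains $\tr\bigbraces{(A'B)^n} \le \tr\bigbraces{(A'B')^n}$, which completes the argument.

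All trace expressions are unambiguous: because $\chi_L^\pm$ have compact support and the trace-class semigroup assumption guarantees that $H_L$ and $H_L'$ have purely discrete spectra with only finitely many eigenvalues in any bounded interval, $A$, $A'$ and $B$ are of finite rank, and so are the $n$-th powers and symmetrisations appearing above. The main (and essentially only) technical subtlety is that the map $X \mapsto X^n$ is not operator monotone for $n \ge 2$, which forbids a direct operator-level comparison of $(AB)^n$ with $(A'B')^n$; the symmetrisation $AB \mapsto B^{1/2}AB^{1/2}$, followed by the min--max principle, is precisely what circumvents this obstruction.
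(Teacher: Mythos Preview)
Your argument is correct and follows the paper's route---first establishing the monotonicity $\tr\bigbraces{\bigparens{\chi_L^-(H_L)\chi_L^+(H_L')}^n} \le \tr\bigbraces{\bigparens{\1_{(-\infty,E]}(H_L)\1_{(E,\infty)}(H_L')}^n}$, then invoking Lemma~\ref{lemma:trace_using_feynman}---with the monotonicity step spelled out via symmetrisation and min--max where the paper simply appeals to ``cyclicity of the trace''. One small caveat: the swap of the $t$- and $(u,v)$-integrations is not justified by Tonelli (for $n\ge 2$ the pointwise trace of a product of $2n$ distinct non-negative operators need not be non-negative), but Fubini still applies because the trace is a \emph{finite} eigenvalue sum whose individual terms are absolutely integrable by the Feynman--Schwinger identity.
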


\begin{proof}
    The inequalities
    \begin{equation} \label{eq:indicator_chi_smaller}
      \1_{(-\infty,E]} \ge \chi_L^-
        \quad \text{and} \quad
        \1_{(E,\infty)} \ge \chi_L^+,
    \end{equation}
    together with the cyclicity of the trace, imply
    \begin{equation}
      \tr\bigbraces{\bigparens{\1_{(-\infty,E]}(H_L)\1_{(E,\infty)}(H_L')}^n}
        \ge
        \tr\bigbraces{\bigparens{\chi_L^-(H_L)\chi_L^+(H_L')}^n}.
    \end{equation}
    Together with Lemma~\ref{lemma:trace_using_feynman}, this yields
    the claim.
\end{proof}

\begin{remark}
  In the sequel, we determine the exact asymptotics of the right-hand
  side of~\eqref{eq:trace_bound_feynman_finite}. Thus it is only the
  smoothing introduced in Lemma~\ref{lemma:trace_bound_feynman_finite}
  which prevents us from determining the exact asymptotics in
  Theorem~\ref{theorem:nth-term}.
\end{remark}

The following technical lemma constitutes the core of the arguments in the
present subsection.

\begin{lemma} \label{lemma:hs}
  For $L > 1$, $t \ge 0$ and $x\in\R$ we define
  \begin{equation} \label{eq:fLt_gLt}
    f_L^t(x)
    :=
    \chi_L^-(x) e^{t(x-E)}
    \quad \text{and} \quad
    g_L^t(x)
    :=
    \chi_L^+(x) e^{-t(x-E)},
  \end{equation}
  and $h_L^t$ will stand for either $f_L^t$ or $g_L^t$.
  Let $\varepsilon\in(0, 1 - a)$ and $M\in\N \setminus\{1\}$.
  Then there are constants $c > 0$, $L_0 > 1$ and
  a polynomial $Q_M$ of degree $M+1$ with non-negative coefficients,
  such that for every $t\ge 0$, every $L \ge L_0$ and every
  $\varepsilon\in(0,1-a)$ the estimate
  \begin{multline}
    \bignorm{\sqrt{V}\bigparens{h_L^t(H_L^{(\prime)}) - h_L^t(H^{(\prime)})}\sqrt{V}}
    \\ \le Q_M(t/L^a) \bigparens{L^{a - M(1 - a - \varepsilon)}
      + L^{d+a(M+1)} e^{-c L^\varepsilon}}
  \end{multline}
  holds.
  \begin{proof}
    The proof essentially follows the ideas of a part of the proof in
    \cite[Lemma~3.14]{GKM}. The idea is to apply the
    Helffer--Sj\"ostrand formula to estimate the difference of
    resolvents, cf. formulae (4.13) and (4.16) in \cite{GKM}. For a
    detailed exposition, see \cite{Heiner}.
  \end{proof}
\end{lemma}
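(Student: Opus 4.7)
The plan is to adapt the Helffer--Sj\"ostrand functional calculus combined with a Combes--Thomas / geometric resolvent estimate, as was done in \cite[Lem.~3.14]{GKM}. The main technical work lies in the interplay between the sharp transition scale $L^{-a}$ of $\chi_L^\pm$, the exponential factor $\e^{\pm t(x-E)}$, and the geometric separation between $\spt V$ and $\partial \Lambda_L$.

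First I would construct an almost-analytic extension $\tilde{h}_L^t \in C_c^\infty(\C)$ of order $M$ obeying the pointwise bound $|\bar\partial \tilde{h}_L^t(z)| \le C\, |\Im z|^M\, \max_{k\le M+1}|h_L^{t,(k)}(\Re z)|$ on a strip of bounded width. Differentiating $h_L^t(x) = \chi_L^\pm(x)\, \e^{\pm t(x-E)}$ via the Leibniz rule and using \eqref{def:chi_L3}, one obtains $|h_L^{t,(k)}(x)| \le C \sum_{j=0}^{k} L^{aj} t^{k-j}$ on the support of $\chi_L^\pm$, which lies in the fixed compact set $[-E_0-1,E_0+1]$. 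After integration in $x$ across the transition region (of width $L^{-a}$), these derivative bounds reorganise into a polynomial $Q_M(t/L^a)$ of degree $M+1$ with non-negative coefficients.

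Next, applying the Helffer--Sj\"ostrand formula,
\begin{equation*}
  \sqrt{V}\bigparens{h_L^t(H_L^{(\prime)}) - h_L^t(H^{(\prime)})}\sqrt{V}
  = -\frac{1}{\pi}\int_\C \bar\partial\tilde{h}_L^t(z)\, \sqrt{V}\bigbrackets{(z-H_L^{(\prime)})^{-1} - (z-H^{(\prime)})^{-1}}\sqrt{V}\, \d x\,\d y,
\end{equation*}
reduces the problem to controlling the resolvent difference sandwiched by $\sqrt{V}$. Since $\spt V\subseteq\Lambda_1$ lies at distance of order $L$ from $\partial\Lambda_L$, a geometric resolvent identity together with a Combes--Thomas estimate should yield a bound of the form $\bignorm{\sqrt{V}\bigbrackets{(z-H_L^{(\prime)})^{-1} - (z-H^{(\prime)})^{-1}}\sqrt{V}} \le C|\Im z|^{-2}\e^{-c|\Im z|\, L^\varepsilon}$ once $|\Im z|$ exceeds an appropriate threshold, together with the trivial bound $C|\Im z|^{-1}$ times an $L^d$ factor (from a crude trace-norm estimate on the individual resolvents) when $z$ is closer to the real axis.

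Finally, I would split the $z$-integration at the scale $|\Im z| \sim L^{-(1-a-\varepsilon)}$. The far region, where Combes--Thomas applies, produces the second term $L^{d+a(M+1)}\e^{-cL^\varepsilon}$; the near-real strip of thickness $L^{-(1-a-\varepsilon)}$ is controlled by pairing the $|\Im z|^M$ factor in $\bar\partial\tilde h_L^t$ against the trivial resolvent bound, producing the first term $L^{a - M(1-a-\varepsilon)}$, while all $t$-dependence is absorbed into $Q_M(t/L^a)$. I anticipate the main obstacle to be precisely the near-real estimate: the derivative blow-up $L^{aM}$ of $\chi_L^\pm$ must be compensated by the $|\Im z|^M$ gain, so the placement of the threshold and the role of $\varepsilon$ are delicate, and it is the interplay with $\e^{\pm t(x-E)}$ that forces the polynomial dependence $Q_M(t/L^a)$ rather than a simpler bound.
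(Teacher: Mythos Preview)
Your proposal is correct and follows essentially the same route as the paper, which explicitly cites \cite[Lemma~3.14]{GKM} and the Helffer--Sj\"ostrand formula combined with a Combes--Thomas/geometric resolvent estimate for the difference $(z-H_L^{(\prime)})^{-1}-(z-H^{(\prime)})^{-1}$. The structure you outline---Leibniz control of the derivatives of $h_L^t$ producing $Q_M(t/L^a)$, then splitting the $z$-integration into a near-real strip (trivial resolvent bound paired with the $|\Im z|^M$ gain) and a far region (exponential decay from Combes--Thomas)---is exactly the mechanism behind the two terms in the stated bound; only the precise bookkeeping of the $L$-powers (in particular the origin of the $L^d$ factor, which comes from the geometric resolvent step rather than a trace-norm bound) would need to be tightened when writing out the details.
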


Before we prove the main assertion of this subsection, we need a
spectral-gap estimate.
\begin{lemma} \label{lemma:supmeasurefinite}
  There is a constant $C > 0$ such that for every $L > 1$ and every
  $t\ge 0$ we have
  \begin{equation}
    \tr\bigbraces{\sqrt{V}h_L^t(H^{(\prime)}_{(L)})\sqrt{V}}
    \le C e^{-tL^{-a}},
  \end{equation}
  where $h_L^t\in C_c^\infty(\R)$ is as in Lemma~\ref{lemma:hs}.
  \begin{proof}
    For the first assertion, note that there is a bounded interval
    $I\subseteq\R$ such that $h_L^t\le \1_I e^{-tL^{-a}}$ for all
    $t\ge 0$ and $L > 1$.
    Thus,
    \begin{align}
      \tr\bigbraces{\sqrt{V}h_L^t(H^{(\prime)}_{(L)})\sqrt{V}}
      & \le
      e^{-tL^{-a}}\tr\bigbraces{\sqrt{V}1_I(H^{(\prime)}_{(L)})\sqrt{V}} \notag\\
      & \le e^{-tL^{-a}}e^{\sup
      I}\tr\bigbraces{\sqrt{V}e^{-H_{(L)}^{(\prime)}}\sqrt{V}} \notag\\
      & \le e^{-tL^{-a}}e^{\sup I}\tr\bigbraces{\sqrt{V}e^{-H^{(\prime)}}\sqrt{V}}
    \end{align}
    for $t \ge 0$ and $L > 1$. The last inequality and the finiteness
    of $\tr\bigbraces{\sqrt{V}e^{-H^{(\prime)}}\sqrt{V}}$ follow
    from \cite[Thm.~6.1]{MR1756112}.
  \end{proof}
\end{lemma}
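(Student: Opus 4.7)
The plan is to separate the $t$-dependent exponential decay from a time-independent trace that can be controlled uniformly in $L$. First I would exploit the support properties of $\chi_L^\pm$ from Definition~\ref{def:smooth_cutoff}: since $\spt\chi_L^-\subseteq(-E_0-1,\,E-L^{-a})$, any $x$ in this support satisfies $x-E\le -L^{-a}$, and consequently
\begin{equation*}
  0\le f_L^t(x)=\chi_L^-(x)\,e^{t(x-E)}\le e^{-tL^{-a}}\,\1_I(x),
\end{equation*}
where $I:=[-E_0-1,\,E_0+1]$ is a bounded, $L$-independent interval that contains both $\spt\chi_L^\pm$. An analogous argument applies to $g_L^t$, because $\spt\chi_L^+\subseteq(E+L^{-a},\,E_0+1)$ and $-t(x-E)\le -tL^{-a}$ there. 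Thus in either case $0\le h_L^t\le e^{-tL^{-a}}\,\1_I$ pointwise on $\R$.

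Second, since $h_L^t\ge 0$, the spectral theorem together with the monotonicity of the trace on positive operators yields
\begin{equation*}
  \tr\bigbraces{\sqrt{V}\,h_L^t(H^{(\prime)}_{(L)})\,\sqrt{V}}
  \le e^{-tL^{-a}}\,\tr\bigbraces{\sqrt{V}\,\1_I(H^{(\prime)}_{(L)})\,\sqrt{V}}.
\end{equation*}
To eliminate the indicator in favour of a trace-class semigroup I would use the pointwise estimate $\1_I(x)\le e^{\sup I}\,e^{-x}$ and apply the functional calculus once more; this bounds the remaining trace by $e^{\sup I}\,\tr\bigbraces{\sqrt{V}\,e^{-H^{(\prime)}_{(L)}}\,\sqrt{V}}$. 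The whole problem therefore reduces to an $L$-independent bound on this semigroup trace.

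The one place uniformity in $L$ is non-trivial, and what I expect to be the main technical point, is dominating the finite-volume object by its infinite-volume counterpart. Since $H_L^{(\prime)}$ carries Dirichlet boundary conditions, Feynman--Kac with path-restriction to $\Lambda_L$ gives the pointwise kernel inequality $0\le e^{-H_L^{(\prime)}}(x,y)\le e^{-H^{(\prime)}}(x,y)$, whence
\begin{equation*}
  \tr\bigbraces{\sqrt{V}\,e^{-H^{(\prime)}_{L}}\,\sqrt{V}}
  \le \tr\bigbraces{\sqrt{V}\,e^{-H^{(\prime)}}\,\sqrt{V}}.
\end{equation*}
The right-hand side is finite by the Kato-class hypotheses on $V_0$ together with the boundedness and compact support of $V$, cf.\ \cite[Thm.~6.1]{MR1756112}. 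Combining the three estimates, the claim follows with $C:=e^{\sup I}\,\tr\bigbraces{\sqrt{V}\,e^{-H^{(\prime)}}\,\sqrt{V}}$, which is independent of both $L$ and $t$.
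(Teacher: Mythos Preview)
Your proof is correct and follows essentially the same route as the paper's: bound $h_L^t\le e^{-tL^{-a}}\1_I$ from the support properties of $\chi_L^\pm$, then replace $\1_I$ by $e^{\sup I}e^{-x}$, and finally dominate the finite-volume semigroup trace by the infinite-volume one. The only difference is cosmetic---you spell out the Feynman--Kac kernel domination explicitly, whereas the paper absorbs both that step and the finiteness of $\tr\{\sqrt{V}e^{-H^{(\prime)}}\sqrt{V}\}$ into the single citation \cite[Thm.~6.1]{MR1756112}.
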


The next lemma accomplishes the transition from finite-volume
to infinite-volume operators.
\begin{lemma} \label{lemma:hs_application}
  For $L > 1$ and $t \ge 0$, let $f_L^t,g_L^t\in C_c^\infty(\R)$ as in
  Lemma~\ref{lemma:hs}. Let $u,v\in(0,\infty)^n$. Then
  \begin{multline} \label{eq:hs_application}
    \int_0^\infty \dt\; t^{2n-1}
    \tr\biggl\lvert
      \prod_{j=1}^n\sqrt{V}f_L^{(u_j + v_{j-1})t}(H_L)Vg_L^{(u_j + v_j)t}(H_L')\sqrt{V}
      \\ \quad -
      \prod_{j=1}^n\sqrt{V}f_L^{(u_j + v_{j-1})t}(H)Vg_L^{(u_j + v_j)t}(H')\sqrt{V}
    \biggr\rvert
    \\ \quad \le \frac{1}{(\abs{u}_1 + \abs{v}_1)^{2n}} \, \oh(1) \hfill
  \end{multline}
  as $L\to\infty$, where the $\oh(1)$-term does not depend on $u$ or
  $v$. We also used the convention $v_0 := v_n$.
\end{lemma}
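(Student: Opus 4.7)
The strategy is to compare the finite- and infinite-volume operator products factor by factor via a telescoping identity, estimating each single-factor difference in operator norm by Lemma~\ref{lemma:hs} and the remaining factors in trace norm by Lemma~\ref{lemma:supmeasurefinite}.

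I first absorb every intermediate $V$ as $\sqrt{V}\sqrt{V}$, so that the integrand of \eqref{eq:hs_application} becomes a product of $2n$ positive operators $C_1\cdots C_{2n}$ (finite-volume) and $C_1'\cdots C_{2n}'$ (infinite-volume). Here $C_{2j-1}:=\sqrt{V}\,f_L^{(u_j+v_{j-1})t}(H_L)\,\sqrt{V}$, $C_{2j}:=\sqrt{V}\,g_L^{(u_j+v_j)t}(H_L')\,\sqrt{V}$, and $C_k'$ is obtained from $C_k$ by replacing the finite-volume operator by its infinite-volume counterpart. Positivity is clear from $\chi_L^\pm\ge 0$, and writing $t_k$ for the exponent inside the $k$th factor, Lemma~\ref{lemma:supmeasurefinite} combined with positivity gives $\|C_k^{(\prime)}\|_{\text{op}}\le\|C_k^{(\prime)}\|_1=\tr C_k^{(\prime)}\le C\,e^{-t_k L^{-a}}$. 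A direct count yields $\sum_{k=1}^{2n}t_k=2t(\abs{u}_1+\abs{v}_1)$ and $t_k\le(\abs{u}_1+\abs{v}_1)t$ for every $k$, hence $\sum_{j\ne k}t_j\ge(\abs{u}_1+\abs{v}_1)t$ uniformly in $k$ and in $(u,v)$.

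Next I apply the telescoping identity $\prod_k C_k-\prod_k C_k'=\sum_{k=1}^{2n}C_1\cdots C_{k-1}(C_k-C_k')C_{k+1}'\cdots C_{2n}'$ and estimate each summand in trace norm by H\"older's inequality, placing Lemma~\ref{lemma:hs}'s bound $\|C_k-C_k'\|_{\text{op}}\le Q_M(t_k/L^a)\,R_L(M,\varepsilon)$ on the difference (with $R_L(M,\varepsilon):=L^{a-M(1-a-\varepsilon)}+L^{d+a(M+1)}e^{-cL^\varepsilon}$), a trace-norm bound on one of the $2n-1$ surviving factors, and operator-norm bounds on the other $2n-2$. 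The uniform inequality above majorises the product of surviving exponentials by $e^{-L^{-a}(\abs{u}_1+\abs{v}_1)t}$. Integrating against $t^{2n-1}\,\d t$, bounding $Q_M(t_k/L^a)\le Q_M((\abs{u}_1+\abs{v}_1)t/L^a)$, and substituting $s=L^{-a}(\abs{u}_1+\abs{v}_1)t$ extracts a prefactor $(\abs{u}_1+\abs{v}_1)^{-2n}L^{2na}$ times a universal Gamma-function integral depending only on $n$ and $M$. Summing the $2n$ telescope contributions yields the overall bound
\[
  (\abs{u}_1+\abs{v}_1)^{-2n}\bigl(L^{(2n+1)a-M(1-a-\varepsilon)}+L^{d+(2n+M+1)a}e^{-cL^\varepsilon}\bigr).
\]

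Choosing $M$ so large that $M(1-a-\varepsilon)>(2n+1)a$, which is possible since $1-a-\varepsilon>0$, renders the first summand $o(1)$; the second is exponentially small in $L$. Both bounds are independent of $(u,v)$, proving \eqref{eq:hs_application}. The principal obstacle is the bookkeeping of the $2n$ exponential decays $e^{-t_k L^{-a}}$: one must be spent on the operator-norm difference $C_k-C_k'$, one on the trace-norm bound for a survivor, and the remaining $2n-2$ must still simultaneously absorb the polynomial growth $Q_M(t_k/L^a)$ from Lemma~\ref{lemma:hs} and produce the correct $(\abs{u}_1+\abs{v}_1)^{-2n}$ scaling after integration. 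The arithmetic identity $\sum_k t_k=2t(\abs{u}_1+\abs{v}_1)$ together with $t_k\le(\abs{u}_1+\abs{v}_1)t$ is the essential input that makes this accounting work.
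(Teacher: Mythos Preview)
Your proof is correct and follows essentially the same route as the paper's: the same telescoping decomposition of the product, the same use of Lemma~\ref{lemma:hs} on the single difference factor and Lemma~\ref{lemma:supmeasurefinite} on the remaining $2n-1$ factors, the same arithmetic $\sum_k t_k = 2t(\abs{u}_1+\abs{v}_1)$ with $t_k\le (\abs{u}_1+\abs{v}_1)t$, and the same final choice of $M$. The only cosmetic differences are that the paper telescopes with the infinite-volume factors on the left and the finite-volume ones on the right (the reverse of yours, which is immaterial since Lemma~\ref{lemma:supmeasurefinite} covers both), and that the paper integrates the polynomial $Q_M$ term by term rather than first bounding $Q_M(t_k/L^a)\le Q_M((\abs{u}_1+\abs{v}_1)t/L^a)$ and then substituting; both lead to the identical final estimate $L^{(2n+1)a-M(1-a-\varepsilon)}+L^{d+(2n+M+1)a}e^{-cL^\varepsilon}$ times $(\abs{u}_1+\abs{v}_1)^{-2n}$.
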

\begin{proof}
  To shorten formulas, we introduce a vector $\alpha\in(0,\infty)^{2n}$ via
  \begin{equation}\label{eq:alpha-2n}
    \alpha_{2j-1} := u_j + v_{j-1} \quad\text{and}\quad
    \alpha_{2j} := u_j + v_j
  \end{equation}
  for $1\le j\le n$ and operators
  \begin{equation}
    A^{(L)}_k
    :=
    \begin{cases}
      \sqrt{V}f_L^{\alpha_k t}(H_{(L)})\sqrt{V} & \text{for $k$ odd,} \\
      \sqrt{V}g_L^{\alpha_k t}(H_{(L)}')\sqrt{V} & \text{for $k$ even}
    \end{cases}
  \end{equation}
  for $1\le k\le 2n$. The difference of operator products
  in~\eqref{eq:hs_application} is then
  \begin{equation} \label{eq:difference-of-operator-products}
    \prod_{j=1}^{2n} A_j^L - \prod_{j=1}^{2n} A_j
    =
    \sum_{k=1}^{2n} A_1\dotsm A_{k-1}(A^L_k - A_k) A_{k+1}^L\dotsm A^L_{2n}.
  \end{equation}
  The trace norm of this difference can be estimated using
  Lemma~\ref{lemma:supmeasurefinite}: There is a constant $C > 0$ such
  that
  \begin{align}
    \tr\biggabs{
      \prod_{j=1}^{2n} A_j^L - \prod_{j=1}^{2n} A_j
    }
    & \le
    \sum_{k=1}^{2n} \norm{A_k^L - A_k}
    \biggparens{\prod_{j=1}^{k-1}\tr\abs{A_j}}
    \biggparens{\prod_{j=k+1}^{2n}\tr\abs{A_j^L}} \notag\\
    & \le
    C^{2n-1}\sum_{k=1}^{2n} \norm{A_k^L - A_k}
    e^{-(\abs{\alpha}_1-\alpha_k)t L^{-a}},
  \end{align}
  where $\abs{\alpha}_1 = \alpha_1 + \dotsb + \alpha_{2n}$ denotes the
  1-norm of $\alpha\in(0,\infty)^{2n}$. We estimate the $k$th term in
  this sum. Let $\varepsilon\in(0,1 - a)$ and $M\in\N$. For $L$
  sufficiently large, Lemma~\ref{lemma:hs} implies
  \begin{multline} \label{eq:hs-lemma-applied}
    \norm{A_k^L - A_k} e^{-(\abs{\alpha}_1-\alpha_k)t L^{-a}}
    \\\le
    Q_M(\alpha_k t/L^a) \bigparens{L^{a - M(1 - a - \varepsilon)}
      + L^{d+a(M+1)} e^{-c L^\varepsilon}} e^{-(\abs{\alpha}_1-\alpha_k)tL^{-a}},
  \end{multline}
  where $Q_M(x) = \sum_{\ell=0}^{M+1} q_\ell x^\ell$ is the polynomial in
  Lemma~\ref{lemma:hs} with non-negative coefficients $q_\ell$.
  Integrating \eqref{eq:hs-lemma-applied} yields
  \begin{align} \label{eq:integrated-hs-lemma-applied}
    \int_0^\infty \dt\, & t^{2n-1} \norm{A_k^L - A_k}
    e^{-(\abs{\alpha}_1-\alpha_k)t L^{-a}}
    \notag\\
    &\le \bigparens{L^{a - M(1 - a - \varepsilon)}
            + L^{d+a(M+1)} e^{-c L^\varepsilon}} \notag\\
            & \qquad \times\, \sum_{\ell=0}^{M+1} q_\ell \int_0^\infty\!\dt\, t^{2n-1}
          e^{-(\abs{\alpha}_1-\alpha_k)tL^{-a}}
          \frac{\alpha_k^\ell t^\ell}{L^{a\ell}}
    \notag\\
    &=
      \bigparens{L^{a - M(1 - a - \varepsilon)} + L^{d+a(M+1)} e^{-c
        L^\varepsilon}} \notag \\
    & \qquad \times\,     
      \sum_{\ell=0}^{M+1} \frac{q_\ell\Gamma(2n+\ell) \alpha_k^\ell L^{a(2n+\ell)}}%
        {L^{a\ell}(\abs{\alpha}_1-\alpha_k)^{2n+\ell}},
  \end{align}
  where $\Gamma$ denotes Euler's Gamma Function.
  The definition of $\alpha\in(0,\infty)^{2n}$ in \eqref{eq:alpha-2n} yields
  $\abs{\alpha}_1 = 2(\abs{u}_1 + \abs{v}_1)$, and
  thus $\abs{\alpha}_1-\alpha_k\ge \abs{u}_1 +
  \abs{v}_1\ge \alpha_k$. This makes the right-hand side
  of~\eqref{eq:integrated-hs-lemma-applied} smaller than
  \begin{equation}
    C_M \frac{L^{2na}}{(\abs{u}_1 + \abs{v}_1)^{2n}}
    \bigparens{L^{a - M(1 - a - \varepsilon)} + L^{d+a(M+1)} e^{-c L^\varepsilon}}
  \end{equation}
  with some constant $C_M > 0$ depending on $Q_M$ and $n$.
  For given $\varepsilon < 1 - a$, we can choose
  $M$ large enough for the $L$-terms to vanish as $L\to\infty$.
\end{proof}

Using Lemma~\ref{lemma:hs_application}, we can rewrite the right-hand
side of \eqref{eq:trace_bound_feynman_finite}.
\begin{corollary} \label{corollary:trace_bound_feynman_infinite}
  The estimate
  \begin{align} \label{eq:trace_bound_feynman_infinite}
  \tr & \bigbraces{\bigparens{ \1_{(-\infty,E]}(H_L)\1_{(E,\infty)}(H_L')}^n} \notag\\
    &
    \ge
    \int_{(0,\infty)^n\times (0,\infty)^n}\d(u,v)\, (\abs{u}_1+\abs{v}_1)
      e^{-\abs{u}_1-\abs{v}_1}
    \int_0^\infty\dt\, t^{2n-1} \notag\\
    &
    \qquad\times \tr\bigg\{\prod_{j=1}^n
        \sqrt{V}\chi_L^-(H) e^{(u_j + v_{j-1})t(H-E)}V
                \chi_L^+(H') e^{-(u_j + v_j)t(H'-E)}\sqrt{V}\bigg\} \notag\\
    & \quad + \oh(1)
  \end{align}
  holds as $L\to\infty$.
\end{corollary}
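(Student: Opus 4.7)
The plan is to combine Lemma~\ref{lemma:trace_bound_feynman_finite}, which already provides the desired lower bound but with \emph{finite-volume} operators inside the trace, with the operator-replacement estimate in Lemma~\ref{lemma:hs_application}. Subtracting the infinite-volume version of the integrand from the finite-volume one, it suffices to show that the resulting error, once integrated against the weight $(\abs{u}_1+\abs{v}_1)e^{-\abs{u}_1-\abs{v}_1}\,\d(u,v)$ over $(0,\infty)^n\times(0,\infty)^n$, is of order $\oh(1)$ as $L\to\infty$.

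First I would take absolute values inside the $\d(u,v)$-integral, moving them past the non-negative weight $(\abs{u}_1+\abs{v}_1)e^{-\abs{u}_1-\abs{v}_1}$, and then apply the triangle inequality to pull the trace norm inside the $\dt$-integral so that Lemma~\ref{lemma:hs_application} becomes directly applicable. Lemma~\ref{lemma:hs_application} then gives, uniformly in $(u,v)\in(0,\infty)^{2n}$, the pointwise bound
\begin{equation*}
  \int_0^\infty\dt\, t^{2n-1}\,
  \tr\Bigabs{\prod_{j=1}^n\sqrt{V}f_L^{(u_j+v_{j-1})t}(H_L)Vg_L^{(u_j+v_j)t}(H_L')\sqrt{V} - (\text{same with } H,H')}
  \le \frac{\oh(1)}{(\abs{u}_1+\abs{v}_1)^{2n}},
\end{equation*}
where the $\oh(1)$ is independent of $(u,v)$. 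Multiplying by the weight leaves the integrand
\begin{equation*}
  \frac{(\abs{u}_1+\abs{v}_1)\,e^{-\abs{u}_1-\abs{v}_1}}{(\abs{u}_1+\abs{v}_1)^{2n}}
  =\frac{e^{-\abs{u}_1-\abs{v}_1}}{(\abs{u}_1+\abs{v}_1)^{2n-1}},
\end{equation*}
times a uniform $\oh(1)$, and Fubini--Tonelli allows the swap of $\dt$- and $\d(u,v)$-integrals throughout.

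The only real verification is that this integrand is absolutely integrable on $(0,\infty)^{2n}$. For this I would pass to the coordinates $r:=\abs{u}_1+\abs{v}_1$ and a point on the $(2n-1)$-dimensional simplex $\{\xi\in(0,\infty)^{2n}\st \abs{\xi}_1=1\}$, via the coarea formula already used in \eqref{eq:coarea-formula}. The Jacobian contributes $r^{2n-1}$, which exactly cancels the singularity $r^{-(2n-1)}$ at the origin, reducing the integral to a finite multiple of $\int_0^\infty e^{-r}\dr<\infty$. Hence the weighted integral of the $\oh(1)$ bound is itself $\oh(1)$ as $L\to\infty$.

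Combining this with the lower bound of Lemma~\ref{lemma:trace_bound_feynman_finite} yields exactly \eqref{eq:trace_bound_feynman_infinite}. I expect no serious obstacle beyond the bookkeeping: the heavy lifting is done by Lemma~\ref{lemma:hs_application}, whose key output is precisely the $(\abs{u}_1+\abs{v}_1)^{-2n}$ decay that guarantees integrability against the Feynman--Schwinger weight. The only mild care needed is to ensure that the uniformity in $(u,v)$ claimed in Lemma~\ref{lemma:hs_application} is strong enough to justify pulling the $\oh(1)$ outside the $\d(u,v)$-integral.
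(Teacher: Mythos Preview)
Your proposal is correct and follows essentially the same approach as the paper: combine Lemma~\ref{lemma:trace_bound_feynman_finite} with Lemma~\ref{lemma:hs_application}, and check that the weighted $(u,v)$-integral of $(\abs{u}_1+\abs{v}_1)^{-2n}$ is finite via the coarea formula. The paper in fact evaluates this integral exactly as $\tfrac{1}{(2n-1)!}$, but your argument that the $r^{2n-1}$ Jacobian cancels the $r^{-(2n-1)}$ singularity is all that is needed.
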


\begin{proof}
    The claim follows from Lemma~\ref{lemma:trace_bound_feynman_finite}
    and Lemma~\ref{lemma:hs_application},
    which imply that the integral
    \begin{multline} \label{eq:finite_infinite_comparison}
      \int_{(0,\infty)^n\times (0,\infty)^n}\d(u,v)\, (\abs{u}_1+\abs{v}_1)
      e^{-\abs{u}_1-\abs{v}_1}
      \int_0^\infty\dt\, t^{2n-1}\\
      \quad\times\Biggl(
        \tr\bigg\{\prod_{j=1}^n\sqrt{V}f_L^{(u_j + v_{j-1})t}(H_L)Vg_L^{(u_j
        + v_j)t}(H_L')\sqrt{V}\bigg\} \\ 
       - \tr\bigg\{\prod_{j=1}^n\sqrt{V}f_L^{(u_j + v_{j-1})t}(H)Vg_L^{(u_j
        + v_j)t}(H')\sqrt{V} \bigg\}
      \Biggr)
    \end{multline}
    vanishes in the limit $L\to\infty$, because
    \begin{equation}
      \int_{(0,\infty)^n\times (0,\infty)^n}\d(u,v)\,
      \frac{e^{-\abs{u}_1-\abs{v}_1}}{(\abs{u}_1+\abs{v}_1)^{2n-1}}
      = \frac{1}{(2n-1)!}
      < \infty,
    \end{equation}
    as can be seen from the coarea formula.
\end{proof}

\begin{remark}
  Comparing the smooth cut-off functions $\chi_L^\pm$
  with the ones in \cite[Def.~3.13]{GKM}, the difference is
  that the
  cut-off functions there have $E$ as the boundary of their support,
  while the ones here have distance $L^{-a}$ between $E$ and their
  support. To compensate for this, the $t$-integral has been cut off at $t =
  L^{-a}$ in \cite[Lemma~3.11]{GKM}, which yields a lower bound for $n
  = 1$. For $n\ge 2$, it is not immediately clear if the integrand
  in~\eqref{eq:trace_bound_feynman_infinite} is positive, so cutting
  off the integration might not result in a lower bound; this is the
  reason for choosing the cut-off functions different from those in
  \cite{GKM}.
\end{remark}

\subsection{Infinite-volume trace expressions}
\label{ssec:undoing-the-smoothing}

Throughout this subsection, we fix $a\in(0,1)$, \ $n\in\N$ and a
cut-off energy $E_0\ge 1$.

In Corollary~\ref{corollary:trace_bound_feynman_infinite},
we gave a lower bound on the $n$th term of
\eqref{eq:log_overlap_series} in which only infinite-volume operators
occur. In order to control the errors in that step, it was necessary
to introduce smoothed versions of indicator functions in
\eqref{eq:indicator_chi_smaller}. In the present subsection, our aim is
to replace these smoothed functions with discontinuous ones,
which will allow us to determine the asymptotics of the resulting
expression.

We introduce measures
$\mu^1,\nu^1\from\Borel(\R)\to[0,\infty]$ defined by
\begin{align} \label{eq:1D_measures}
  \mu^1(A) & := \tr\big\{\sqrt{V}\1_A(H)\sqrt{V}\big\},
  &&&
  \nu^1(B) & := \tr\big\{\sqrt{V}\1_B(H')\sqrt{V}\big\}
\end{align}
for $A,B\in\Borel(\R)$.  The expressions in \eqref{eq:1D_measures} are
finite for bounded Borel sets as a consequence of
\cite[Thm.~B.9.2]{MR670130}.

The absolutely continuous parts of the measures $\mu^1$ and $\nu^1$
will turn out to be important. To define their densities in an
applicable manner, we use a limiting absorption principle due to
Birman and {\`E}ntina.
\begin{proposition}[\protect{\cite[Lemma 4.3]{BiEn67e}}]
  \label{proposition:Birman}
  There exists a null set $\Nn_0\subset\R$ such that the limits
  \begin{align} \label{eq:BirmanA}
    A(E) & :=
    \lim_{\varepsilon\downto 0}
    \frac{1}{2\varepsilon}\sqrt{V}\1_{(E-\varepsilon,E+\varepsilon)}(H)\sqrt{V},
    \\ \label{eq:BirmanB}
    B(E) & :=
    \lim_{\varepsilon\downto 0}
    \frac{1}{2\varepsilon}\sqrt{V} \1_{(E-\varepsilon,E+\varepsilon)}(H')\sqrt{V}
  \end{align}
  exist in trace class for all $E\in\R\setminus\Nn_0$ and define
  non-negative trace class operators $A(E)$ and $B(E)$.
\end{proposition}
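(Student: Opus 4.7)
The plan is to recognise $B \mapsto \sqrt{V}\1_B(H)\sqrt{V}$, $B \in \Borel(\R)$, as a countably additive trace-class-valued measure $\Phi$ on $\R$ of locally finite total variation, and then to differentiate $\Phi$ with respect to Lebesgue measure in trace norm. The same argument applied to $H'$ produces $B(E)$, so I only discuss $A(E)$.

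First I would verify that $\Phi(B) := \sqrt{V}\1_B(H)\sqrt{V}$ is trace class for every bounded $B \in \Borel(\R)$. Since $\1_B(H) \le e^{\sup B}\,e^{-H}$ for such $B$, cyclicity of the trace together with $\tr\bigbraces{\sqrt{V}e^{-H}\sqrt{V}} < \infty$, which is a consequence of \cite[Thm.~6.1]{MR1756112}, yields the claim. Countable additivity of $\Phi$ in trace norm on pairwise disjoint bounded Borel sets then follows from monotone strong convergence of the spectral projections combined with Gr\"umm's theorem for non-negative operators with convergent traces.

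The decisive step is to extract a pointwise derivative of $\Phi$. I would invoke the Radon--Nikodym property of the trace ideal (which is separable and isometrically the dual of the compact operators on $L^2(\R^d)$, hence a separable dual and thus enjoys the RNP) to decompose $\Phi = \Phi_{\ac} + \Phi_{\mathrm{s}}$ with $\Phi_{\ac}(B) = \int_B A(E)\,\d E$ for a locally Bochner-integrable density $A$ taking values in the trace ideal. The vector-valued Lebesgue differentiation theorem, valid in any Banach space with the RNP, then identifies $A(E)$ as the trace-norm limit $\lim_{\varepsilon \downto 0}(2\varepsilon)^{-1}\Phi\bigparens{(E-\varepsilon, E+\varepsilon)}$ off an exceptional null set $\Nn_0\subset\R$, while a standard Vitali-type covering argument shows that the singular part contributes zero to these averages at almost every $E$. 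Non-negativity of $A(E)$ is then automatic, since $(2\varepsilon)^{-1}\Phi\bigparens{(E-\varepsilon, E+\varepsilon)} \ge 0$ and the cone of non-negative trace-class operators is closed in trace norm.

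The main obstacle I expect is precisely the passage from weak (or weak-$*$) convergence, which follows from comparatively soft spectral-theoretic arguments, to convergence in \emph{trace norm}; this is exactly what the RNP of the trace ideal buys, and there seems to be no shortcut through scalar-valued differentiation theory. An alternative route, closer to the original argument of Birman and \`Entina, would use that $F(z) := \sqrt{V}(H - z)^{-1}\sqrt{V}$ is a trace-class-valued Herglotz function, apply a Fatou-type theorem to extract trace-norm boundary values $F(E + \i 0)$ at a.e.\ $E$, and identify $A(E) = \pi^{-1}\,\Im F(E + \i 0)$ via the Stieltjes inversion formula.
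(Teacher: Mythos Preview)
The paper does not supply its own proof of this proposition; it is quoted directly from Birman and \`Entina \cite[Lemma~4.3]{BiEn67e}. Your sketch is therefore not replicating anything in the paper but filling in an argument the paper omits, and it is sound: the trace ideal on a separable Hilbert space is a separable dual (of the compacts) and hence has the Radon--Nikodym property, so the vector-valued Lebesgue differentiation theorem applies to the positive-operator-valued measure $\Phi$; the singular part is controlled because the total variation of $\Phi$ is precisely the scalar measure $\mu^1 = \tr\Phi$ of \eqref{eq:1D_measures}, reducing that step to the classical Vitali argument.

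Your alternative route via the trace-class Herglotz function $z\mapsto\sqrt{V}(H-z)^{-1}\sqrt{V}$ is the one Birman and \`Entina actually take, and it is also the formulation the paper leans on downstream: the proof of Corollary~\ref{corollary:t-matrix-birman-ops} invokes the resolvent identity $-\pi A(E)=\lim_{\varepsilon\downto 0}\Im\bigparens{\sqrt{V}(E+i\varepsilon-H)^{-1}\sqrt{V}}$ from \cite[Lemma~4.5]{BiEn67e} without further comment. The RNP argument is a clean, self-contained existence proof, but it does not by itself deliver this resolvent representation, so if you adopt it you should note that the identification with the boundary value of the resolvent still needs to be made (e.g.\ via Stieltjes inversion, as you indicate) before the scattering-theoretic computations in Subsection~\ref{ssec:scattering-theory} can proceed.
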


In the next lemma we identify the densities of the absolutely
continuous parts of $\mu^1$ and $\nu^1$. The proof of this lemma
follows directly from the definitions.

\begin{lemma} \label{lemma:densities}
  The functions $E\mapsto\tr A(E)$, respectively $E\mapsto\tr B(E)$,
  are locally integrable Lebesgue densities of the
  absolutely continuous parts of $\mu^1$, respectively $\nu^1$.
\end{lemma}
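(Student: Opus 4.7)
The plan is to combine the Lebesgue differentiation theorem for Radon measures with the trace-norm convergence furnished by Proposition~\ref{proposition:Birman}; the two statements are proved in exactly the same way, so I will only describe the argument for $\mu^1$ and $A(E)$.

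First I would observe that $\mu^1$ is a locally finite Borel measure on $\R$: the remark following \eqref{eq:1D_measures} already guarantees $\mu^1(A) < \infty$ for every bounded Borel set $A$. Write the Lebesgue decomposition $\mu^1 = \mu^1_{\ac} + \mu^1_{\mathrm{sing}}$ with respect to Lebesgue measure on $\R$, and let $\rho$ denote the Radon--Nikodym density of $\mu^1_{\ac}$. The Lebesgue differentiation theorem for locally finite Borel measures on $\R$ then yields
\begin{equation}
\rho(E) = \lim_{\varepsilon \downto 0} \frac{\mu^1((E-\varepsilon, E+\varepsilon))}{2\varepsilon}
\end{equation}
for a.e.\ $E \in \R$.

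Next I would identify this symmetric derivative with $\tr A(E)$. By Proposition~\ref{proposition:Birman}, for every $E \in \R\setminus \Nn_0$ the operators $\frac{1}{2\varepsilon}\sqrt{V}\1_{(E-\varepsilon,E+\varepsilon)}(H)\sqrt{V}$ converge to $A(E)$ in trace norm as $\varepsilon \downto 0$. Since the trace is continuous on the trace class,
\begin{equation}
\tr A(E) = \lim_{\varepsilon \downto 0} \frac{1}{2\varepsilon}\tr\bigbraces{\sqrt{V}\1_{(E-\varepsilon,E+\varepsilon)}(H)\sqrt{V}} = \lim_{\varepsilon \downto 0} \frac{\mu^1((E-\varepsilon, E+\varepsilon))}{2\varepsilon}
\end{equation}
for every such $E$. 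Comparing with the previous display gives $\tr A(E) = \rho(E)$ almost everywhere, so $E\mapsto \tr A(E)$ is a Lebesgue density of $\mu^1_{\ac}$. Local integrability is then immediate from
\begin{equation}
\int_I \tr A(E)\,\d E = \mu^1_{\ac}(I) \le \mu^1(I) < \infty
\end{equation}
for every bounded interval $I \subset \R$. There is really no substantive obstacle in the proof: once the Lebesgue differentiation theorem is combined with the trace-norm statement already recorded in Proposition~\ref{proposition:Birman}, the identification of $\tr A(E)$ with the density of $\mu^1_{\ac}$ (and, by the same reasoning, of $\tr B(E)$ with the density of $\nu^1_{\ac}$) falls out directly.
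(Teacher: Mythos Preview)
Your argument is correct and is precisely the one the paper has in mind: the paper merely states that the lemma ``follows directly from the definitions'' and gives no further detail, so your write-up simply supplies those details by combining the Lebesgue differentiation theorem for locally finite Borel measures with the trace-norm convergence of Proposition~\ref{proposition:Birman}.
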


We will need an auxiliary statement for the main result of this
subsection.

\begin{lemma} \label{lemma:exp_leb_point}
  Let $\mu$ be a locally finite Borel measure on $\R$. Let $c_0 > 0$
  and $0 < \varepsilon < \delta < c_0$. Then for a.e.\ $x_0\in\R$ there
  is a constant $C$, depending on $x_0$, $c_0$ and $\mu$, such
  that for all $t > 0$
  \begin{equation}
    \int_{[x_0,x_0+\delta]}\d\mu(x)\,e^{-t(x-x_0)}
    \le C\frac{1 - e^{-t\delta}}{t}
  \end{equation}
  and
  \begin{align}
    \int_{[x_0+\varepsilon,x_0+\delta]}\d\mu(x)\,e^{-t(x-x_0)}
    & \le
    C e^{-t\varepsilon/2}\frac{1 - e^{-t\delta/2}}{t/2}
    \notag\\ &\le
    C \frac{e^{-t\varepsilon/2}}{t/2}.
  \end{align}
  The exceptional set of values of $x_0$ for which the assertion does
  not hold depends neither on $c_0$, $\varepsilon$ nor $\delta$.
  \begin{proof}
    The constant 
    \begin{equation}
      C
      :=
      \sup_{\eta\in(0,c_0)}\frac{1}{\eta}\mu([x_0,x_0+\eta])
    \end{equation}
    is finite for a.e.\ $x_0\in\R$.
    We compute using Tonelli's theorem
    \begin{align}
      \int_{[x_0,x_0+\delta]} &\d\mu(x)\, e^{-t(x-x_0)}
      \notag\\
      & =
      \int_{[x_0,x_0+\delta]}\d\mu(x)\Bigparens{e^{-t\delta} +
      t\int_x^{x_0+\delta}\d\xi\, e^{-t(\xi-x_0)}} \notag\\
      & =
      \delta e^{-t\delta} \frac{1}{\delta}\mu([x_0,x_0+\delta])
      + t\int_{x_0}^{x_0+\delta}\d\xi\int_{[x_0,\xi]}\d\mu(x)\,
        e^{-t(\xi-x_0)} \notag \\
      & \le C\delta e^{-t\delta} +
      t\int_{x_0}^{x_0+\delta}\d\xi\, e^{-t(\xi-x_0)} \frac{\xi-x_0}{\xi-x_0}\mu([x_0,\xi])
      \notag \\
      & \le C\delta e^{-t\delta} + C
      t\int_0^\delta \d\xi\, \xi e^{-t\xi}
      \notag\\
      & =
      C \frac{1 - e^{-t\delta}}{t}.
    \end{align}
    The second assertion follows from the first one and the bound
    $e^{-t(x-x_0)}\le e^{-t\varepsilon/2}e^{-t(x-x_0)/2}$ for
    $\varepsilon\le x - x_0\le \delta$.
  \end{proof}
\end{lemma}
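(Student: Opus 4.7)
My plan is to extract a single almost-everywhere regularity condition on $\mu$ that depends only on $x_0$ and $\mu$, and then to derive both inequalities from it by purely pointwise estimates in $t$. The natural candidate is a linear-growth control
\[
  \mu([x_0,x_0+\eta]) \le C\,\eta \qquad\text{for all } \eta\in(0,c_0),
\]
with $C=C(x_0,c_0,\mu)$. Once such a $C$ is available, the two displayed inequalities no longer involve measure theory and can be verified by an explicit one-dimensional computation.

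To produce $C$, I would fix once and for all the null set $\Nn_0$ consisting of (i) atoms of $\mu$ (an at most countable set, hence null) and (ii) points at which the upper right Dini derivative of the distribution function $F(y):=\mu((-\infty,y])$ is infinite. By the standard differentiability theorem for monotone functions, (ii) is null as well. For $x_0\notin\Nn_0$ there is $\eta_0=\eta_0(x_0)>0$ such that $\mu([x_0,x_0+\eta])/\eta$ stays bounded as $\eta\downarrow 0$; combined with the monotonicity of $\eta\mapsto\mu([x_0,x_0+\eta])$ and the local finiteness of $\mu$, the supremum over $\eta\in(0,c_0)$ is finite for every $c_0$. Critically, $\Nn_0$ is fixed before we even mention $c_0,\varepsilon,\delta$, which takes care of the parameter-independence clause.

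For the first inequality I would use the identity $e^{-t(x-x_0)} = e^{-t\delta} + t\int_{x-x_0}^{\delta} e^{-ts}\,\upd s$ and apply Tonelli to trade the $\mu$-integration for an integration over the distribution function: after switching order one is left with $e^{-t\delta}\mu([x_0,x_0+\delta]) + t\int_0^{\delta} e^{-ts}\mu([x_0,x_0+s])\,\upd s$. Inserting the linear bound from the previous paragraph reduces the right-hand side to $C\delta e^{-t\delta} + Ct\int_0^\delta s e^{-ts}\,\upd s$, and a direct evaluation of this elementary integral produces exactly $C(1-e^{-t\delta})/t$, as desired.

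The second inequality then follows by the trivial splitting $e^{-t(x-x_0)} \le e^{-t\varepsilon/2}\,e^{-(t/2)(x-x_0)}$ for $x\ge x_0+\varepsilon$, extending the lower integration limit from $x_0+\varepsilon$ back down to $x_0$, and reapplying the first inequality with $t$ replaced by $t/2$. I do not anticipate a deep obstacle in this argument; the only point requiring genuine care is the logical ordering in the first paragraph — one must choose $\Nn_0$ purely from the intrinsic structure of $\mu$ (atoms and Dini-derivative blowups) so that the same full-measure set works simultaneously for all admissible triples $(c_0,\varepsilon,\delta)$, rather than producing a separate null set for each choice of parameters.
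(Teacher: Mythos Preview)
Your proposal is correct and follows essentially the same route as the paper: define $C=\sup_{\eta\in(0,c_0)}\eta^{-1}\mu([x_0,x_0+\eta])$, use the identity $e^{-t(x-x_0)}=e^{-t\delta}+t\int_{x-x_0}^{\delta}e^{-ts}\,\d s$ together with Tonelli, insert the linear bound, and evaluate the resulting elementary integral; the second estimate is obtained in both proofs from the same pointwise splitting of the exponential. The only difference is that you spell out why the exceptional set is parameter-independent (atoms plus blow-up of the right Dini derivative of the distribution function), whereas the paper simply asserts finiteness of $C$ for a.e.\ $x_0$ and leaves to the reader the observation that $\sup_{\eta\in(0,c_0)}\eta^{-1}\mu([x_0,x_0+\eta])=\infty$ forces the same for every $c_0>0$.
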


\begin{definition} \label{definition:u-integral-finite}
  \begin{nummer}
  \item
    For $k\in\N$, we define
      \begin{equation} \label{eq:u-integral-finite}
        I_k := \int_{(0,\infty)^k}\du\,
        \frac{\abs{u}_1e^{-\abs{u}_1}}{\prod_{j=1}^k (u_j + u_{j+1})},
      \end{equation}
      where $u_{k+1} := u_1$ for $u\in\R^k$.
  \item
    We define discontinuous $L$-independent functions
    $\chi^\pm\from\R\to[0,1]$ by
      \begin{equation}
        \chi^- := \max\set{\chi^-_L, \1_{[-E_0,E)}}
        \quad
        \text{and}
        \quad
        \chi^+ := \max\set{\chi^+_L, \1_{(E, E_0]}}.
      \end{equation}
  \end{nummer}
\end{definition}

\begin{remarks} \label{remarks:u-integral-finite}
  \item
    The integral $I_k$ will be discussed further in
  	Subsection~\ref{ssec:u-integral}; in particular, $I_k$ is finite
  	for every $k\in\N$.
  \item 
  	The functions $\chi_L^\pm$ converge pointwise to $\chi^\pm$ as
  	$L\to\infty$.
  	They are obtained from replacing the smooth $L$-dependent part by a
  	discontinuous step at $E$. Fig.~\ref{fig:chiplus} illustrates the
  	behaviour of $\chi^\pm$.
\end{remarks}

\begin{figure}
  {\centering \includegraphics[width=\textwidth]{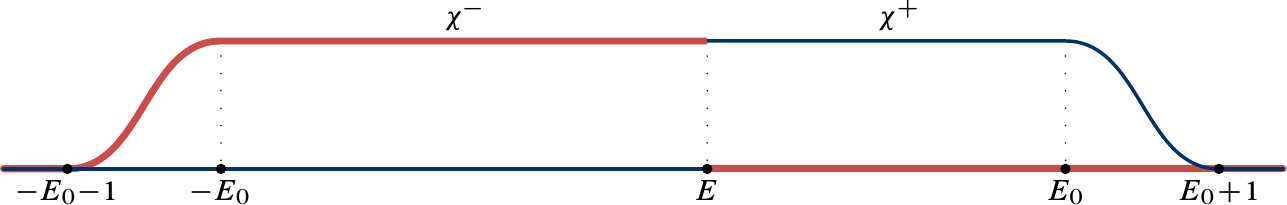}}
  \caption{Sketch of the discontinuous cut-off functions $\chi^\pm$}
  \label{fig:chiplus}
\end{figure}

The following lemma is the main result of the current section.

\begin{lemma} \label{lemma:error2_and_3}
  There is a null set $\Nn\subseteq\R$ which does not depend on $a$, $n$
  and $E_0$,
  such that for every $E\in[-E_0,E_0]\setminus\Nn$,
  \begin{align} \label{eq:chiL_vs_chi}
    \int_0^\infty\dt\, t^{2n-1}
    &\int_{(0,\infty)^n\times 
      (0,\infty)^n}\d(u,v)\, (\abs{u}_1+\abs{v}_1) e^{-\abs{u}_1-\abs{v}_1}
    \notag\\
    & \quad\;\times \!\Biggl[
        \tr\bigg\{
          \prod_{j=1}^n\!
         \sqrt{V} \chi^-_L(H)  e^{(u_j + v_{j-1})t(H-E)} \sqrt{V} \notag\\
        & \hspace*{3.3cm} \times \;      \sqrt{V} \chi^+_L(H') e^{-(u_j + v_j)t(H'-E)} \sqrt{V}\bigg\}
            \notag \\
        & \qquad\quad - e^{-tL^{-a}} \tr\bigg\{
          \prod_{j=1}^n\!
            \sqrt{V} \chi^-(H)  e^{(u_j + v_{j-1})t(H-E)}\sqrt{V} \notag\\
        & \hspace*{3.3cm} \times \;\sqrt{V}\chi^+(H') e^{-(u_j + v_j)t(H'-E)} \sqrt{V}\bigg\}
      \Biggr]
      \notag \\
      & \quad = \Oh(1)
  \end{align}
  as $L\to\infty$, where the $\Oh(1)$-term depends on $a$, $n$, $E$
  and $E_0$.
  \begin{proof}
    First, notice that if $f_j,g_j$ are bounded measurable functions of
    compact support for $1\le j\le n$, then
    \begin{align} \label{eq:product-of-traces}
      \tr \prod_{j=1}^n
        \biggabs{\sqrt{V} f_j(H) V g_j(H') \sqrt{V}}
      & \le \prod_{j=1}^n \tr\Bigbraces{\sqrt{V}f_j(H)\sqrt{V}}
                       \tr\Bigbraces{\sqrt{V}g_j(H')\sqrt{V}} \notag\\
      & =
      \int_{\R^n}\d\mu^n(x)\int_{\R^n}\d\nu^n(y) \prod_{j=1}^n f_j(x_j)g_j(y_j),
    \end{align}
    where we wrote $\mu^n$ and $\nu^n$ for the $n$-fold product
    measure of $\mu^1$ and $\nu^1$, respectively.

    For brevity, let $\delta := L^{-a}$. We introduce a vector
    $\alpha\in(0,\infty)^{2n}$ via
    \begin{equation}
      \alpha_{2j-1} := u_j + v_{j-1} \quad\text{and}\quad
      \alpha_{2j} := u_j + v_j
    \end{equation}
    for $1\le j\le n$ and operators
    \begin{equation}
      A^{(L)}_k
      :=
      \begin{cases}
        \sqrt{V}\chi^-_{(L)}(H)e^{\alpha_kt(H-E)}\sqrt{V} & \text{for $k$ odd,} \\
        \sqrt{V}\chi^+_{(L)}(H')e^{-\alpha_kt(H'-E)}\sqrt{V} & \text{for $k$ even}
      \end{cases}
    \end{equation}
    for $1\le k\le 2n$. The difference of operator
    products in~\eqref{eq:chiL_vs_chi}  then equals
    \begin{equation} \label{eq:error2-and-3-split}
      \prod_{j=1}^{2n} A_j^L - e^{-t\delta}\prod_{j=1}^{2n} A_j
      =
      e^{-t\delta}\Bigparens{\prod_{j=1}^{2n} A_j^L -
      \prod_{j=1}^{2n} A_j}
      + (1-e^{-t\delta})\prod_{j=1}^{2n} A_j^L,
    \end{equation}
    where as in~\eqref{eq:difference-of-operator-products},
    \begin{equation} \label{eq:difference-of-operator-products-again}
      \prod_{j=1}^{2n} A_j^L - \prod_{j=1}^{2n} A_j
      =
      \sum_{k=1}^{2n} A_1\dotsm A_{k-1}(A^L_k - A_k) A_{k+1}^L\dotsm A^L_{2n}.
    \end{equation}

    We will treat the two terms on the right-hand side
    of~\eqref{eq:error2-and-3-split} individually. For the first term,
    we estimate the $k$th term
    in~\eqref{eq:difference-of-operator-products-again}. We will carry
    out the argument in the case where $k$ is even. The argument is
    similar for odd $k$. Since $0\le\chi^+ - \chi^+_L \le
    \1_{[E,E+2\delta]}$, \ $\chi_{(L)}^-\le\1_{[-E_0-1,E]}$ and
    $\chi_{(L)}^+\le\1_{[E,E_0+1]}$,
    \eqref{eq:product-of-traces} implies
    \begin{align} \label{eq:exp_leb_point-applied}
      &\tr\Bigabs{A_1\dotsm A_{k-1}(A^L_k - A_k) A_{k+1}^L\dotsm
      A^L_{2n}}
      \notag\\
      &\quad\le
      \int_{[-E_0-1,E]^n}\d\mu^n(x)\int_{[E,E_0+1]^n}\d\nu^n(y)\,
      \1_{[E,E+2\delta]}(y_k)
      \notag\\
      &\qquad\qquad\times\exp\Bigparens{-t\sum_{j=1}^n\bigparens{(u_j+v_j)(y_j - E) -
      (u_j + v_{j-1})(x_j-E)}} \notag\\
      &\quad\le
      C
      \frac{1 - e^{-2(u_k+v_k)t\delta}}{t}
      \frac{1}{t^{2n-1}}
      \frac{1}{\prod_{j=1}^n (u_j+v_j)(u_j+v_{j-1})},
    \end{align}
    where $C$ is some finite constant and the last inequality follows
    for a.e.\ $E\in [-E_0,E_0]$ from applying
    Lemma~\ref{lemma:exp_leb_point} to every integral and the estimate
    $1 - e^{-tx}\le 1$ to all but the $k$th term. Using the bound $1 -
    e^{-2(u_k+v_k)t\delta}\le 2(u_k+v_k)t\delta$, we conclude
    \begin{equation}
      e^{-t\delta}\tr\biggabs{\prod_{j=1}^{2n} A_j^L -
      \prod_{j=1}^{2n} A_j}
      \le 4C \frac{\delta e^{-t\delta}}{t^{2n-1}}
      \frac{\abs{u}_1 + \abs{v}_1}{\prod_{j=1}^n (u_j+v_j)(u_j+v_{j-1})},
    \end{equation}
    and therefore
    \begin{align}
      \int_0^\infty & \dt\, t^{2n-1}
      \int_{(0,\infty)^n\times
        (0,\infty)^n}\d(u,v)\, (\abs{u}_1+\abs{v}_1)
      e^{-\abs{u}_1-\abs{v}_1} \notag\\
      &\qquad\times e^{-t\delta}\tr\biggabs{\prod_{j=1}^{2n} A_j^L -
      \prod_{j=1}^{2n} A_j}
      \notag\\
      &\;\le
      4C
      \int_0^\infty\dt\, \delta e^{-t\delta}
      \int_{(0,\infty)^n\times (0,\infty)^n}\d(u,v)\,
      \frac{(\abs{u}_1+\abs{v}_1)^2 e^{-\abs{u}_1-\abs{v}_1}}{\prod_{j=1}^n (u_j+v_j)(u_j+v_{j-1})}.
    \end{align}
    Here, the $t$-integral yields $1$ for every $\delta > 0$, and
    the $(u,v)$-integral is finite since
    \begin{multline}
      \int_{(0,\infty)^n\times (0,\infty)^n}\d(u,v)\,
      \frac{(\abs{u}_1+\abs{v}_1)^2 e^{-\abs{u}_1-\abs{v}_1}}{\prod_{j=1}^n (u_j+v_j)(u_j+v_{j-1})}
      \\ \qquad\qquad \le
      \int_{(0,\infty)^n\times (0,\infty)^n}\d(u,v)\,
      \frac{(\abs{u}_1+\abs{v}_1) e^{-\abs{u}_1/2-\abs{v}_1/2}}{\prod_{j=1}^n (u_j+v_j)(u_j+v_{j-1})} \hfill \\
       \qquad\qquad  = 2 I_{2n} < \infty, \hfill
    \end{multline}
    with $I_{2n}$ as in Definition~\ref{definition:u-integral-finite}.
    This shows that the integral of the trace norm of the first term
    on the right-hand side of~\eqref{eq:error2-and-3-split} yields an
    error that remains finite as $L\to\infty$.

    The trace norm of the second term on the right-hand side
    of~\eqref{eq:error2-and-3-split} is
    \begin{align} \label{eq:exp_leb_point-applied-second-time}
      &(1-e^{-t\delta})\tr\biggabs{\prod_{j=1}^{2n} A_j^L}
      \notag\\ & \le
        (1 - e^{-t\delta})
        \int_{[-E_0-1,E-\delta]^n}\d\mu^n(x)\int_{[E+\delta,E_0+1]^n}\d\nu^n(y)\,
        \notag\\
      & \qquad \times\exp\Bigparens{-t\sum_{j=1}^n\bigparens{(u_j+v_j)(y_j - E) -
      (u_j + v_{j-1})(x_j-E)}}
      \notag \\
      & \le
        C
        (1 - e^{-t\delta})
        \prod_{j=1}^n
        \frac{e^{-(u_j+v_{j-1})t\delta/2}e^{-(u_j+v_j)t\delta/2}}%
             {(u_j+v_{j-1})(u_j+v_j)(t/2)^2}
      \notag \\
      & =
      2^{2n}C
      \frac{(1 - e^{-t\delta}) e^{-(\abs{u}_1+\abs{v}_1)t\delta} t^{-2n}}{\prod_{j=1}^n (u_j+v_j)(u_j+v_{j-1})},
    \end{align}
    where the first inequality is a consequence of
    $\chi_{(L)}^-\le\1_{[-E_0-1,E]}$ and
    $\chi_{(L)}^+\le\1_{[E,E_0+1]}$, and the second inequality follows
    for a.e.\ $E\in[-E_0,E_0]$ from
    Lemma~\ref{lemma:exp_leb_point}. Now, we perform the $t$- and
    $(u,v)$-integration
    \begin{align}
      &\int_{(0,\infty)^n\times (0,\infty)^n}\d(u,v)\, (\abs{u}_1+\abs{v}_1)
      e^{-\abs{u}_1-\abs{v}_1} \int_0^\infty\dt\, t^{2n-1}
      \notag\\
      & \hspace{8em} \times
      \frac{(1 - e^{-t\delta}) e^{-(\abs{u}_1+\abs{v}_1)t\delta} t^{-2n}}{\prod_{j=1}^n (u_j+v_j)(u_j+v_{j-1})},
      \notag\\
      &=
      \int_{(0,\infty)^n\times (0,\infty)^n}\d(u,v)
      \int_0^\infty\dt\, \frac{(1 - e^{-t})
      e^{-(1+t)(\abs{u}_1+\abs{v}_1)}
      (\abs{u}_1+\abs{v}_1)}%
      {t\,\prod_{j=1}^n (u_j+v_j)(u_j+v_{j-1})} \notag\\
      & = I_{2n} \int_0^\infty\dt\, \frac{1 - e^{-t}}{t^2+t} < \infty,
    \end{align}
    where we performed the successive changes of variables
    $t\delta\rightsquigarrow t$ and $(1+t)(u,v)\rightsquigarrow (u,v)$.
  \end{proof}
\end{lemma}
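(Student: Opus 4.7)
The plan is to write the difference in the integrand, with $\delta := L^{-a}$ and auxiliary operators
$$A_k^{(L)} := \begin{cases} \sqrt{V}\chi^-_{(L)}(H) e^{\alpha_k t(H-E)}\sqrt{V} & k\text{ odd}, \\ \sqrt{V}\chi^+_{(L)}(H') e^{-\alpha_k t(H'-E)}\sqrt{V} & k\text{ even},\end{cases}$$
where $\alpha_{2j-1} := u_j + v_{j-1}$ and $\alpha_{2j} := u_j + v_j$, as
$$\prod_{j=1}^{2n} A_j^L - e^{-t\delta}\prod_{j=1}^{2n} A_j = e^{-t\delta}\Bigl(\prod_j A_j^L - \prod_j A_j\Bigr) + (1-e^{-t\delta})\prod_j A_j^L,$$
and to control each summand separately. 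The key analytic input, applied at $x_0 = E$, is Lemma~\ref{lemma:exp_leb_point} applied to the two spectral measures $\mu^1, \nu^1$ introduced in~\eqref{eq:1D_measures}; it supplies a common null set $\Nn$ independent of $a$, $n$, $E_0$, which is exactly the exceptional set in the statement.

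For the first summand I would telescope $\prod A_j^L - \prod A_j$ into $2n$ terms, each carrying exactly one factor $A_k^L - A_k$ whose cut-off is the localized bump $\chi^\pm - \chi_L^\pm$, supported in an interval of width $2\delta$ touching $E$. Bounding the trace norm of the resulting product by the product of individual trace norms (so that only $\mu^1$ or $\nu^1$ ever appear, with all operator non-positivity issues bypassed), the first part of Lemma~\ref{lemma:exp_leb_point} controls the bump factor by $C(1-e^{-2\alpha_k t\delta})/t \le 2C\alpha_k \delta$, while each of the remaining $2n-1$ factors contributes at most $C/t$. Summing the $2n$ telescoping terms assembles a factor $\abs{u}_1+\abs{v}_1$, and the resulting integrand is integrable: the $t$-integral gives $\int_0^\infty \delta e^{-t\delta}\,dt = 1$, and the $(u,v)$-integral reduces to a finite multiple of $I_{2n}$ from Definition~\ref{definition:u-integral-finite}.

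For the second summand, all factors already carry the discontinuous cut-offs $\chi^\pm$, which are majorized by indicators of intervals at distance $\delta$ from $E$. The second part of Lemma~\ref{lemma:exp_leb_point} (taken with $\varepsilon = \delta$) then contributes $e^{-\alpha_k t\delta/2}/t$ per factor, giving altogether $e^{-(\abs{u}_1+\abs{v}_1)t\delta}/t^{2n}$. Carrying out the $(u,v)$-integral after the change of variables $(1+t)(u,v) \rightsquigarrow (u,v)$ and then the $t$-integral after $t\delta \rightsquigarrow t$, one is left with $I_{2n} \int_0^\infty (1-e^{-t})/(t^2+t)\,dt < \infty$.

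The main obstacle, as the preceding remark in the text already signals, is that for $n\ge 2$ the integrand has no definite sign, which rules out the simple strategy from~\cite{GKM} of truncating the $t$-integration at $t = L^{-a}$. The compensating factor $e^{-t\delta}$ multiplying $\prod A_j$ is chosen precisely so that after the split above both summands become manageable; once that decomposition is made, the remaining difficulty is purely quantitative, namely tracking constants carefully enough to verify that the null set produced by Lemma~\ref{lemma:exp_leb_point} depends neither on $a$, $n$ nor on $E_0$. A secondary technical nuisance is the parity bookkeeping between odd and even indices $k$ (i.e.\ whether the replaced cut-off is $\chi^-$ or $\chi^+$); the two cases are handled symmetrically and produce the same bounds up to relabeling.
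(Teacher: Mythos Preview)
Your approach is essentially identical to the paper's: same split, same telescoping, same applications of Lemma~\ref{lemma:exp_leb_point}, and the same reduction of the remaining integrals to $I_{2n}$.

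One slip to correct: in the second summand $(1-e^{-t\delta})\prod_j A_j^L$ the factors carry the \emph{smooth} cut-offs $\chi_L^\pm$, not the discontinuous $\chi^\pm$. This matters, because it is precisely $\chi_L^\pm$ whose support lies at distance $\delta$ from $E$ (by Definition~\ref{def:smooth_cutoff}), whereas $\chi^\pm$ reaches all the way to $E$. The bound you quote from the second part of Lemma~\ref{lemma:exp_leb_point} is therefore valid for exactly the reason you need, but only once you identify the cut-offs correctly; with $\chi^\pm$ in place of $\chi_L^\pm$ the gap estimate $e^{-\alpha_k t\delta/2}$ would be unavailable and that step would fail.
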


\begin{corollary} \label{corollary:trace_bound_feynman_exponential}
  For a.e.\ $E\in [-E_0, E_0]$, we have
  \begin{align} \label{eq:trace_bound_feynman_exponential}
    &\tr\bigbraces{\bigparens{\1_{(-\infty,E]}(H_L)\1_{(E,\infty)}(H_L')}^n} \notag\\
    &\ge
    \int_{(0,\infty)^n\times (0,\infty)^n}\d(u,v)\, (\abs{u}_1+\abs{v}_1)
      e^{-\abs{u}_1-\abs{v}_1}
    \int_0^\infty\dt\, t^{2n-1} e^{-tL^{-a}} \notag \\
    &\qquad\times\tr\bigg\{\prod_{j=1}^n
        \sqrt{V}\chi^-(H) e^{(u_j + v_{j-1})t(H-E)}V
                \chi^+(H') e^{-(u_j + v_j)t(H'-E)}\sqrt{V}\bigg\} \notag\\
    &\quad  + \Oh(1)
  \end{align}
  as $L\to\infty$. The null set of exceptional energies does not
  depend on $a$, $n$ and $E_0$.
  \begin{proof}
    We combine Corollary~\ref{corollary:trace_bound_feynman_infinite} and
    Lemma~\ref{lemma:error2_and_3}.
  \end{proof}
\end{corollary}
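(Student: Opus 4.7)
The plan is to derive the estimate as a direct combination of the two preceding results, with essentially no new work required beyond bookkeeping the error terms.

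First, I would invoke Corollary~\ref{corollary:trace_bound_feynman_infinite}, which already converts the finite-volume trace $\tr\bigbraces{\bigparens{\1_{(-\infty,E]}(H_L)\1_{(E,\infty)}(H_L')}^n}$ into a lower bound given by the $(t,u,v)$-integral of a trace built from the infinite-volume operators $H,H'$ with smooth cut-offs $\chi_L^\pm$, up to an $\oh(1)$ correction. This step does the heavy lifting of passing from $H_L^{(\prime)}$ to $H^{(\prime)}$ via the Helffer--Sj\"ostrand technology of Lemma~\ref{lemma:hs}.

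Next, I would apply Lemma~\ref{lemma:error2_and_3} to replace the smooth, $L$-dependent cut-offs $\chi_L^\pm$ by the discontinuous, $L$-independent cut-offs $\chi^\pm$, at the price of multiplying the integrand by the regularising factor $e^{-tL^{-a}}$ and of incurring an additional $\Oh(1)$ error as $L\to\infty$, valid for a.e.\ Fermi energy $E\in[-E_0,E_0]$ with an exceptional null set independent of $a$, $n$ and $E_0$. Substituting this identity into the lower bound from the previous step, the two error contributions combine as $\oh(1)+\Oh(1)=\Oh(1)$, yielding precisely \eqref{eq:trace_bound_feynman_exponential}.

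I do not expect any serious obstacle at this stage: the conceptual and technical difficulties (the spectral-gap estimate of Lemma~\ref{lemma:supmeasurefinite}, the Feynman--Schwinger parametrisation, the Helffer--Sj\"ostrand bounds, and the Lebesgue-point argument behind Lemma~\ref{lemma:exp_leb_point}) have already been absorbed into the two cited results. The only subtlety to verify is that the null set of exceptional Fermi energies is inherited unchanged from Lemma~\ref{lemma:error2_and_3}, and in particular does not depend on $a$, $n$, or $E_0$, which follows immediately since that null set is determined solely by the measures $\mu^1$ and $\nu^1$ attached to $H$ and $H'$.
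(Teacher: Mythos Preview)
Your proposal is correct and follows exactly the same route as the paper: the proof there also consists of nothing more than combining Corollary~\ref{corollary:trace_bound_feynman_infinite} with Lemma~\ref{lemma:error2_and_3}. Your additional remarks on the bookkeeping of the error terms and on the provenance of the exceptional null set are accurate and make explicit what the paper leaves implicit.
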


\subsection{The logarithmic divergence}
\label{ssec:logarithmic-divergence}

Throughout this subsection, we fix $a\in(0,1)$, $n\in\N$ and $E_0 \ge
1$.

The goal is to determine the asymptotics of the
right-hand side of~\eqref{eq:trace_bound_feynman_exponential}.

\begin{lemma} 
  \label{lemma:exptraceconv}
  For a.e.\ $E\in[-E_0,E_0]$,
  \begin{equation} \label{eq:AtE_BtE_convergence}
    \begin{aligned}
      A_t(E) & :=
      \sqrt{V} t e^{t(H-E)}\chi^-(H)\sqrt{V}
      \to A(E), \\
      B_t(E) & :=
      \sqrt{V} t e^{-t(H'-E)}\chi^+(H')\sqrt{V}
      \to B(E)
    \end{aligned}
  \end{equation}
  as $t\to\infty$,
  where the convergences are in trace class. Moreover,
  \begin{equation} \label{eq:sup_norm_texp}
    \begin{aligned}
      \sup_{t\ge 0} \norm{A_t(E)}
      & \le
      \sup_{t\ge 0} \tr A_t(E)
      < \infty, \\
      \sup_{t\ge 0} \norm{B_t(E)}
      & \le
      \sup_{t\ge 0} \tr B_t(E)
      < \infty.
    \end{aligned}
  \end{equation}
  \begin{proof}
    We follow \cite[Lemma 3.16]{GKM} and treat the operator $B_t(E)$;
    the assertions for $A_t(E)$ can be proved using analogous
    arguments. Recall that $B_t(E)$ is
    non-negative. For~\eqref{eq:AtE_BtE_convergence}, we show (1)
    convergence of the trace norms and (2) weak convergence of the
    operators. Together, this implies
    convergence in trace class via \cite[Addendum H]{MR2154153}.

    For the trace norms, we compute
    \begin{align} \label{eq:trBtE}
      \tr B_t(E)
      & = \tr\bigbraces{\sqrt{V}te^{-t(H'-E)}\chi^+(H')\sqrt{V}}
      \notag\\ & =
      \int_{[E,E_0]}\d\nu^1(y)\, t e^{-t(y-E)}
      + \int_{[E_0,E_0+1]}\d\nu^1(y)\, \chi^+(y) t e^{-t(y-E)}
    \end{align}
    where the second term converges to zero as $t\to\infty$ for $E <
    E_0$. The first term can be written as
    \begin{equation} \label{eq:delta_convolution}
      \int_{[E,E_0]}\d\nu^1(y)\, t e^{-t(y-E)}
      =
      \parens{\nu^1_{E_0} * \varrho_t}(E),
    \end{equation}
    where we introduced the restricted (finite) measure
    $\nu_{E_0}^1(M) := \nu^1(M\cap [-E_0,E_0])$ for $M\in\Borel(\R)$
    and the approximation of the identity
    $x\mapsto\varrho_t(x) := t e^{tx} \1_{(-\infty,0)}(x)$.
    As $t\to\infty$, the convolution in~\eqref{eq:delta_convolution}
    converges for a.e.\ $E\in[-E_0,E_0]$
    to $\frac{\d\nu_\ac^1}{\d E} = \tr B(E)$,
    see e.g.\ \cite[Subsec. 2.4.1]{MR3052498}.
    Thus, the trace norm of
    $B_t(E)$ converges to that of $B(E)$ as $t\to\infty$.
    This, together with the continuity of $[0,\infty)\ni t\mapsto
    \tr B_t(E)$, which can be seen from~\eqref{eq:trBtE}, implies
    \eqref{eq:sup_norm_texp}.

    For the weak convergence, take some dense countable set
    $\mathcal{D}\subseteq L^2(\R^d)$.
    Then by a similar delta-argument as above,
    \begin{equation}
      \lim_{t\to\infty} \<\varphi, B_t(E)\psi\> = \<\varphi, B(E)\psi\>
    \end{equation}
    for all $\varphi, \psi\in\mathcal{D}$ and all $E\in
    [-E_0,E_0]$ outside a null set depending on $\mathcal{D}$.
    Together with \eqref{eq:sup_norm_texp}, this proves weak
    convergence to $B(E)$ for a.e.\ $E\in [-E_0,E_0]$, see
    \cite[Thm.~4.26]{MR566954}.
  \end{proof}
\end{lemma}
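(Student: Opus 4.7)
The plan is to follow the strategy of \cite[Lemma 3.16]{GKM}. I will focus on $B_t(E)$; the argument for $A_t(E)$ is entirely analogous, with $H'$ replaced by $H$, $\chi^+$ by $\chi^-$, and the sign in the exponential reversed. The functional calculus shows that $B_t(E)$ is non-negative and of trace class, so trace-norm convergence to $B(E)$ will follow from (i) convergence of the trace norms and (ii) weak operator convergence, via \cite[Addendum H]{MR2154153}.

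For (i), the cyclicity of the trace and the definition of $\nu^1$ in~\eqref{eq:1D_measures} give
\begin{equation*}
  \tr B_t(E) = \int_\R \chi^+(y)\, t\, e^{-t(y-E)}\, \d\nu^1(y).
\end{equation*}
I will split the integration into $[E,E_0]$ and $[E_0,E_0+1]$. The contribution from the upper interval decays exponentially whenever $E<E_0$, while the one from $[E,E_0]$ equals a convolution of the restricted finite measure $\nu^1_{E_0} := \nu^1(\,\cdot\,\cap[-E_0,E_0])$ with the approximate identity $\varrho_t(x) := t\, e^{tx}\,\1_{(-\infty,0)}(x)$. The Lebesgue differentiation theorem for finite Borel measures then yields convergence, for a.e.\ $E\in[-E_0,E_0]$, to the Radon--Nikodym density of the absolutely continuous part of $\nu^1$, which Lemma~\ref{lemma:densities} identifies as $\tr B(E)$. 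Continuity of $t\mapsto\tr B_t(E)$ on $[0,\infty)$, together with the finite limit at infinity, gives boundedness on $[0,\infty)$; the non-negativity of $B_t(E)$ then supplies the claimed supremum bound in~\eqref{eq:sup_norm_texp}.

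For (ii), I will fix a countable dense subset $\mathcal{D}\subseteq L^2(\R^d)$. For each pair $\varphi,\psi\in\mathcal{D}$, the matrix element $\langle\varphi,B_t(E)\psi\rangle$ can be written, in the same manner as above, as a convolution of a complex Borel spectral measure (associated to $\sqrt{V}\varphi$ and $\sqrt{V}\psi$ via the spectral resolution of $H'$) with $\varrho_t$. Decomposing this measure into its four real, non-negative constituents and applying the differentiation argument to each one produces convergence to $\langle\varphi,B(E)\psi\rangle$ outside a null set, which, after taking the countable union over pairs in $\mathcal{D}$, remains a null set. The uniform operator-norm bound from (i) together with the density of $\mathcal{D}$ then extends this to weak convergence for arbitrary test vectors; cf.\ \cite[Thm.~4.26]{MR566954}.

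The only delicate point I foresee is careful bookkeeping of exceptional sets: the Lebesgue-differentiation arguments in (i) and (ii), together with the countable choice of test vectors, must be combined into a single null set of Fermi energies that is independent of $a$, $n$ and $E_0$. Thanks to the countability of all the choices involved, this is routine rather than a substantive obstacle.
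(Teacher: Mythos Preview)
Your proposal is correct and follows essentially the same route as the paper's proof: non-negativity plus \cite[Addendum~H]{MR2154153} reduces trace-class convergence to convergence of traces and weak convergence, both of which are obtained by writing the relevant quantities as convolutions of spectral measures with the approximate identity $\varrho_t$ and invoking Lebesgue differentiation, with the countable dense set $\mathcal{D}$ and \cite[Thm.~4.26]{MR566954} handling the weak limit. Your added remarks about decomposing the complex measure into its four non-negative parts and about aggregating the exceptional null sets are correct elaborations of what the paper leaves implicit.
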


The following quantity will enter the asymptotics we set out to
prove.
\begin{definition} \label{definition:density-diagonal}
  For $E\in\R\setminus\Nn_0$, let
  \begin{equation} \label{eq:density-diagonal}
    \eta_{2n}(E)
    :=
    \tr\bigbraces{\bigparens{A(E)B(E)}^n}
  \end{equation}
  and extend it trivially to a function $\eta_{2n}\from\R\to[0,\infty)$.
  The non-negativity of~\eqref{eq:density-diagonal} can be seen from
  the cyclicity of the trace.
\end{definition}

The next corollary will show that the trace expression
on the right-hand side of~\eqref{eq:trace_bound_feynman_exponential},
times an appropriate power of~$t$, converges to $\eta_{2n}(E)$ in the
$t\to\infty$ limit.

\begin{corollary} \label{corollary:pointwise_conv}
  Let $\alpha_1,\dotsc,\alpha_n,\beta_1,\dotsc,\beta_n > 0$. Then for
  a.e.\ $E\in[-E_0,E_0]$
  \begin{equation}
    t^{2n} \tr\bigg\{\prod_{j=1}^n
    \sqrt{V}\chi^-(H) \alpha_j e^{\alpha_jt(H-E)}V
    \chi^+(H') \beta_j e^{-\beta_jt(H'-E)}\sqrt{V}\bigg\}
    \to
    \eta_{2n}(E)
  \end{equation}
  as $t\to\infty$.
  \begin{proof}
    Using the notation of Lemma~\ref{lemma:exptraceconv}, we have to
    show
    \begin{equation} \label{eq:pointwise_conv_products}
      \biggabs{\tr\bigg\{\prod_{j=1}^n A_{\alpha_j t}(E) B_{\beta_j t}(E)\bigg\}
        - \tr\bigbraces{\bigparens{A(E)B(E)}^n}}
      \to 0
    \end{equation}
    as $t\to\infty$.
    By Lemma~\ref{lemma:exptraceconv}, $\tr\abs{A_{\alpha_jt}(E) -
    A(E)}\to 0$ and $\tr\abs{B_{\beta_jt}(E) - B(E)}\to 0$ as
    $t\to\infty$, while $\sup_{t\ge 0}\norm{A_t(E)}$ and $\sup_{t\ge
    0}\norm{B_t(E)}$ are finite. Writing the difference of operator
    products in~\eqref{eq:pointwise_conv_products} as
    in~\eqref{eq:difference-of-operator-products}, this proves
    the corollary.
  \end{proof}
\end{corollary}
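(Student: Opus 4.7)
My plan is to use the factorization $V=\sqrt{V}\cdot\sqrt{V}$ to recognize the left-hand side directly as a product of $2n$ copies of the operators $A_s(E)$ and $B_s(E)$ from Lemma~\ref{lemma:exptraceconv}, and then reduce the statement to a standard telescoping argument in the ideal of trace-class operators. Concretely, the prefactor $t^{2n}$ can be distributed as $\prod_{j=1}^n(\alpha_j t)(\beta_j t)/\prod_j (\alpha_j\beta_j)$, and using cyclicity of the trace together with the commutativity of $\chi^-(H)$ with $e^{\alpha_j t(H-E)}$ and of $\chi^+(H')$ with $e^{-\beta_j t(H'-E)}$, the expression in the statement is rewritten as
\begin{equation*}
 t^{2n}\tr\bigg\{\prod_{j=1}^n \sqrt{V}\chi^-(H)\alpha_j e^{\alpha_j t(H-E)}V\chi^+(H')\beta_j e^{-\beta_j t(H'-E)}\sqrt{V}\bigg\}
 = \tr\bigg\{\prod_{j=1}^n A_{\alpha_j t}(E)\, B_{\beta_j t}(E)\bigg\}.
\end{equation*}

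Having achieved this, I would invoke Lemma~\ref{lemma:exptraceconv} twice: for a.e.\ $E\in[-E_0,E_0]$ and each fixed $\alpha_j,\beta_j>0$ one has $A_{\alpha_j t}(E)\to A(E)$ and $B_{\beta_j t}(E)\to B(E)$ in trace norm as $t\to\infty$, while the operator norms of all four families are uniformly bounded in $t$ by $\sup_{s\ge 0}\tr A_s(E)$ and $\sup_{s\ge 0}\tr B_s(E)$, respectively. Note that these suprema, taken over \emph{all} $s\ge 0$, automatically cover the rescaled arguments $\alpha_j t$ and $\beta_j t$, so the fixed positive constants $\alpha_j,\beta_j$ cause no additional difficulty.

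The last step is a telescoping sum exactly in the spirit of \eqref{eq:difference-of-operator-products}: writing the $2n$-fold product $\prod_j A_{\alpha_j t}(E)B_{\beta_j t}(E)$ and the limit product $(A(E)B(E))^n$ as a sum of $2n$ terms differing in one slot, each contribution has the form $X_1\cdots X_{k-1}(Z^{(t)}_k-Z_k)W_{k+1}\cdots W_{2n}$, where all flanking operators are uniformly bounded in operator norm and the central difference tends to zero in trace norm. The standard inequality $\tr|X_1 D X_2|\le\|X_1\|\,\|D\|_1\,\|X_2\|$ then shows that every telescoping term vanishes in trace norm as $t\to\infty$, and summing the $2n$ contributions proves \eqref{eq:pointwise_conv_products}.

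I do not anticipate a serious obstacle: Lemma~\ref{lemma:exptraceconv} has already done the hard analytic work of upgrading the weak Birman--{\`E}ntina convergence of \eqref{eq:BirmanA}--\eqref{eq:BirmanB} to trace-norm convergence of $A_t(E)$ and $B_t(E)$, and once this is in hand the corollary reduces to the purely algebraic telescoping identity and the elementary bound above. The only point to verify carefully is the uniform operator-norm control along the rescaled sequences $\alpha_j t$ and $\beta_j t$, which, as noted, follows for free from the form of the suprema in \eqref{eq:sup_norm_texp}.
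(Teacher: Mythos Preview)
Your proposal is correct and follows essentially the same approach as the paper: rewrite the trace as $\tr\bigl\{\prod_{j=1}^n A_{\alpha_j t}(E)B_{\beta_j t}(E)\bigr\}$, invoke Lemma~\ref{lemma:exptraceconv} for trace-norm convergence and uniform operator-norm bounds, and conclude via the telescoping identity~\eqref{eq:difference-of-operator-products}. Your write-up is somewhat more explicit about the rewriting step and the fact that the suprema in~\eqref{eq:sup_norm_texp} already control the rescaled arguments, but there is no substantive difference.
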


\begin{lemma} \label{lemma:final-value-log}
  Let $f\in L^1_\loc(\R)$ and suppose $\lim_{t\to\infty}f(t)$
  exists. Then
  \begin{equation}
    \lim_{t\to\infty} f(t)
    =
    - \lim_{s\downto 0}
    \frac{1}{\ln s}\int_1^\infty\dt\, t^{-1} e^{-st} f(t).
  \end{equation}
  \begin{proof}
    Take a compact interval $[s_0,c]\subseteq (0,\infty)$. Then
    \begin{equation} \label{eq:dbyds-lebesgue}
      \dbyd{s}\int_1^\infty\dt\, t^{-1} e^{-st} f(t)
      =
      - \int_1^\infty\dt\, e^{-st} f(t)
    \end{equation}
    for $s\in [s_0,c]$, because $\bigabs{\dbyd{s} t^{-1} e^{-st}
    f(t)}\le e^{-s_0t}\abs{f(t)}$, which is integrable on
    $[1,\infty)$. Therefore \eqref{eq:dbyds-lebesgue} holds for all
    $s > 0$. If $\lim_{s\downto 0} \int_1^\infty \dt\,
    t^{-1}e^{-st}f(t)$ exists, then $\lim_{t\to\infty} f(t) = 0$ and
    the assertion holds. Otherwise,
    \begin{align}
      - \lim_{s\downto 0} \frac{1}{\ln s}\int_1^\infty\dt\, t^{-1} e^{-st} f(t)
      & =
      \lim_{s\downto 0} \frac{1}{1/s} \int_1^\infty\dt\, e^{-st} f(t) \notag\\
      & = \lim_{s\downto 0} s \int_0^\infty\dt\, e^{-st} f(t) \notag\\
      & = \lim_{t\to\infty} f(t),
    \end{align}
    where the last equality is the statement of the classical final-value 
    theorem, see \cite[Thm.~34.3]{MR0344810}.
  \end{proof}
\end{lemma}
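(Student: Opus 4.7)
The plan is to reduce the claim to the classical final-value theorem for the Laplace transform, which states $\lim_{s\downto 0} s\int_0^\infty e^{-st} f(t)\,\dt = \lim_{t\to\infty} f(t)$, via an application of L'Hôpital's rule. The key observation is that differentiating in $s$ converts the weight $t^{-1}$ into a clean Laplace-transform integrand: after justifying the interchange by dominated convergence on compact subintervals of $(0,\infty)$ (using that $f$ is bounded as a locally integrable function with a finite limit at infinity), one obtains
\[
  \frac{\upd}{\upd s}\int_1^\infty \dt\, t^{-1} e^{-st} f(t) = -\int_1^\infty \dt\, e^{-st} f(t).
\]

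Next, I would treat the quotient $\bigl(\int_1^\infty t^{-1} e^{-st} f(t)\,\dt\bigr)/\ln s$ as an indeterminate form $\infty/\infty$ as $s\downto 0$ and apply L'Hôpital's rule, which gives
\[
  \lim_{s\downto 0}\frac{\int_1^\infty t^{-1} e^{-st} f(t)\,\dt}{\ln s}
  = \lim_{s\downto 0}\frac{-\int_1^\infty e^{-st} f(t)\,\dt}{1/s}
  = -\lim_{s\downto 0} s\int_1^\infty e^{-st} f(t)\,\dt.
\]
Because $s\int_0^1 e^{-st} f(t)\,\dt\to 0$ as $s\downto 0$ by local integrability of $f$, this last limit coincides with $-\lim_{s\downto 0} s\int_0^\infty e^{-st} f(t)\,\dt$, which by the final-value theorem equals $-\lim_{t\to\infty} f(t)$. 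Rearranging yields the asserted identity.

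The main obstacle is verifying the hypotheses of L'Hôpital's rule, in particular treating the degenerate case where the numerator $\int_1^\infty t^{-1} e^{-st} f(t)\,\dt$ happens to remain bounded as $s\downto 0$. In that situation the quotient tends trivially to zero, and one must independently argue that finiteness of that Laplace-type integral together with the existence of $\lim_{t\to\infty} f(t)$ forces this limit to equal $0$, so that both sides of the claimed identity vanish. Once this case distinction is cleanly made, the rest is a mechanical invocation of L'Hôpital and the Abelian final-value theorem.
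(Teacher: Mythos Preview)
Your proposal is correct and follows essentially the same approach as the paper: differentiate under the integral (justified by dominated convergence), make the case distinction according to whether the numerator stays bounded as $s\downto 0$, and in the unbounded case apply L'H\^opital followed by the classical final-value theorem for the Laplace transform, absorbing the missing piece over $[0,1]$ via local integrability. One small slip: $f$ need not be globally bounded (only eventually bounded, since it has a limit at infinity); what you actually need for the domination is that $e^{-s_0 t}|f(t)|$ is integrable on $[1,\infty)$, which follows from local integrability on any finite piece and eventual boundedness of $f$.
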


We are now ready to compute the asymptotics
of the right-hand side of~\eqref{eq:trace_bound_feynman_exponential}
\begin{theorem} \label{theorem:the-asymptotics}
  For a.e.\ $E\in[-E_0,E_0]$,
  \begin{align} \label{eq:the-asymptotics}
    &\lim_{L\to\infty}\frac{1}{a\ln L}
    \int_0^\infty\dt\, t^{2n-1} e^{-tL^{-a}} \!
    \int_{(0,\infty)^n\times (0,\infty)^n} \!\d(u,v)\, (\abs{u}_1+\abs{v}_1)
      e^{-\abs{u}_1-\abs{v}_1}
    \notag\\
    &\quad
    \times\tr\bigg\{\prod_{j=1}^n
      \sqrt{V}\chi^-(H) e^{(u_j + v_{j-1})t(H-E)}V
      \chi^+(H') e^{-(u_j + v_j)t(H'-E)}\sqrt{V} \bigg\} \notag\\
    &= I_{2n} \eta_{2n}(E).
  \end{align}
\end{theorem}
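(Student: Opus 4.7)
The plan is to reduce the left-hand side of \eqref{eq:the-asymptotics} to a setting where Corollary~\ref{corollary:pointwise_conv} and Lemma~\ref{lemma:final-value-log} apply directly. Using $V=\sqrt{V}\sqrt{V}$ and the commutativity of $\chi^{-}(H)$ with $e^{s(H-E)}$, I would first rewrite each factor in the trace as
\begin{equation*}
  \sqrt{V}\chi^{-}(H)e^{\alpha t(H-E)}\sqrt{V} = \frac{A_{\alpha t}(E)}{\alpha t}, \qquad
  \sqrt{V}\chi^{+}(H')e^{-\beta t(H'-E)}\sqrt{V} = \frac{B_{\beta t}(E)}{\beta t},
\end{equation*}
so that with $G_E(t,u,v):=\tr\prod_{j=1}^n A_{(u_j+v_{j-1})t}(E)\,B_{(u_j+v_j)t}(E)$ and
\begin{equation*}
 \phi(u,v):= \frac{(\abs{u}_1+\abs{v}_1)e^{-\abs{u}_1-\abs{v}_1}}{\prod_{j=1}^n(u_j+v_{j-1})(u_j+v_j)},
\end{equation*}
the left-hand side of \eqref{eq:the-asymptotics} becomes
\begin{equation*}
  J_L := \int d(u,v)\,\phi(u,v)\int_0^\infty \dt\, t^{-1}e^{-tL^{-a}}G_E(t,u,v).
\end{equation*}
Fubini is justified by two complementary estimates on $G_E$: from Lemma~\ref{lemma:exptraceconv}, the uniform bound $\abs{G_E(t,u,v)}\le C(E,n)$ valid for all $t\ge 0$ and all $(u,v)$; and, for small $t$, the bound $\abs{G_E(t,u,v)}\le C'(E,n)\,t^{2n}\prod_j(u_j+v_{j-1})(u_j+v_j)$, which comes from $\norm{A_s(E)}_1\le s\cdot\tr\bigparens{\sqrt{V}\chi^{-}(H)\sqrt{V}}$ (and its $B$-analogue) and makes $t^{-1}G_E$ integrable near $t=0$.

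Next I would split the inner integral at $t=1$. The $[0,1]$-piece contributes only $\Oh(1)$ to $J_L$ as $L\to\infty$, since the small-$t$ bound makes the factor $\prod_j(u_j+v_{j-1})(u_j+v_j)$ cancel against $\phi$, leaving the integrable weight $(\abs{u}_1+\abs{v}_1)e^{-\abs{u}_1-\abs{v}_1}$; divided by $a\ln L$ this piece vanishes. For the $[1,\infty)$-piece I would apply Lemma~\ref{lemma:final-value-log} at fixed $(u,v)$ with $f(t):=G_E(t,u,v)$ and $s:=L^{-a}$, so that $-1/\ln s = 1/(a\ln L)$. Corollary~\ref{corollary:pointwise_conv} supplies $\lim_{t\to\infty}G_E(t,u,v)=\eta_{2n}(E)$ for a.e.\ $E$, and the resulting $E$-exceptional null set will be independent of $(u,v)$ because it only records the a.e.\ Lebesgue differentiation of the fixed Borel measures $\mu^1,\nu^1$ (see Proposition~\ref{proposition:Birman} and the proof of Lemma~\ref{lemma:exptraceconv}).

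A dominated convergence argument in $(u,v)$, with dominating function $2C(E)\phi(u,v)$ (using $\frac{1}{a\ln L}\int_1^\infty t^{-1}e^{-tL^{-a}}\dt\le 2$ for $L$ large, the uniform bound on $G_E$, and the integrability of $\phi$), would then give
\begin{equation*}
  \lim_{L\to\infty}\frac{J_L}{a\ln L} = \eta_{2n}(E)\int d(u,v)\,\phi(u,v).
\end{equation*}
Finally, the interleaving change of variables $w_{2j-1}:=u_j$, $w_{2j}:=v_j$ (unit Jacobian) sends $\abs{u}_1+\abs{v}_1\mapsto\abs{w}_1$ and identifies the $2n$ pair-sums $\braces{u_j+v_{j-1},\,u_j+v_j : 1\le j\le n}$ with $\braces{w_k+w_{k+1} : 1\le k\le 2n}$, where $w_{2n+1}:=w_1$ matches the convention $v_0:=v_n$. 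This yields $\int\phi(u,v)\,d(u,v)=I_{2n}$ by Definition~\ref{definition:u-integral-finite}, completing the proof.

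The hard part will be the joint control of the two limits $t\to\infty$ and $L\to\infty$ together with the $(u,v)$-integration: one needs the $t\to\infty$ convergence to hold with a $(u,v)$-integrable envelope and on a $(u,v)$-independent exceptional null set of Fermi energies. Both issues are settled by the observation that $G_E(t,u,v)$ depends on $(u,v)$ only through scalings of $t$ by $u_j+v_{j-1}$ and $u_j+v_j$, so its uniform bound follows directly from the scaling-invariant estimates of Lemma~\ref{lemma:exptraceconv}, and the exceptional $E$-set is precisely the fixed null set from Lebesgue differentiation of $\mu^1,\nu^1$.
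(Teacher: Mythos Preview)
Your proof is correct and takes essentially the same approach as the paper: rewrite the trace in terms of $A_{\alpha t}(E)$ and $B_{\beta t}(E)$, combine the uniform bound and convergence from Lemma~\ref{lemma:exptraceconv} and Corollary~\ref{corollary:pointwise_conv} with Lemma~\ref{lemma:final-value-log} and dominated convergence against the integrable weight $\phi$ whose integral is $I_{2n}$. The only cosmetic difference is the order of operations: the paper first integrates over $(u,v)$ via dominated convergence to obtain a single function $f(t)$ with $\lim_{t\to\infty}f(t)=I_{2n}\eta_{2n}(E)$ and then applies Lemma~\ref{lemma:final-value-log} once, whereas you apply Lemma~\ref{lemma:final-value-log} pointwise in $(u,v)$ and then integrate.
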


\begin{proof}
    Let $u,v\in (0,\infty)^n$ and define
    \begin{equation}
      Z(u,v) := \prod_{j=1}^n (u_j+v_{j-1})(u_j+v_j).
    \end{equation}
    Using the notation of Lemma~\ref{lemma:exptraceconv}, we see that 
    \begin{align}
      Z(u,v) & t^{2n}\tr\bigg\{\prod_{j=1}^n
      \sqrt{V}\chi^-(H) e^{(u_j + v_{j-1})t(H-E)}\sqrt{V} \notag\\
      & \hspace*{1.5cm}\times \sqrt{V} \chi^+(H') e^{-(u_j + v_j)t(H'-E)}\sqrt{V} \bigg\} \notag\\
      &  = \tr\bigg\{\prod_{j=1}^n A_{(u_j+v_{j-1})t}(E) B_{(u_j+v_j)t}(E)\bigg\}, 
    \end{align}
    where
    \begin{equation} \label{eq:uv_times_t_times_trace_bounded}
      \biggabs{\tr\bigg\{\prod_{j=1}^n A_{(u_j+v_{j-1})t}(E)
      B_{(u_j+v_j)t}(E)\bigg\}}
      \le
      \bigparens{\sup_{t\ge 0} \tr A_t(E) \sup_{t\ge 0} \tr B_t(E)}^n
      < \infty.
    \end{equation}
    By Corollary~\ref{corollary:pointwise_conv},
    \begin{equation} \label{eq:uv_times_t_times_trace_convergent}
      \lim_{t\to\infty}\tr\bigg\{\prod_{j=1}^n A_{(u_j+v_{j-1})t}(E)
      B_{(u_j+v_j)t}(E)\bigg\}
      = \eta_{2n}(E)
    \end{equation}
    for all $u,v\in (0,\infty)^n$. By
    Remark~\ref{remarks:u-integral-finite}~(i),
    \begin{equation} \label{eq:u-integral-finite-again}
      I_{2n} =
      \int_{(0,\infty)^n\times (0,\infty)^n}\d(u,v)\,
      \frac{(\abs{u}_1+\abs{v}_1)
      e^{-\abs{u}_1-\abs{v}_1}}{Z(u,v)} < \infty.
    \end{equation}
    Equations \eqref{eq:uv_times_t_times_trace_bounded}
    to~\eqref{eq:u-integral-finite-again} supply the assumptions of
    the dominated convergence theorem. It yields the convergence
    \begin{equation}
      \lim_{t\to\infty} f(t) = I_{2n}\eta_{2n}(E)
    \end{equation}
    for
    \begin{align}
      f(t) := {} &
      \int_{(0,\infty)^n\times (0,\infty)^n}\d(u,v)\, (\abs{u}_1+\abs{v}_1)
      e^{-\abs{u}_1-\abs{v}_1} \; t^{2n} \notag\\
      &
      \times \tr\bigg\{\prod_{j=1}^n
      \sqrt{V}\chi^-(H) e^{(u_j + v_{j-1})t(H-E)} \sqrt{V} \notag\\
      & \hspace*{1.6cm} \times \sqrt{V}\chi^+(H') e^{-(u_j + v_j)t(H'-E)}\sqrt{V}\bigg\}
      \notag\\
      = {} &
      \int_{(0,\infty)^n\times (0,\infty)^n}\d(u,v)\,
        \frac{(\abs{u}_1+\abs{v}_1)
          e^{-\abs{u}_1-\abs{v}_1}}{Z(u,v)}
      \notag\\
      & \times\tr\bigg\{\prod_{j=1}^n A_{(u_j+v_{j-1})t}(E) B_{(u_j+v_j)t}(E)\bigg\},
    \end{align}
    where $t > 0$.
    The assertion~\eqref{eq:the-asymptotics} follows from
    \begin{equation}
      - \lim_{L\to\infty} \frac{1}{\ln(L^{-a})}
        \int_0^\infty\dt\, t^{-1} e^{-tL^{-a}} f(t) =
        \lim_{t\to\infty} f(t),
    \end{equation}
    which is a consequence of Lemma~\ref{lemma:final-value-log} and of
    \begin{equation}
      \sup_{L > 1} \int_0^1\dt\, t^{-1} e^{-t L^{-a}} f(t) < \infty.
      \qedhere
    \end{equation}
\end{proof}

\begin{corollary} \label{corollary:the-asymptotics}
  For a.e.\ $E\in\R$, the estimate
  \begin{align}
    \tr\bigbraces{\bigparens{\1_{(-\infty,E]}(H_L)\1_{(E,\infty)}(H_L')}^n}
    \ge
    \ln L\; I_{2n}\eta_{2n}(E) + \oh(\ln L)
  \end{align}
  holds, where the $\oh(\ln L)$-term depends on $n$ and $E$.
  \begin{proof}
    We deduce from Theorem~\ref{theorem:the-asymptotics},
    Corollary~\ref{corollary:trace_bound_feynman_exponential} and from
    the arbitrariness of $E_0$ that
    \begin{equation} \label{eq:liminf-nth-term}
      \liminf_{L\to\infty}
      \frac{\tr\bigbraces{\bigparens{\1_{(-\infty,E]}(H_L)\1_{(E,\infty)}(H_L')}^n}}{\ln
      L}
    \ge
    a \, I_{2n}\eta_{2n}(E)
    \end{equation}
    for arbitrary $a\in(0,1)$ and
    a.e.\ $E\in\R$. Thus~\eqref{eq:liminf-nth-term} holds for $a = 1$
    and a.e.\ $E\in\R$. By definition of the limit inferior, this
    implies the claim.
  \end{proof}
\end{corollary}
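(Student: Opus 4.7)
The plan is to combine Corollary~\ref{corollary:trace_bound_feynman_exponential} with Theorem~\ref{theorem:the-asymptotics} and then eliminate the auxiliary parameters $a\in(0,1)$ and $E_0\ge 1$ by a countable exhaustion argument.

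First, I fix $a\in(0,1)$ and $E_0\ge 1$. Corollary~\ref{corollary:trace_bound_feynman_exponential} provides a lower bound on $\tr\bigbraces{\bigparens{\1_{(-\infty,E]}(H_L)\1_{(E,\infty)}(H_L')}^n}$ in which only infinite-volume operators and the discontinuous cut-offs $\chi^\pm$ appear, together with a regulating factor $e^{-tL^{-a}}$; the remainder is $\mathrm{O}(1)$, hence negligible after division by $\ln L$. Dividing that inequality by $\ln L$ and passing to the liminf as $L\to\infty$, Theorem~\ref{theorem:the-asymptotics} identifies the asymptotics of the right-hand side as $a\, I_{2n}\eta_{2n}(E)$ for almost every $E\in[-E_0,E_0]$ (note the factor $a$ coming from $\ln(L^{-a})=-a\ln L$ in the theorem). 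This gives
\begin{equation*}
  \liminf_{L\to\infty}\frac{\tr\bigbraces{\bigparens{\1_{(-\infty,E]}(H_L)\1_{(E,\infty)}(H_L')}^n}}{\ln L}\ge a\, I_{2n}\eta_{2n}(E)
\end{equation*}
for such $E$.

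The key observation for the second step is that the null set of exceptional Fermi energies produced by Corollary~\ref{corollary:trace_bound_feynman_exponential} and Theorem~\ref{theorem:the-asymptotics} depends neither on $a$ nor on $E_0$; only the $\mathrm{O}(1)$/$\oh$-terms do. Therefore I may take a countable sequence $a_k\uparrow 1$ together with $E_0\in\N$ and form the (still Lebesgue-null) union of exceptional sets. Outside this union, the inequality above holds for every $a_k$ simultaneously, so letting $k\to\infty$ upgrades the factor $a$ to $1$, yielding
\begin{equation*}
  \liminf_{L\to\infty}\frac{\tr\bigbraces{\bigparens{\1_{(-\infty,E]}(H_L)\1_{(E,\infty)}(H_L')}^n}}{\ln L}\ge I_{2n}\eta_{2n}(E)
\end{equation*}
for a.e.\ $E\in\R$. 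By the definition of $\liminf$, this is exactly the asserted asymptotic lower bound
\begin{equation*}
  \tr\bigbraces{\bigparens{\1_{(-\infty,E]}(H_L)\1_{(E,\infty)}(H_L')}^n}\ge \ln L\; I_{2n}\eta_{2n}(E)+\oh(\ln L),
\end{equation*}
where the error depends on $n$ and $E$.

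The only subtle point is to make sure the exceptional null set is independent of $a$ so that the supremum over $a\in(0,1)$ can be taken outside the liminf; this is why the earlier lemmas were formulated with uniform null sets. Apart from this bookkeeping, the argument is a short combination of two already-established results and the elementary limit $\sup_{a\in(0,1)}a=1$.
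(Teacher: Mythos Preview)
Your proof is correct and follows essentially the same approach as the paper's: combine Corollary~\ref{corollary:trace_bound_feynman_exponential} with Theorem~\ref{theorem:the-asymptotics} to get the liminf inequality with coefficient $a\,I_{2n}\eta_{2n}(E)$, then use the $a$- and $E_0$-independence of the exceptional null set to pass to $a=1$ and all of $\R$. You are simply more explicit than the paper about the countable exhaustion in $a_k\uparrow 1$ and $E_0\in\N$, which is harmless (indeed, given that the null set is already independent of $a$, one could take the supremum over all $a\in(0,1)$ directly without passing to a sequence).
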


\subsection{A multi-dimensional integral related to the Hilbert matrix}
\label{ssec:u-integral}

In this subsection, we compute the coefficient of $\eta_{2n}(E)$ in the
asymptotics in Corollary~\ref{corollary:the-asymptotics}, i.e., we compute
the integral
\begin{equation}
  I_n
  =
  \int_{(0,\infty)^n}\du\,
  \frac{\abs{u}_1e^{-\abs{u}_1}}{\prod_{j=1}^n (u_j + u_{j+1})}
\end{equation}
in Definition~\ref{definition:u-integral-finite}~(i).
Here, we use the convention $u_{n+1} = u_1$ for $u\in\R^n$.
We prove
\begin{theorem} \label{theorem:u-integral}
  Let $n\in\N_{\ge 2}$. Then
  \begin{equation} \label{eq:u-integral}
    I_n
    =
    (2\pi)^{n-2}
    \frac{\bigparens{\Gamma(\frac{n}{2})}^2}{\Gamma(n)}.
  \end{equation}
  This implies
  \begin{equation}
    I_{2n}
    =
    (2\pi)^{2n-2}
    \frac{[(n-1)!]^2}{(2n-1)!}
    =
    n J_{2n}
  \end{equation}
  for $n\in\N$, where $J_{2n}$ was defined in~\eqref{eq:J2n-def}.
\end{theorem}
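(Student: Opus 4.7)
The plan is to reduce $I_n$ to a diagonal matrix element of a power of the Hilbert matrix, and then to extract that matrix element via a generating-function argument involving $\arcsin$.

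First, by cyclic symmetry of the integrand one has $I_n = n\cdot\tr(MK^n)$, where $M$ is multiplication by $u$ on $L^2(0,\infty)$ and $K$ is the positive self-adjoint integral operator with kernel $K(u,v) = e^{-(u+v)/2}/(u+v)$. The pointwise identity $u K(u,v) + K(u,v)\,v = e^{-(u+v)/2}$ lifts to $MK + KM = |\phi\rangle\langle\phi|$ with $\phi(u)=e^{-u/2}$. Inserting this into $\tr(MK\cdot K^{n-1})$ and using the cyclicity of the trace yields $2\tr(MK^n) = \langle\phi,K^{n-1}\phi\rangle$, so
$$I_n = \tfrac{n}{2}\,\langle\phi,\,K^{n-1}\phi\rangle.$$

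Second, the Feynman--Schwinger representation $1/(u+v) = \int_0^\infty e^{-s(u+v)}\d s$ factorises $K = TT^*$, where $T\colon L^2(1/2,\infty)\to L^2(0,\infty)$ is $(Tf)(u) = \int_{1/2}^\infty e^{-su}f(s)\d s$, and $T^*T = H_{1/2}$ is the Hilbert integral operator on $L^2(1/2,\infty)$ with kernel $1/(s+s')$. For $n\ge 2$, the identity $K^{n-1} = T H_{1/2}^{n-2}T^*$ combined with $(T^*\phi)(s)=1/(s+1/2)$ gives $I_n = (n/2)\langle\xi,H_{1/2}^{n-2}\xi\rangle_{L^2(1/2,\infty)}$ with $\xi(s)=1/(s+1/2)$. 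The Cayley-type substitution $s=(1+y)/(2(1-y))$, $y\in(0,1)$, turns $(s+1/2)$ and $(s+s')$ into expressions in which every $(1-y_j)$-prefactor cancels exactly, leaving
$$I_n = \frac{n}{2}\int_{(0,1)^{n-1}}\prod_{j=1}^{n-2}\frac{1}{1-y_jy_{j+1}}\,\d y_1\cdots \d y_{n-1}.$$
Expanding each factor as a geometric series and using $\int_0^1 y^{m-1}\d y = 1/m$ identifies this integral as the $(1,1)$ matrix element $(\mathcal{H}^{n-1})_{1,1}$ of the $(n-1)$th power of the Hilbert matrix $\mathcal{H}_{ij}=1/(i+j-1)$; in particular $I_n = (n/2)(\mathcal{H}^{n-1})_{1,1}$.

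Finally, to evaluate $(\mathcal{H}^{n-1})_{1,1}$ I would compute the generating function $G(t):=\sum_{k\ge 0}t^k(\mathcal{H}^k)_{1,1}$. Writing $\mathcal{H} = (T')^*T'$ with $T'\colon\ell^2(\N)\to L^2(0,1)$, $T'e_i = x^{i-1}$, and the partner operator $\tilde K = T'(T')^*$ with kernel $1/(1-xy)$ on $L^2(0,1)$, the resolvent identity yields $G(t) = \langle 1,(I-t\tilde K)^{-1}1\rangle_{L^2(0,1)}$ for $|t|<1/\pi$. Solving the associated integral equation $h_t - t\tilde K h_t = 1$ and reducing the resulting inner product via Weierstrass substitution (using $\int_0^\pi \d\varphi/(1-\pi t\sin\varphi) = (\pi+2\arcsin(\pi t))/\sqrt{1-\pi^2t^2}$ and integrating once in $t$) produces the closed form
$$G(t) = \frac{\arcsin(\pi t)}{\pi t}\Bigl(1 + \frac{\arcsin(\pi t)}{\pi}\Bigr).$$
Extracting the Taylor coefficient of $t^{n-1}$ via the classical series $\arcsin(z) = \sum_{m\ge 0}\tfrac{(2m)!}{4^m(m!)^2(2m+1)}z^{2m+1}$ and $\pi^{-2}\arcsin^2(\pi z) = \sum_{m\ge 1}J_{2m}z^{2m}$, and applying the Wallis integral with Legendre's duplication formula $\Gamma(n/2)\Gamma((n+1)/2)=2^{1-n}\sqrt\pi\,\Gamma(n)$, delivers the claimed $I_n = (2\pi)^{n-2}\Gamma(n/2)^2/\Gamma(n)$; the equivalent form $I_{2n}=nJ_{2n}$ follows immediately from $\Gamma(n+1)=n!$.

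The main obstacle will be the explicit derivation of the closed form for $G(t)$: the integral equation for $h_t$ is not separable, so the argument must bypass a direct solve via the Weierstrass-type reduction sketched above, or alternatively invoke the spectral measure $\d\mu_{e_1}(\lambda) = (2/\pi^2)\operatorname{arccosh}(\pi/\lambda)\,\d\lambda$ of $\mathcal{H}$ at the cyclic vector $e_1$, which itself requires tracing a chain of unitary equivalences through the $x = e^{-2a}$ substitution converting the kernel $1/(1-xy)$ to the translation-type kernel $1/\sinh(a+b)$ on $L^2(0,\infty)$.
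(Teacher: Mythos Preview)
Your reduction to $I_n=\tfrac{n}{2}\langle\phi,K^{n-1}\phi\rangle$ coincides with the paper's first step; your ``trace'' argument with $MK+KM=|\phi\rangle\langle\phi|$ is formally delicate because $K$ is not trace class and $M$ is unbounded, but read as a manipulation of the multiple integral it is exactly the paper's symmetry trick $2|u|_1=\sum_k(u_k+u_{k+1})$. Your explicit passage to the Hilbert matrix via the factorisation $K=TT^*$ and the Cayley map $s=(1+y)/(2(1-y))$ is a pleasant alternative to merely citing the known unitary equivalence, and lands at $I_n=\tfrac{n}{2}(\mathcal H^{n-1})_{1,1}$, which the paper records only as a remark after its proof. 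There is also a small slip in the resolvent transfer: one obtains $G(t)=1+t\langle 1,(I-t\tilde K)^{-1}1\rangle$, not $\langle 1,(I-t\tilde K)^{-1}1\rangle$.

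The genuine gap is the closed form for $G(t)$. Your Weierstrass route is only gestured at: the integral equation $h_t-t\tilde K h_t=1$ with kernel $1/(1-xy)$ is not diagonalised by any elementary substitution, and the displayed identity for $\int_0^\pi(1-\pi t\sin\varphi)^{-1}\,\d\varphi$ is not visibly connected to it. Your alternative route --- computing the spectral measure of $\mathcal H$ at $e_1$ via $x=e^{-2a}$ and the kernel $1/\sinh(a+b)$ --- is precisely what the paper does, in the unitarily equivalent Rosenblum--Rovnyak model: it invokes the Kontorovich--Lebedev transform to diagonalise $K$ as multiplication by $\pi/\cosh(k\pi)$, computes $|(U\phi)(k)|^2=2\pi k\sinh(k\pi)/\cosh^2(k\pi)$ from a Whittaker-function integral, and then reduces $\int_0^\infty k\sinh k\,(\cosh k)^{-n-1}\,\d k$ to a Beta function by one integration by parts and the substitution $x=(\cosh k)^{-2}$. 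So the ``main obstacle'' you flag is exactly the step the paper resolves by explicit diagonalisation; your generating-function packaging is different in form, but the hard analytic input it requires is the same, and you have not supplied it.
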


We begin with an elementary lemma.
\begin{lemma} \label{eq:u-integral-diening-trick}
  Let $n\in\N_{\ge 2}$. Then
  \begin{equation} \label{eq:In-integral-formula}
    I_n = \frac{n}{2}\int_{(0,\infty)^n}\du\,
           \frac{e^{-\abs{u}_1}}{\prod_{j=1}^{n-1}(u_j + u_{j+1})}.
  \end{equation}
  \begin{proof}
    Using the
    symmetry of $I_n$ in the components of $u$, we compute

    \begin{align}
      I_n &
      = \frac{1}{2}\int_{(0,\infty)^n}\du\, e^{-\abs{u}_1}
      \frac{2\abs{u}_1}{\prod_{j=1}^n (u_j + u_{j+1})} \notag\\
      & =
      \frac{1}{2}\sum_{k=1}^n \int_{(0,\infty)^n}\du\, e^{-\abs{u}_1}
      \frac{u_k+u_{k+1}}{\prod_{j=1}^n (u_j + u_{j+1})} \notag\\
      & = \frac{n}{2} \int_{(0,\infty)^n}\du\,
      \frac{e^{-\abs{u}_1}}{\prod_{j=1}^{n-1}(u_j + u_{j+1})}.
      \qedhere
    \end{align}
  \end{proof}
\end{lemma}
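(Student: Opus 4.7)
The plan is to exploit the cyclic symmetry of the integrand in the $n$ variables. Writing the numerator in the symmetric form
\[
2\abs{u}_1 \;=\; 2\sum_{j=1}^n u_j \;=\; \sum_{k=1}^n (u_k + u_{k+1}),
\]
with the cyclic convention $u_{n+1}:=u_1$, and inserting this identity into the definition of $I_n$, I obtain
\[
2 I_n \;=\; \sum_{k=1}^n \int_{(0,\infty)^n} \du\, \frac{e^{-\abs{u}_1}}{\prod_{j\ne k} (u_j + u_{j+1})}.
\]
Each summand arises by cancelling a single factor of the cyclic denominator against $(u_k+u_{k+1})$; the interchange of sum and integral is justified by Fubini--Tonelli since the integrand is non-negative.

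To finish, I would argue that all $n$ summands coincide. For each fixed $k\in\{1,\ldots,n\}$, the coordinate permutation $u_j\mapsto v_j := u_{j+k}$ (indices taken modulo $n$, with representatives in $\{1,\ldots,n\}$) preserves both the Lebesgue measure on $(0,\infty)^n$ and the value of $\abs{u}_1$. Under this substitution the missing cyclic factor $u_k+u_{k+1}$ is carried to $v_n+v_1$, while the remaining $n-1$ factors, rewritten in their new cyclic order, form precisely $\prod_{j=1}^{n-1}(v_j+v_{j+1})$. Therefore the $k$-th summand equals the single integral appearing on the right-hand side of \eqref{eq:In-integral-formula}, so $2 I_n$ equals $n$ times that integral, which is the claim.

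No real obstacle is anticipated: the argument is a pure symmetrization trick and involves no delicate analysis. The only point requiring some care is the index bookkeeping for the cyclic shift --- one must check that shifting indices by $k$ in a cyclic product of $n$ terms moves the ``gap'' at position $k$ onto the gap at position $n$, which then disappears when the cyclic product is rewritten as the linear product over $j=1,\dots,n-1$. A quick verification on $n=3$ (treating separately $k=1,2,3$) makes the mechanism transparent and shows the general case reduces to the same relabeling.
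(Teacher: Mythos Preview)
Your proof is correct and follows essentially the same approach as the paper: both write $2\abs{u}_1 = \sum_{k=1}^n (u_k+u_{k+1})$, cancel one factor from the cyclic denominator, and then invoke cyclic symmetry to identify the $n$ resulting summands. You supply more detail on the cyclic-shift substitution than the paper, which simply appeals to ``the symmetry of $I_n$ in the components of $u$''.
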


In the sequel, we will work with the Rosenblum--Rovnyak integral operator
$T\from L^2((0,\infty))\to L^2((0,\infty))$,
see \cite{MR0094626} and \cite{MR0291899}, defined by
\begin{equation} \label{eq:rosenblum-rovnyak}
  (Tf)(x) := \int_0^\infty \dy\, \frac{e^{-(x+y)/2}}{x+y} f(y)
\end{equation}
for $f\in L^2((0,\infty))$ and $x\in(0,\infty)$.
This operator can be explicitly diagonalised. Following
\cite[Sec.~4.2]{MR2768565}, we define the Kontorovich--Lebedev
transform, i.e. the unitary operator
$U\from L^2((0,\infty))\to L^2((0,\infty))$,
\begin{equation} \label{eq:hilbert-matrix-unitary-operator}
  (Uf)(k) := \pi^{-1}\sqrt{k\sinh(2\pi k)} \, \abs{\Gamma(1/2 - ik)}
            \int_0^\infty\dx\, x^{-1} W_{0,ik}(x)f(x)
\end{equation}
for $f\in L^2((0,\infty))$ and $k\in(0,\infty)$, where $W_{0,ik}$
denotes the Whittaker function, see
\cite[Sec.~13.14]{DLMF} or \cite[Sec.~9.22--9.23]{MR2360010}. Then, the
spectral representation due to Rosenblum reads
\begin{equation} \label{eq:Hilbert-matrix-unitary-relation}
  (UTf)(k) = \frac{\pi}{\cosh(k\pi)}(Uf)(k)
\end{equation}
for $f\in L^2((0,\infty))$ and $k\in(0,\infty)$, see
\cite[Prop. 4.1]{MR2768565}.

\begin{proof}[Proof of Theorem~\ref{theorem:u-integral}]
  Let $n\in\N_{\ge 2}$. From
  \eqref{eq:In-integral-formula} and \eqref{eq:rosenblum-rovnyak}, we see that
  \begin{equation} \label{eq:Jn-Hilbert-matrix}
    \frac{2}{n} I_n
    =
    \bigangles{\phi_0, T^{n-1}\phi_0}_{L^2((0,\infty))}
  \end{equation}
  with $\phi_0(x) := e^{-x/2}$.
  From
  \eqref{eq:Hilbert-matrix-unitary-relation} and \eqref{eq:Jn-Hilbert-matrix}, we obtain
  \begin{equation} \label{eq:Jn-Hilbert-identity}
    \frac{2}{n} I_n
    =
    \bigangles{U\phi_0, UT^{n-1}\phi_0}_{L^2((0,\infty))}
    =
    \int_0^\infty\d k\, \abs{(U\phi_0)(k)}^2 \Bigparens{\frac{\pi}{\cosh(k\pi)}}^{n-1}.
  \end{equation}
  In order to compute $U\phi_0$, we employ the classical formula
  \begin{equation} \label{eq:classical-gamma}
    \abs{\Gamma(1/2 - ik)}^2 = \frac{\pi}{\cosh(k\pi)}
  \end{equation}
  for $k\in\R$, which is a consequence of the reflection formula for
  the Gamma function, and
  \begin{equation} \label{eq:classical-whittaker}
    \int_0^\infty\dx\, x^{-1} W_{0,ik}(x)e^{-x/2} =
    \frac{\pi}{\cosh(k\pi)}
    \where{k > 0},
  \end{equation}
  which follows from the special case $z=1/2$ and $\nu = \kappa = 0$ in
  \cite[Eq.~13.23.4]{DLMF}. From~\eqref{eq:hilbert-matrix-unitary-operator},
  \eqref{eq:classical-gamma}, and \eqref{eq:classical-whittaker}, we
  deduce
  \begin{equation}
    \abs{(U\phi_0)(k)}^2 = 2\pi k\frac{\sinh(k\pi)}{(\cosh(k\pi))^2}
  \end{equation}
  for $k > 0$.
  Inserting this into~\eqref{eq:Jn-Hilbert-identity} yields
  \begin{equation}
    \frac{2}{n} I_n
    = 2\pi^{n-2}\int_0^\infty\d k\, k
    \frac{\sinh k}{(\cosh k)^{n+1}}
    =
    \frac{2\pi^{n-2}}{n}\int_0^\infty\d k\,\frac{1}{(\cosh k)^n},
  \end{equation}
  where we applied the substitution $k\rightsquigarrow k/\pi$ and
  integrated by parts. This integral can be evaluated using the
  substitution $x = (\cosh k)^{-2}$:
  \begin{align}
    \frac{2}{n} I_n
    & =
    \frac{\pi^{n-2}}{n}
    \int_0^1\dx\, x^{n/2-1}(1-x)^{-1/2} 
     =
    \frac{\pi^{n-2}}{n} \Beta(n/2, 1/2),
  \end{align}
  where $\Beta$ denotes Euler's Beta Function. The claim follows
  from \cite[Eq.~8.384\,4 and 8.384\,1]{MR2360010}.
\end{proof}

\begin{remark}
  The Rosenblum--Rovnyak operator is the special case $T =
  \mathcal{H}_0$ in \cite[Eq.~(2.3)]{MR0094626} and is unitarily
  equivalent to the Hilbert matrix $\mathsf{H}\from\ell^2(\N_0)\to\ell^2(\N_0)$,
  \begin{equation}
    (\mathsf{H} c)_j = \sum_{k=0}^\infty \frac{c_k}{j+k+1}
  \end{equation}
  for $j\in\N_0$ and $c\in\ell^2(\N_0)$. In analogy
  to~\eqref{eq:Jn-Hilbert-matrix}, the
  representation
  \begin{equation}
    I_n = \frac{n}{2}\bigangles{e^{(0)}, \mathsf{H}^{n-1} e^{(0)}}_{\ell^2(\N_0)}
  \end{equation}
  holds with $e^{(0)} := (1,0,\dotsc)\in\ell^2(\N_0)$.
\end{remark}

\subsection{Relations to scattering theory}
\label{ssec:scattering-theory}

In order to complete the proof of Theorem~\ref{theorem:nth-term} we
need to relate the coefficient $\eta_{2n}(E)$ in
Definition~\ref{definition:density-diagonal} to the transition matrix
from scattering theory.
We begin with a definition.
\begin{definition}
  Let $\H_\ac(H)$ be the absolutely continuous subspace of the
  self-adjoint operator $H$. Then $\H_\ac(H)$ can be decomposed into a
  direct integral
  \begin{equation}
    \int_{\spec_\ac(H)}^{\oplus} \d E\,\H_E
  \end{equation}
  where $\H_E$ is a Hilbert space for every $E\in\spec_\ac(H)$. The operator $H$ acts on $\H_E$ by multiplication with the identity, see~\cite[\sectionsymbol
  1.5]{MR1180965}. This means that a vector $f\in\H_\ac(H)$
  corresponds to a vector-valued function $E\mapsto f_E\in\H_E$, and
  $Hf$ corresponds to $E\mapsto E f_E$.
\end{definition}

The transition matrix $T_E$ acts as a bounded operator on $\H_E$.
Moreover, we have the following representation.
\begin{lemma} \label{lemma:t-matrix-birman-Phi}
  The limit
  \begin{equation} \label{eq:birman-Phi}
    \Phi_\pm(E)
    := \lim_{\varepsilon\downto 0} \bigparens{I + \sqrt{V}(E\pm i\varepsilon - H')^{-1}\sqrt{V}}
  \end{equation}
  exists in the sense of convergence in operator norm for a.e.\ $E\in\R$. Moreover, there exists an
  operator $U(E)\from\H_\ac(H)\to\H_E$ such that
  $U(E)^*\,U(E)$ is the identity on $\ran\sqrt{A(E)}$ and the
  transition matrix $T_E\from\H_E\to\H_E$ satisfies
  \begin{equation}
    T_E
    =
    -2\pi i U(E) \tilde{T}(E) U(E)^*,
  \end{equation}
  where
  \begin{equation}
    \tilde{T}(E)
    :=
    -2\pi i\sqrt{A(E)}\Phi_+(E)\sqrt{A(E)}.
  \end{equation}
  \begin{proof}
    This is a result in abstract scattering theory, see
    e.g. \cite[\sectionsymbol 7]{BiEn67e}, \cite[\sectionsymbol 5.5]{MR1180965} or
    \cite[p.~394]{MR699113}. A detailed proof is given in \cite{Heiner}.
  \end{proof}
\end{lemma}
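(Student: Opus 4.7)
The plan is to adapt the standard constructions of stationary scattering theory (see \cite{BiEn67e} and \cite[Ch.~5]{MR1180965}) to the symmetric factorisation $V = \sqrt{V}\cdot\sqrt{V}$, using the operator-valued density $A(E)$ from Proposition~\ref{proposition:Birman} as the bridge between the abstract fibre $\H_E$ and the ambient Hilbert space $L^2(\R^d)$.

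The existence of $\Phi_\pm(E)$ follows from the second resolvent identity together with the Birman--{\`E}ntina limiting absorption principle. The sandwiched resolvents $\sqrt{V}(z - H')^{-1}\sqrt{V}$ and $\sqrt{V}(z - H)^{-1}\sqrt{V}$ are related by a Lippmann--Schwinger-type identity, so that norm-convergence of either boundary value as $\varepsilon \downto 0$ (known from \cite[Lem.~4.3]{BiEn67e} for a.e.\ $E$, after absorbing the exceptional set into $\Nn_0$) propagates to the other. The real part of the limit on the diagonal equals $\pi A(E)$, which is what later links $\Phi_+(E)$ to $\sqrt{A(E)}$.

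Next, I would construct $U(E)$ via the direct-integral representation of $\H_\ac(H)$. At any Lebesgue point of the operator-valued density of the spectral measure one has
\begin{equation*}
  \bigangles{\varphi, A(E)\psi}_{L^2(\R^d)}
  = \bigangles{(\sqrt{V}\varphi)_E,(\sqrt{V}\psi)_E}_{\H_E}
\end{equation*}
for $\varphi,\psi$ in a suitable dense set; this is the characterising identity for the fibre trace. The partial isometry $U(E)$ is then determined by the prescription $U(E)\sqrt{A(E)}\,\chi := (\sqrt{V}\chi)_E$, and a direct calculation confirms $U(E)^*U(E) = I$ on $\ran\sqrt{A(E)}$. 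The stationary formula for the T-matrix now reads $T_E = -2\pi i\,U(E)\bigbraces{V + V(E - H' - i0)^{-1}V}U(E)^*$, see \cite[\sectionsymbol 5.5]{MR1180965} or \cite[\sectionsymbol 7]{BiEn67e}. A symmetric redistribution of the two $\sqrt{V}$ factors converts the bracket into $\sqrt{V}\,\Phi_+(E)\,\sqrt{V}$, after which absorbing one $\sqrt{V}$ into each of $U(E)$ and $U(E)^*$ by the defining relation produces exactly $-2\pi i\,U(E)\sqrt{A(E)}\,\Phi_+(E)\,\sqrt{A(E)}\,U(E)^*$, which is the claim.

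The main obstacle is not any single computation but the consistent bookkeeping of null sets and of the topologies in which the various limits exist. Since $A(E)$ need not be invertible, every identification has to be restricted to $\ran\sqrt{A(E)}$, and the partial isometries implicit in the Birman--{\`E}ntina diagonalisation must be shown to respect this restriction. Unifying the exceptional sets coming from the limiting absorption principle, from the invertibility of the Fredholm operator $I - \sqrt{V}(E + i0 - H')^{-1}\sqrt{V}$, and from the Lebesgue-point conditions for the direct-integral decomposition into a single null set $\Nn_0$ on whose complement the entire chain of identifications is valid is precisely the delicate technical work that the detailed proof in \cite{Heiner} carries out.
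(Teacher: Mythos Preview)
The paper's own proof consists entirely of citations to the literature (Birman--\`Entina, Yafaev, Baumg\"artel--Wollenberg, and the thesis \cite{Heiner}), so there is no substantive argument to compare against; your sketch is a faithful outline of the standard stationary-theory construction those references contain and thus provides \emph{more} detail than the paper does. One small slip: it is the \emph{imaginary} part of the boundary value of the sandwiched resolvent that yields $-\pi A(E)$ (respectively $-\pi B(E)$ for $H'$), not the real part---this follows from Stone's formula and is exactly what the paper records in the proof of Corollary~\ref{corollary:t-matrix-birman-ops}.
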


\begin{corollary} \label{corollary:t-matrix-birman-ops}
  The identity
  \begin{equation} \label{eq:t-matrix-birman-ops}
    \tilde{T}(E)^*\,\tilde{T}(E)
    =
    (2\pi)^2\sqrt{A(E)}B(E)\sqrt{A(E)}
  \end{equation}
  holds for a.e.\ $E\in\R$. In particular,
  \begin{equation} \label{eq:t-matrix-schatten-norm}
    \norm{T_E}_{2n}^{2n}
    =
    (2\pi)^{2n}\tr\bigbraces{\bigparens{A(E)B(E)}^n}
    =
    (2\pi)^{2n}\eta_{2n}(E)
  \end{equation}
  for every $n\in\N$,
  where $\norm{T_E}_{2n} := \sqrt[2n]{\tr\abs{T_E}^{2n}}$
  is the $2n$-Schatten norm of $T_E$.
\end{corollary}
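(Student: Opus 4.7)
The plan is to derive both parts of the corollary from a single operator identity, namely
\[
\Phi_+(E)^*\, A(E)\,\Phi_+(E) = B(E),
\]
which expresses the spectral density of $H'$ in terms of that of $H$ through the boundary resolvent $\Phi_+(E)$. To prove it, introduce the sandwiched resolvents $M(z) := \sqrt{V}(z-H)^{-1}\sqrt{V}$ and $M'(z) := \sqrt{V}(z-H')^{-1}\sqrt{V}$ for $z\in\C\setminus\R$. The second-resolvent identity $(z-H')^{-1} - (z-H)^{-1} = (z-H')^{-1}V(z-H)^{-1}$ yields, after sandwiching with $\sqrt{V}$, the relation $M'(z) - M(z) = M'(z)M(z)$, which rearranges to $(I + M'(z))(I - M(z)) = I$. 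Taking boundary values as $\varepsilon\downto 0$ and setting $\Psi_+(E) := I - M(E + i0)$, one obtains $\Phi_+(E)\Psi_+(E) = \Psi_+(E)\Phi_+(E) = I$, so that $\Psi_+(E)$ is the two-sided bounded inverse of $\Phi_+(E)$.

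Next, Stone's formula combined with the symmetry $M^{(\prime)}(E + i\varepsilon)^* = M^{(\prime)}(E - i\varepsilon)$ identifies the density operators with the jumps across the real axis,
\[
A(E) = \frac{\Psi_+(E) - \Psi_+(E)^*}{2\pi i}, \qquad B(E) = \frac{\Phi_+(E)^* - \Phi_+(E)}{2\pi i}.
\]
Substituting the formula for $A(E)$ and using both $\Psi_+\Phi_+ = I$ and $\Phi_+^*\Psi_+^* = (\Psi_+\Phi_+)^* = I$ immediately yields
\[
\Phi_+^*\,A\,\Phi_+ = \frac{1}{2\pi i}\bigl(\Phi_+^*\Psi_+\Phi_+ - \Phi_+^*\Psi_+^*\Phi_+\bigr) = \frac{\Phi_+^* - \Phi_+}{2\pi i} = B,
\]
and the first conclusion~\eqref{eq:t-matrix-birman-ops} now follows at once from the definition of $\tilde T(E)$ since
\[
\tilde T(E)^*\tilde T(E) = (2\pi)^2\sqrt{A(E)}\,\Phi_+(E)^*\,A(E)\,\Phi_+(E)\sqrt{A(E)} = (2\pi)^2\sqrt{A(E)}\,B(E)\,\sqrt{A(E)}.
\]

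For the Schatten-norm identity~\eqref{eq:t-matrix-schatten-norm}, the approach is to use Lemma~\ref{lemma:t-matrix-birman-Phi} to express $T_E^*T_E$ in terms of $\tilde T(E)^*\tilde T(E)$ sandwiched by $U(E)$ and $U(E)^*$; the inner factor $U(E)^*U(E)$ collapses to the identity because the range of $\tilde T(E)^*\tilde T(E)$ lies in $\ran\sqrt{A(E)}$. The same reduction propagates through every internal factor of $U(E)^*U(E)$ appearing in the expansion of $(T_E^*T_E)^n$, and a final application of cyclicity of the trace on $\H_E$ (together with one more use of the range property) reduces everything to $(2\pi)^{2n}\tr\bigbraces{(\sqrt{A(E)}\,B(E)\,\sqrt{A(E)})^n} = (2\pi)^{2n}\tr\bigbraces{(A(E)B(E))^n} = (2\pi)^{2n}\eta_{2n}(E)$. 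The sole technically delicate ingredient in the entire argument is the existence of the boundary limits defining $\Phi_+(E)$, $A(E)$ and $B(E)$ in the appropriate operator topologies, which is guaranteed for a.e.\ $E\in\R$ by Proposition~\ref{proposition:Birman} and Lemma~\ref{lemma:t-matrix-birman-Phi}; everything else is elementary algebra.
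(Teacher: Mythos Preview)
Your proof is correct and follows essentially the same route as the paper: both establish the key intertwining identity $\Phi_+(E)^*A(E)\Phi_+(E)=B(E)$ via the second resolvent identity and then read off \eqref{eq:t-matrix-birman-ops} and \eqref{eq:t-matrix-schatten-norm}. Your presentation is somewhat cleaner---you pass to boundary values first and exploit the algebraic inverse relation $\Phi_+\Psi_+=I$, whereas the paper keeps $\varepsilon>0$ throughout and collapses the resolvent products explicitly before taking the limit---but the substance is identical; the only point you should tighten is that the existence of $\Psi_+(E)=I-M(E+i0)$ in operator norm is not literally among the results you cite (Proposition~\ref{proposition:Birman} and Lemma~\ref{lemma:t-matrix-birman-Phi} give $A$, $B$, and $\Phi_+$), though it follows from the same Birman--\`Entina theory applied to $H$ in place of $H'$.
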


\begin{proof}
    The operators $A(E)$ and $B(E)$ can be expressed as
    the operator limits
    \begin{align}
      - \pi A(E)
      & =
      \lim_{\varepsilon\downto 0}
        \Im\Bigparens{
          \sqrt{V}(E + i\varepsilon - H)^{-1}\sqrt{V}
        } \\
      - \pi B(E)
      & =
      \lim_{\varepsilon\downto 0}
        \Im\Bigparens{
          \sqrt{V}(E + i\varepsilon - H')^{-1}\sqrt{V}
        }
    \end{align}
    which exist for a.e.\ $E\in\R$, see \cite[Lemma~4.5]{BiEn67e}.
    From this and the second resolvent identity $(z - H')^{-1} - (z - H)^{-1}
    = (z - H')^{-1}V(z - H)^{-1}$ for $z\in\C\setminus\R$, the statement
    \begin{align} \label{eq:BviaA}
      & \Phi_+(E)^*A(E)\Phi_+(E) \notag \\ 
      & = -\frac{1}{\pi}
      	\lim_{\varepsilon\downto 0}
      	\Big\{\bigparens{I + \sqrt{V}(E-i\varepsilon - H')^{-1}\sqrt{V}} 
      	\Bigparens{\Im\bigparens{\sqrt{V}(E + i\varepsilon - H)^{-1}\sqrt{V} }} \notag\\
      & \hspace*{2cm} \times      
      	\bigparens{I + \sqrt{V}(E+i\varepsilon - H')^{-1}\sqrt{V}} \Big\} \notag \\
      & = \frac{1}{2\pi i} \lim_{\varepsilon\downto 0}
      	\Big\{\sqrt{V}\bigparens{I + (E-i\varepsilon-H')^{-1}V} 
            \big((E-i\varepsilon-H)^{-1} \notag\\
      & \hspace*{4cm}  - (E+i\varepsilon-H)^{-1} \big)  \bigparens{I + V(E+i\varepsilon-H')^{-1}}\sqrt{V} \Big\} \notag\\
      & =  \frac{1}{2\pi i} \lim_{\varepsilon\downto 0}
      \Big\{\sqrt{V} \bigparens{(E-i\varepsilon - H')^{-1} - (E+i\varepsilon - H')^{-1}} \sqrt{V} \Big\} \notag \\ 
      &  = B(E) 
    \end{align}
    follows and yields~\eqref{eq:t-matrix-birman-ops}. The unitary
    equivalence on $\ran\sqrt{A(E)}$ in
    Lemma \ref{lemma:t-matrix-birman-Phi} then implies~\eqref{eq:t-matrix-schatten-norm}.
\end{proof}

Corollary~\ref{corollary:t-matrix-birman-ops} yields the following
theorem.
\begin{theorem}\label{thm:scattering}
  Let $n\in\N$.
  For a.e.\ $E\in\R$
  \begin{equation} \label{eq:eta-as-t-matrix}
    \eta_{2n}(E)
    =
    \tr(\abs{T_E/(2\pi)}^{2n}),
  \end{equation}
  where $T_E:\H_E\to\H_E$ is the transition matrix for the
  energy~$E$.
\end{theorem}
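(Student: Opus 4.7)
The statement is essentially a repackaging of Corollary~\ref{corollary:t-matrix-birman-ops}, which was the real work of the subsection. My plan is to simply read off \eqref{eq:eta-as-t-matrix} from equation~\eqref{eq:t-matrix-schatten-norm}.

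More concretely, I would start from the definition $\norm{T_E}_{2n}^{2n} = \tr\abs{T_E}^{2n}$ of the $2n$-Schatten norm and pull the scalar factor $(2\pi)^{-2n}$ through the trace, which gives
\begin{equation}
  \tr\bigparens{\abs{T_E/(2\pi)}^{2n}}
  = (2\pi)^{-2n}\tr\abs{T_E}^{2n}
  = (2\pi)^{-2n}\norm{T_E}_{2n}^{2n}.
\end{equation}
Substituting the identity $\norm{T_E}_{2n}^{2n}=(2\pi)^{2n}\eta_{2n}(E)$ from the second line of \eqref{eq:t-matrix-schatten-norm} in Corollary~\ref{corollary:t-matrix-birman-ops} yields exactly \eqref{eq:eta-as-t-matrix} for a.e.\ $E\in\R$, namely on the complement of the exceptional null set coming from Lemma~\ref{lemma:t-matrix-birman-Phi} and Proposition~\ref{proposition:Birman}.

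There is no real obstacle here, since all the serious scattering-theoretic input (existence of the limits $A(E)$, $B(E)$, the Birman--\`Entina representation of $\Phi_\pm(E)$, and the unitary equivalence of $\tilde T(E)$ and $T_E$ on $\ran\sqrt{A(E)}$) has already been absorbed into Corollary~\ref{corollary:t-matrix-birman-ops}. The only small thing to double-check is that the unitary equivalence $T_E=-2\pi i\,U(E)\tilde T(E)U(E)^*$ with $U(E)^*U(E)$ acting as the identity on $\ran\sqrt{A(E)}$ preserves \emph{all} Schatten norms $\norm{\dotid}_{2n}$, not only the Hilbert--Schmidt norm used in the preceding corollary; this is immediate from cyclicity of the trace applied to $(|\tilde T(E)|^2)^n$ and the fact that $\tilde T(E)$ has range in $\ran\sqrt{A(E)}$, on which $U(E)$ is isometric.
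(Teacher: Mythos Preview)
Your proposal is correct and matches the paper's approach exactly: the paper does not even give a separate proof of Theorem~\ref{thm:scattering}, merely stating that ``Corollary~\ref{corollary:t-matrix-birman-ops} yields the following theorem.'' Your additional remark about preservation of all $2n$-Schatten norms is already built into \eqref{eq:t-matrix-schatten-norm}, which is stated for every $n\in\N$, so no further justification is needed.
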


%
%
\appendix
\section{Positivity of the exponent}
\label{ssec:positivity}

Here we consider the special case $V_0 = 0$ and show that the
decay exponent $\gamma(E)$ in \eqref{eq:gamma} is strictly
positive for a.e.\ $E > 0$. Throughout this appendix, we assume that
$V\ne 0$ satisfies~\eqref{eq:assumption:V_V0}.

\begin{theorem} \label{theorem:A-fourier-calculation}
  Let $V_0 = 0$. Let $E > 0$. Then the operator $A(E)$ from \eqref{eq:BirmanA} has
  the integral kernel
  \begin{equation} \label{eq:Birman-integral-kernel}
    A(E; x,y)
    =
    \frac{E^{d/2-1}}{2(2\pi)^d}
    \sqrt{V(x)}\sqrt{V(y)}\,
    \int_{\Sphere^{d-1}}\dS(\xi)\, e^{i\sqrt{E}\xi\cdot (x-y)}
  \end{equation}
  for a.e.\ $x,y\in\R^d$.
  Here, $\dS$ stands for integration with respect to the surface
  measure on the unit sphere $\Sphere^{d-1}\subseteq\R^d$.
  \begin{proof}
    Let $\varepsilon > 0$ and $f\in L^2(\R^d)$. Then, using the
    Fourier transform and spherical coordinates, we compute for a.e.~$x\in\R^d$
    \begin{align} \label{eq:fourier-computation}
      \Bigl( &\sqrt{V} \1_{(E-\varepsilon,E+\varepsilon)}(-\Delta)\sqrt{V}f\Bigr)(x)
      \notag\\
      & =
      \frac{\sqrt{V(x)}}{(2\pi)^{d}}\int_{\R^d}\d k\,\int_{\R^d}\dy\,
      \1_{(E-\varepsilon,E+\varepsilon)}(\abs{k}^2) e^{ik\cdot
      (x-y)} \sqrt{V(y)}f(y)
      \notag\\
      & =
      \frac{\sqrt{V(x)}}{2(2\pi)^{d}}\int_{\R^d}\dy\, \sqrt{V(y)}f(y)
      \int_{E-\varepsilon}^{E+\varepsilon}\dr\, r^{d/2-1} \int_{\Sphere^{d-1}}\dS(\xi)\,
      e^{i\sqrt{r}\xi\cdot (x-y)},
    \end{align}
    and therefore
    \begin{multline}
      \lim_{\varepsilon\downto 0}
      \frac{1}{2\varepsilon}
      \Bigparens{\sqrt{V} \1_{(E-\varepsilon,E+\varepsilon)}(-\Delta)\sqrt{V}f}(x)
      \\ =
      \frac{\sqrt{V(x)}}{2(2\pi)^{d}}\int_{\R^d}\dy\, \sqrt{V(y)}f(y)
      E^{d/2-1} \int_{\Sphere^{d-1}}\dS(\xi)\, e^{i\sqrt{E}\xi\cdot (x-y)},
    \end{multline}
    because the integrand in~\eqref{eq:fourier-computation} is
    continuous in $r$. This implies~\eqref{eq:Birman-integral-kernel}.
  \end{proof}
\end{theorem}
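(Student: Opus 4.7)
The plan is to diagonalise $-\Delta$ via the Fourier transform and then read off the integral kernel of $A(E)$ as a limit of the integral kernels of the averaged spectral projections. Since $V_0=0$, the operator $H=-\Delta$ is translation invariant, so for $f\in L^2(\R^d)$ and $\varepsilon>0$ one has
\begin{equation*}
  \bigparens{\sqrt{V}\,\1_{(E-\varepsilon,E+\varepsilon)}(-\Delta)\sqrt{V}f}(x)
  =
  \frac{\sqrt{V(x)}}{(2\pi)^d}
  \int_{\R^d}\!\dy\int_{\R^d}\!\d k\,
  \1_{(E-\varepsilon,E+\varepsilon)}(\abs{k}^2)\,
  e^{ik\cdot(x-y)}\sqrt{V(y)}\,f(y).
\end{equation*}
This identifies the integral kernel of the smeared projection.

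Next, I would pass to spherical coordinates $k=r\xi$ with $r=\abs{k}$ and $\xi\in\Sphere^{d-1}$, so that $\d k=r^{d-1}\dr\,\dS(\xi)$, and then apply the change of variables $s=r^2$, $\d s=2r\dr$. This turns the inner $k$-integral into
\begin{equation*}
  \int_{\R^d}\d k\,\1_{(E-\varepsilon,E+\varepsilon)}(\abs{k}^2)\,e^{ik\cdot(x-y)}
  =
  \frac{1}{2}\int_{E-\varepsilon}^{E+\varepsilon}\!\d s\,s^{d/2-1}
  \int_{\Sphere^{d-1}}\dS(\xi)\,e^{i\sqrt{s}\,\xi\cdot(x-y)}.
\end{equation*}
Dividing by $2\varepsilon$ and sending $\varepsilon\downto 0$, the fundamental theorem of calculus yields the claimed kernel, provided the integrand in $s$ is continuous at $s=E$.

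That continuity is the only point needing attention: the map
$s\mapsto s^{d/2-1}\int_{\Sphere^{d-1}}\dS(\xi)\,e^{i\sqrt{s}\,\xi\cdot(x-y)}$
is continuous on $(0,\infty)$ by dominated convergence on the compact sphere, and since we restrict to $E>0$ we are away from any singularity at $s=0$. Thus the pointwise limit of the kernels equals the expression on the right-hand side of~\eqref{eq:Birman-integral-kernel} for a.e.\ $x,y\in\R^d$, and by Proposition~\ref{proposition:Birman} this pointwise limit coincides (a.e.) with the integral kernel of the trace-class operator $A(E)$.

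The main obstacle I anticipate is not conceptual but bookkeeping: one must justify swapping the order of the $k$- and $y$-integrations (via Fubini, using $\sqrt{V}f\in L^2$ with $V$ bounded and compactly supported) and verify that the resulting pointwise kernel identification is consistent with the trace-class limit guaranteed abstractly by Proposition~\ref{proposition:Birman}. Given those routine checks, the formula follows directly from the Fourier computation together with the elementary change of variables $s=r^2$.
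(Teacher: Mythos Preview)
Your proposal is correct and follows essentially the same route as the paper's proof: Fourier diagonalisation of $-\Delta$, passage to spherical coordinates with the substitution $s=\abs{k}^2$, and then the $\varepsilon\downto 0$ limit via continuity of the integrand in the energy variable. The paper is slightly more terse about the change of variables and the Fubini/limit justifications you flag, but there is no substantive difference.
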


\begin{corollary} \label{corollary:free-A-infinte-range}
  Let $d\ge 2$. Let $V_0 = 0$ and $V\ne 0$. Then for any $E > 0$ the operator $A(E)$ from
  \eqref{eq:BirmanA} has infinite rank.
  \begin{proof}
    We first show that the set of functions
    \begin{equation} \label{eq:set-of-li-functions}
      \bigset{\R^d\ni x\mapsto \sqrt{V(x)}e^{i\xi\cdot x}\st
      \xi\in\R^d}
    \end{equation}
    is linearly independent. For this, notice that
    $\set{\C\ni z\mapsto e^{isz}\st s\in\R}$
    is linearly independent, since for $z = -ix$, these functions have
    different asymptotic behaviour for $x\to\infty$. Given a finite non-empty set
    $M\subseteq\R$ and $c_s\ne 0$ for $s\in M$, the analytic function
    $\C\ni z\mapsto \sum_{s\in M} c_s e^{isz}$ is therefore not identically
    zero, and thus $\R\ni x\mapsto \sum_{s\in M} c_s e^{isx}$ is zero
    only on a discrete subset of $\R$.

    Given another finite non-empty set $M\subseteq\R^d$ and $c_\xi\ne 0$ for
    $\xi\in M$, define $F\from\R^d\to\C$ via $F(x) := \sum_{\xi\in
    M}c_\xi e^{i\xi\cdot x}$. We show that
    $F^{-1}(\set{0})\subseteq\R^d$ is a null set. Since $F$ is
    continuous, this preimage is measurable with measure
    \[
      \int_{\R^d}\dx\, \1_{F^{-1}(\set{0})}(x)
      =
      \int_{\Sphere^{d-1}}\dS(\eta)\int_0^\infty\dr\,r^{d-1}\,
      \1_{\set{0}}\bigparens{F(r\eta)}
      = 0,
    \]
    where the $r$-integral is zero since for $\eta\in \Sphere^{d-1}$ fixed
    the function $r\mapsto F(r\eta) = \sum_{\xi\in M} c_\xi
    e^{ir\xi\cdot\eta}$ is zero only on a discrete subset of $\R$, as
    shown above. To show that the set~\eqref{eq:set-of-li-functions}
    is linearly independent, it suffices to show that
    \[
      \bigset{x\in\R^d\st \sqrt{V(x)}F(x) \ne 0}
      =
      \bigset{x\in\R^d\st V(x) \ne 0}
      \cap
      \bigset{x\in\R^d\st F(x) \ne 0}
    \]
    has positive measure. This is the case, since the first set in the
    intersection has positive measure and the second set is the
    complement of the null set $F^{-1}(\set{0})$.

    Now, let $f\in\ker A(E)$. Then
    \begin{equation}
      0 = \<f, A(E)f\> =
      \frac{E^{d/2-1}}{2(2\pi)^d}
      \int_{\Sphere^{d-1}}\dS(\xi)
      \Bigabs{\int_{\R^d}\dx \sqrt{V(x)} e^{i\sqrt{E}\xi\cdot x}
      f(x)}^2,
    \end{equation}
    and therefore
    \begin{equation} \label{eq:Vexp-f-orthogonal}
      \int_{\R^d}\dx \sqrt{V(x)} e^{i\sqrt{E}\xi\cdot x} f(x) = 0
    \end{equation}
    for a.e.\ $\xi\in \Sphere^{d-1}$. Since the left-hand
    side of~\eqref{eq:Vexp-f-orthogonal} is continuous in
    $\xi$,~\eqref{eq:Vexp-f-orthogonal} holds in fact for all $\xi\in
    \Sphere^{d-1}$. Since $f\in\ker A(E)$ was arbitrary, we conclude that
    \begin{equation}
      \bigset{\R^d\ni x\mapsto \sqrt{V(x)}e^{i\sqrt{E}\xi\cdot x}\st
      \xi\in \Sphere^{d-1}}
      \subseteq (\ker A(E))^{\perp}.
    \end{equation}
    Since $\Sphere^{d-1}$ is an infinite set for $d\ge 2$, the set of
    functions on the left-hand side is infinite and linearly
    independent, and thus $\dim(\ker A(E))^\perp =
    \infty$. Since the coimage $(\ker A(E))^\perp$ of the linear map
    $A(E)$ is isomorphic to $\ran A(E)$ (the restriction
    $A(E)\rvert_{(\ker A(E))^\perp}\from (\ker A(E))^\perp\to\ran
    A(E)$ being bijective), this shows $\dim\ran A(E) = \infty$.
  \end{proof}
\end{corollary}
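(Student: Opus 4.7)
The plan is to exploit the integral-kernel representation of $A(E)$ furnished by Theorem~\ref{theorem:A-fourier-calculation}. Setting $\phi_\xi(x) := \sqrt{V(x)}\, e^{i\sqrt{E}\,\xi\cdot x}$ for $\xi\in\Sphere^{d-1}$, formula~\eqref{eq:Birman-integral-kernel} displays $A(E)$ as a positive superposition of the rank-one projections onto $\phi_\xi$, so that
\begin{equation*}
  \bigangles{f, A(E) f}
  =
  \frac{E^{d/2-1}}{2(2\pi)^d}
  \int_{\Sphere^{d-1}}\dS(\xi)\,\bigabs{\bigangles{\phi_\xi, f}}^2
\end{equation*}
for every $f\in L^2(\R^d)$. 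Since $\xi\mapsto\bigangles{\phi_\xi, f}$ is continuous, $f\in\ker A(E)$ is equivalent to $\bigangles{\phi_\xi, f}=0$ for every $\xi\in\Sphere^{d-1}$. Hence the closed linear span of $\braces{\phi_\xi:\xi\in\Sphere^{d-1}}$ is contained in $(\ker A(E))^\perp$, which is isomorphic to $\ran A(E)$ via the bijective restriction of $A(E)$.

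It therefore suffices to produce infinitely many linearly independent $\phi_\xi$. I would prove the sharper claim that $\phi_{\xi_1},\ldots,\phi_{\xi_N}$ are linearly independent whenever $\xi_1,\ldots,\xi_N\in\Sphere^{d-1}$ are pairwise distinct; letting $N\to\infty$ (which is possible because $\Sphere^{d-1}$ is infinite for $d\ge 2$) then yields the corollary. Since $V\ge 0$ and $V\ne 0$ give $\bigabs{\braces{V\ne 0}}>0$, this reduces to ruling out that a nontrivial exponential sum $F(x)=\sum_{j=1}^N c_j e^{i\sqrt{E}\,\xi_j\cdot x}$ with pairwise distinct frequencies $\sqrt{E}\xi_j\in\R^d$ can vanish on a set of positive Lebesgue measure.

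The main obstacle is this final assertion. My plan is a slicing argument in spherical coordinates $x=r\eta$: for each fixed $\eta\in\Sphere^{d-1}$, the one-variable function $r\mapsto F(r\eta)=\sum_j c_j e^{ir\sqrt{E}\,\xi_j\cdot\eta}$ is a linear combination of exponentials whose scalar frequencies $\sqrt{E}\,\xi_j\cdot\eta$ are pairwise distinct outside the finite union of great $(d{-}2)$-subspheres $\braces{\eta:(\xi_j-\xi_k)\cdot\eta=0,\ j\ne k}$, itself a surface-measure null set in $\Sphere^{d-1}$. On the complement, distinctness combined with the classical linear independence of the characters $r\mapsto e^{i\lambda r}$ (visible, e.g., by analytic continuation to $r=-is$ and comparing exponential growth rates as $s\to\infty$) implies that $r\mapsto F(r\eta)$ is a nonzero real-analytic function, hence has a discrete zero set in $r$. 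Integration in polar coordinates then forces the full zero set $F^{-1}(\braces{0})\subseteq\R^d$ to have Lebesgue measure zero, contradicting vanishing of $F$ on the positive-measure set $\braces{V\ne 0}$.

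Combining these steps, $(\ker A(E))^\perp$ contains infinitely many linearly independent vectors, and by the bijection $A(E)\rvert_{(\ker A(E))^\perp}\to\ran A(E)$ the same holds for $\ran A(E)$, so $A(E)$ has infinite rank.
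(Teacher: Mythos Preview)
Your proof is correct and follows essentially the same route as the paper: both use the integral-kernel formula of Theorem~\ref{theorem:A-fourier-calculation} to characterise $\ker A(E)$ via the quadratic form, show the plane waves $\phi_\xi$ with $\xi\in\Sphere^{d-1}$ lie in $(\ker A(E))^\perp$, and establish their linear independence by a polar-coordinate slicing argument reducing to the discreteness of zeros of one-dimensional exponential sums. Your treatment is in one respect slightly more careful, in that you explicitly exclude the surface-null set of directions $\eta$ where scalar frequencies $\xi_j\cdot\eta$ may collide before invoking the one-dimensional result; the paper glosses over this point.
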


\begin{remark}
  We expect Corollary~\ref{corollary:free-A-infinte-range} to
  generalise to the situation of non-zero background potentials $V_0$
  with suitable decay by using generalised eigenfunctions due to
  Ikebe-Povzner (see \cite[\sectionsymbol C5]{MR670130} and references
  therein) in place of $e^{i\sqrt{E}\xi\cdot x}$.
\end{remark}

The infinite rank of $A(E)$ implies positivity of
$\gamma(E)$.
\begin{theorem} \label{thm:T-matrix-infinite-rank} 
  Let $d\ge 2$ and $V_0 = 0$. Then the transition matrix 
  $T_E$ corresponding to the pair $H = -\Delta$ and $H' = -\Delta + V$ has infinite rank for a.e.\ $E > 0$. In particular, $T_E$ is non-zero
  and therefore
  \begin{equation}
    \gamma(E) =
    \pi^{-2}\bignorm{\arcsin\abs{T_E/2}}_{\HS}^2
    > 0
  \end{equation}
  for a.e.\ $E > 0$.
  \begin{proof}
    By Lemma~\ref{lemma:t-matrix-birman-Phi}, it suffices to show
    that $\tilde{T}(E) = -2\pi i\sqrt{A(E)} \Phi_+(E) $ $\sqrt{A(E)}$ has
    infinite rank, where $\Phi_\pm(E) = \lim_{\varepsilon\downto 0}
    \bigparens{I + \sqrt{V}(E\pm i\varepsilon-H')^{-1}\sqrt{V}}$. We
    show that its imaginary part $\Im \tilde{T}(E) =
    \frac{1}{2i}\bigparens{\tilde{T}(E) - \tilde{T}(E)^*}$ has infinite
    rank. For brevity, set $R := \lim_{\varepsilon\downto
    0}\sqrt{V}(E+i\varepsilon-H')^{-1}\sqrt{V}$. Recall that by the limiting
    absorption principle, this limit exists in operator norm for
    a.e.\ $E > 0$; in particular, $R$ is compact. We fix such an $E >
    0$ from now on.
    Then
    \begin{align}
      \Im\tilde{T}(E) & = \frac{1}{2i}\Bigl({}
      - 2\pi i\sqrt{A(E)}\Phi_+(E)\sqrt{A(E)} \notag\\
      & \hphantom{= \frac{1}{2i}\Bigl({}}\;
      - 2\pi i\sqrt{A(E)}\Phi_-(E)\sqrt{A(E)}\Bigr) \notag\\
      & =
      -2\pi\sqrt{A(E)}(I + \Re R)\sqrt{A(E)}.
    \end{align}
    Since $\Re R$ is compact, we can write it as $\Re R = R_1 + R_2$
    where $\norm{R_1} < 1/2$ and $R_2$ has finite rank. Thus
    \begin{align}
      -\frac{1}{2\pi}\Im\tilde{T}(E)
      =
      \sqrt{A(E)}( I + R_1 )\sqrt{A(E)} + \tilde{A}
    \end{align}
    where $\tilde{A}$ is a finite rank operator. Now, since $I + R_1
    \ge I - \frac{1}{2}I = \frac{1}{2}I$, we get
    \begin{equation}
      \sqrt{A(E)}(I + R_1)\sqrt{A(E)}
      \ge
      \tfrac{1}{2}A(E).
    \end{equation}
    By Corollary~\ref{corollary:free-A-infinte-range}, this operator
    has infinite rank.
  \end{proof}
\end{theorem}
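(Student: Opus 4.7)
The plan is to reduce the claim about $T_E$ to the infinite rank of its Birman representative. By Lemma~\ref{lemma:t-matrix-birman-Phi}, $T_E = -2\pi i\, U(E)\tilde T(E) U(E)^\ast$ with $\tilde T(E) = -2\pi i\sqrt{A(E)}\Phi_+(E)\sqrt{A(E)}$ and $U(E)$ acting isometrically on $\overline{\ran\sqrt{A(E)}}$. Since $\tilde T(E)$ maps into this subspace and vanishes on its orthogonal complement, $T_E$ and $\tilde T(E)$ have the same rank; it therefore suffices to prove that $\tilde T(E)$ has infinite rank.

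To detect infinite rank in the non-self-adjoint $\tilde T(E)$, I would pass to its imaginary part. Set $R := \lim_{\varepsilon\downto 0}\sqrt{V}(E+i\varepsilon - H')^{-1}\sqrt{V}$; this operator-norm limit exists for a.e.\ $E>0$ by the limiting absorption principle of Birman--\`Entina (the same one underlying Lemma~\ref{lemma:t-matrix-birman-Phi}), and $R$ is compact as an operator-norm limit of Hilbert--Schmidt resolvent sandwiches supplied by Kato-class theory. Using $\Phi_+(E) = I + R$ and the self-adjointness of $A(E)$, a direct computation yields
\[
-\tfrac{1}{2\pi}\Im\tilde T(E) = \sqrt{A(E)}\bigparens{I + \Re R}\sqrt{A(E)}.
\]
The compact self-adjoint operator $\Re R$ admits a spectral decomposition $\Re R = R_1 + R_2$ with $\norm{R_1}<1/2$ and $R_2$ of finite rank (take $R_2$ to be the spectral part of $\Re R$ on $\set{\abs{\lambda}\ge 1/2}$). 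Hence $I + R_1 \ge \tfrac12 I$, and
\[
-\tfrac{1}{2\pi}\Im\tilde T(E) \;\ge\; \tfrac12\, A(E) + \sqrt{A(E)}\,R_2\,\sqrt{A(E)},
\]
with the second summand of finite rank.

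The decisive input is Corollary~\ref{corollary:free-A-infinte-range}, which under $d\ge 2$ and $V_0 = 0$ guarantees that $A(E)$ has infinite rank for every $E>0$. A short abstract argument then concludes: if a bounded self-adjoint $X$ satisfies $X + Y \ge Z$ with $Y$ self-adjoint of finite rank and $Z\ge 0$ of infinite rank, then $\ker(X+Y) \subseteq \ker Z$ by positivity, so $X+Y$ and hence $X$ must have infinite rank. Applied to $X = -\tfrac{1}{2\pi}\Im\tilde T(E)$, $Y = -\sqrt{A(E)}R_2\sqrt{A(E)}$, $Z = \tfrac12 A(E)$, this yields infinite rank of $\Im\tilde T(E)$, hence of $\tilde T(E)$, hence of $T_E$. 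Finally, since $S_E$ is unitary we have $\norm{T_E}\le 2$, so $\abs{T_E/2}$ has spectrum in $[0,1]$ and $\arcsin\abs{T_E/2}$ is well-defined and vanishes precisely on $\ker T_E$; infinite rank of $T_E$ therefore forces $\bignorm{\arcsin\abs{T_E/2}}_{\HS}^2 > 0$, i.e.\ $\gamma(E) > 0$. The main technical obstacle I anticipate is the clean verification of operator-norm limiting absorption and compactness of $R$ for a.e.\ $E>0$ under the Kato-class hypothesis on $V_0$ and $V$; this is standard but requires a careful reference.
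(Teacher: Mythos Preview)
Your argument is correct and follows essentially the same route as the paper's proof: reduce to $\tilde T(E)$ via Lemma~\ref{lemma:t-matrix-birman-Phi}, compute $-\tfrac{1}{2\pi}\Im\tilde T(E)=\sqrt{A(E)}(I+\Re R)\sqrt{A(E)}$, split the compact $\Re R$ into a small-norm part and a finite-rank part, and conclude from $\sqrt{A(E)}(I+R_1)\sqrt{A(E)}\ge\tfrac12 A(E)$ together with Corollary~\ref{corollary:free-A-infinte-range}. Your abstract finite-rank perturbation lemma and your closing remark on why $\gamma(E)>0$ follows are minor expository additions, but the substance matches the paper step for step.
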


\section*{Acknowledgements}
We are grateful to Alexander Pushnitski for communicating the
argument in Theorem~\ref{thm:T-matrix-infinite-rank} to us and for
further interesting discussions. We also thank Lars Diening and Parth
Soneji for helpful discussions.



\enlargethispage{.4cm}

\end{document}